\definecolor{Red}{HTML}{E53E30}  
\definecolor{Green}{HTML}{00AD69} 
\definecolor{Blue}{HTML}{2171b5}
\definecolor{Purple}{HTML}{652F6C}
\newtheorem{proposition}{Proposition}
\def\@seccntformat#1{\@ifundefined{#1@cntformat}%
   {\csname the#1\endcsname\quad}
   {\csname #1@cntformat\endcsname}
}
\DeclareMathAlphabet\mathbfcal{OMS}{cmsy}{b}{n}
\begin{document}
\title{Entanglement barrier and its symmetry resolution: theory and experiment}

\author{Aniket Rath}
\thanks{These authors contributed equally.}
\affiliation{Univ.  Grenoble Alpes, CNRS, LPMMC, 38000 Grenoble, France}

\author{Vittorio Vitale}
\thanks{These authors contributed equally.}
\affiliation{SISSA, via Bonomea 265, 34136 Trieste, Italy}
\affiliation{International Centre for Theoretical Physics (ICTP), Strada Costiera 11, 34151 Trieste, Italy}

\author{Sara Murciano}
\thanks{These authors contributed equally.}
\affiliation{SISSA, via Bonomea 265, 34136 Trieste, Italy}
\affiliation{INFN, via Bonomea 265, 34136 Trieste, Italy}

\author{Matteo Votto}
\affiliation{Univ.  Grenoble Alpes, CNRS, LPMMC, 38000 Grenoble, France}

\author{J\'er\^ome Dubail}
\affiliation{Universit\'{e} de Lorraine, CNRS, LPCT, F-54000 Nancy, France}

\author{Richard Kueng}
\affiliation{Institute for Integrated Circuits, Johannes Kepler University Linz, Altenbergerstrasse 69, 4040 Linz, Austria}

\author{Cyril Branciard}
\affiliation{Universit\'e Grenoble Alpes, CNRS, Grenoble INP, Institut N\'eel, 38000 Grenoble, France}

\author{Pasquale Calabrese}
\affiliation{SISSA, via Bonomea 265, 34136 Trieste, Italy}
\affiliation{International Centre for Theoretical Physics (ICTP), Strada Costiera 11, 34151 Trieste, Italy}
\affiliation{INFN, via Bonomea 265, 34136 Trieste, Italy}

\author{Beno\^it Vermersch}
\affiliation{Univ.  Grenoble Alpes, CNRS, LPMMC, 38000 Grenoble, France}
\affiliation{Institute for Theoretical Physics, University of Innsbruck, Innsbruck, Austria}
\affiliation{Institute for Quantum Optics and Quantum Information of the Austrian Academy of Sciences,  Innsbruck A-6020, Austria}

\date{\today}

\begin{abstract}
    The operator entanglement (OE) is a key quantifier of the complexity of a reduced density matrix. In out-of-equilibrium situations, e.g. after a quantum quench of a product state, it is expected to exhibit an \emph{entanglement barrier}. The OE of a reduced density matrix initially grows linearly as entanglement builds up between the local degrees of freedom, it then reaches a maximum, and ultimately decays to a small finite value as the reduced density matrix converges to a simple stationary state through standard thermalization mechanisms. 
    Here, by performing a new data analysis of the published experimental results of [\href{http://doi.org/10.1126/science.aau4963}{Brydges {\it et al.}, Science {\bf 364}, 260 (2019)}], we obtain the first experimental measurement of the OE of a subsystem reduced density matrix in a quantum many-body system. We employ the randomized measurements toolbox and we introduce and develop a new efficient method to post-process experimental data in order to extract higher-order density matrix functionals and access the OE. The OE thus obtained displays the expected \emph{barrier} as long as the experimental system is large enough.
    For smaller systems, we observe a  barrier with a double-peak structure, whose origin can be interpreted in terms of  pairs of quasi-particles being reflected at the boundary of the qubit chain.
    As $U(1)$ symmetry plays a key role in our analysis, we introduce the notion of symmetry resolved operator entanglement (SROE), in addition to the total OE. To gain further insights into the SROE, we provide a thorough theoretical analysis of this new quantity in chains of non-interacting fermions, which, in spite of their simplicity, capture most of the main features of OE and SROE. In particular, we uncover three main physical effects: the presence of a barrier in any charge sector, a time delay for the onset of the growth of SROE, and an effective equipartition between charge sectors.
\end{abstract} 

\maketitle

\section{Introduction}

The investigation of the non-equilibrium dynamics of isolated many-body quantum systems is a major challenge of modern physics.
Owing to the highly tunable modern experimental settings for analog simulations \cite{blatt2012quantum,monroe2021programmable,browaeys2020many,Yoshi2022}, it has become possible to engineer Hamiltonian dynamics of isolated quantum systems, ranging from integrable to chaotic systems, and measure non-trivial physical properties, such as the entanglement growth following a quantum quench~\cite{Kaufman_2016,vovrosh2021confinement,Brydges2019probing,Elben2020b,satzinger_realizing_2021} and out-of-time ordered correlators~\cite{li2017measuring,garttner_measuring_2017,landsman_verified_2019,Yoshi2020,googleotoc}.

Unfortunately, the absence of numerical algorithms to effectively simulate these systems on a classical computer for large times is the main obstacle toward the complete understanding of quantum relaxation and thermalization. 
In this respect, the most effective and versatile  algorithms are surely those based on matrix product state (MPS) and tensor network methods \cite{o-19,v-04,vmc-08,s-11,pksmsh-19}.
However, the linear growth of the entanglement entropy~\cite{cc-05,albac-14} requires an exponential complexity (in bond dimension) of the MPS approximating the physical state which severely limits the largest simulable times~\cite{Schuch_2008}.

Typically, these systems relax to statistical ensembles with little or no entanglement. How is this compatible with the growth of complexity of the MPS approximation?
The solution of such a conundrum is simple: 
relaxation happens locally~\cite{bs-08,cde-08,cef-11}, hence it is enough to focus solely on the reduced density matrix $\rho$ of a subsystem $S$, rather than on the entire pure state containing physically irrelevant correlations.
Indeed, rather than an MPS, the subsystem density matrix is approximated by a matrix product operator (MPO) with small bond dimension $D$~\cite{Prosen2007,Pizorn2009,Zhou2017,dubail,noh2020efficient,rakovszky2022dissipation,Alba2022rise}. 
What is then the quantity that correctly assesses the validity of this approximation? It is the \textit{operator entanglement} (OE) of the reduced density matrix, which is the main subject of this manuscript. 

To introduce this quantity, it is useful to note that every bipartite density matrix $\rho_{AB}$ can be decomposed as follows:
\begin{equation}\label{eq:schmidtO0}
\frac{\rho_{AB}}{\sqrt{\mathrm{Tr}(\rho_{AB}^2)}} =\sum_i \lambda_i \, O_{A,i} \otimes O_{B,i}\,,
\end{equation}
where all expansion coefficients $\lambda_i$ are real and positive, and the associated operators $O_{A,i},O_{B,i}$ are orthonormal with respect to the Hilbert-Schmidt inner product (\mbox{$\mathrm{Tr}( O_{A,i}^\dagger O_{A,j}) = \mathrm{Tr} ( O_{B,i}^\dagger O_{B,j})= \delta_{i,j}$}, where $\delta_{i,j}$ is the Kronecker delta), see e.g.\ \cite{GUHNE20091}.
Such a decomposition is called an \emph{operator Schmidt decomposition} of $\rho_{AB}$, and has the property that the set of \emph{Schmidt coefficients} $\lambda_i$ is unique (although the whole decomposition is not). The number of such non-zero coefficients is called the \emph{operator Schmidt rank} of $\rho_{AB}$. Our choice of normalization on the lhs of Eq.~\eqref{eq:schmidtO0} ensures that the Schmidt coefficients thus introduced obey $\sum_i \lambda_i^2=1$, i.e.\ the set $\left\{ \lambda_i^2\right\}$ forms a probability distribution of (squared) Schmidt values.
In an MPO algorithm, the density matrix $\rho_{AB}$ is  `compressed' by truncating the full sum to only the $D$ largest contributions, for some reasonably low value of $D$. Typically, the approximation is accurate provided that the distribution $\left\{\lambda_i^2\right\}$ of squared Schmidt values in Eq.~\eqref{eq:schmidtO0} has small Shannon entropy, i.e.\ 
$S(\rho_{AB}) = S\left(\left\{\lambda_i^2\right\}\right)=- \sum_i \lambda_i^2 \log (\lambda_i^2)$ is small enough.
This quantity is called  the \emph{operator entanglement}~\cite{zanardi2000entangling,zanardi2001entanglement,Prosen2007,Pizorn2009} of the bipartite density matrix $\rho_{AB}$.
In this paper we will focus on the operator entanglement of a reduced density matrix. Its main physical feature is the presence of an \emph{entanglement barrier}~\cite{dubail,Wang2019barrier,Reid2021Entanglement}: after a quantum quench from a low-entangled state, the OE of a subsystem density matrix initially grows linearly and then decays at longer times, thus displaying a barrier-shaped curve. The initial linear growth is a consequence of the generic linear growth of the (state) entanglement entropy after a quench~\cite{cc-05,albac-14}, while the decay at later times reflects the convergence of the reduced density matrix towards a simple stationary state \cite{dubail}, through the  mechanism of thermalization~\cite{rigol2008thermalization,mori2018thermalization,mallayya2019prethermalization,eth1,eth2,dkpr-16,cde-08,bs-08} (or relaxation to a Generalized Gibbs ensemble \cite{Rigol2007,Cramer2010,vr-16,cem-16}).

The emergence of the entanglement barrier for the OE of a reduced density matrix in ergodic dynamics can be linked straightforwardly to the distribution of squared Schmidt values $\{ \lambda_i^2 \}$ from Eq.~\eqref{eq:schmidtO0}. At early times the evolution starts from a pure product state,  when only a single Schmidt value is different from zero. The building up of entanglement is reflected in the increasing number of non-zero
Schmidt coefficients $\lambda_i$.
For long times the system eventually locally approaches a Gibbs or Generalized Gibbs ensemble, which obeys the operator area law~\cite{dubail}, i.e. it is constant in the subsystem size, and again only few Schmidt values give a sizeable contribution to the OE. For example, in the infinite temperature limit, since the density matrix is proportional to the identity, $\rho\propto \mathbb{I}=(\mathbb{I}_A\otimes \mathbb{I}_B)$, only a single Schmidt value is different from zero and the OE vanishes.

\begin{figure*}
\begin{minipage}{\linewidth}
    \includegraphics[width=\linewidth]{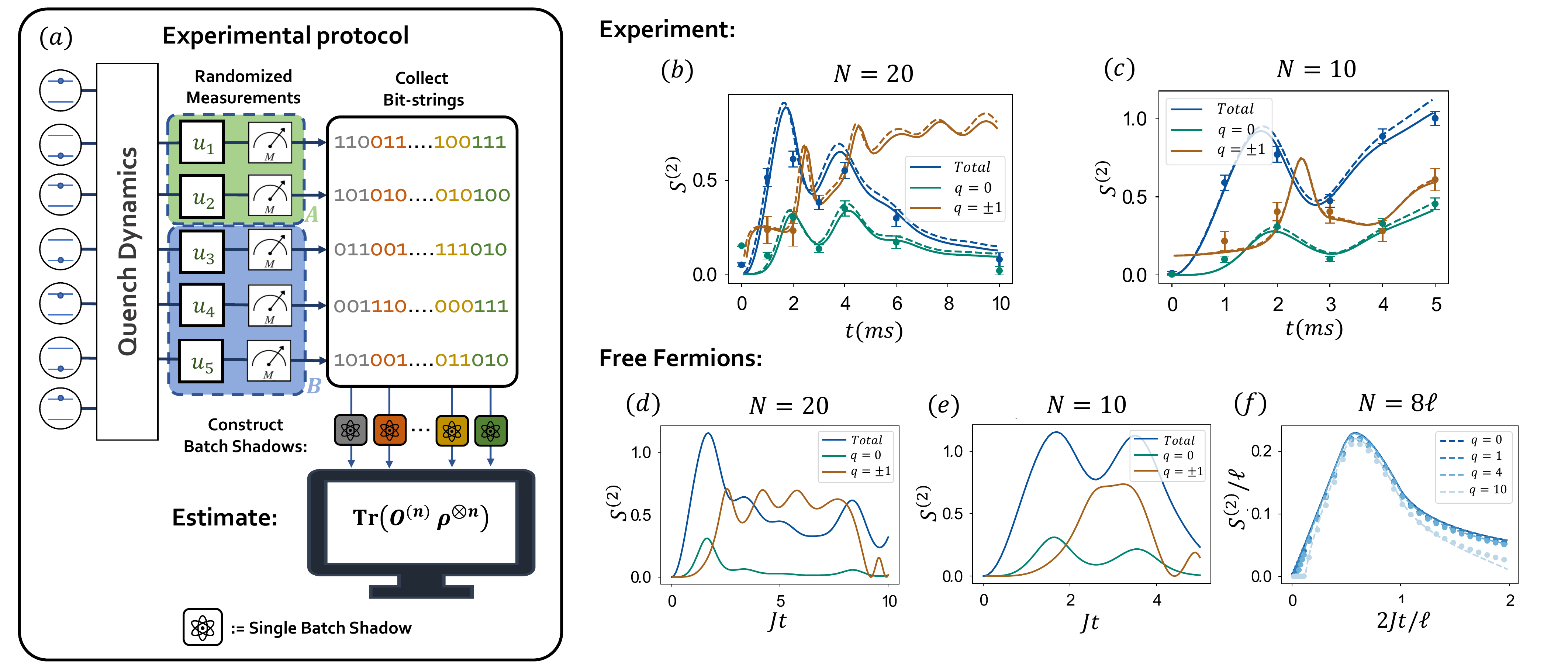}
\end{minipage}
    \caption{
    Overview of the results: 
a) Schematic of the method to post-process the experimental data. After the quench dynamics, randomized measurements are performed. The collected bit-strings are grouped to construct \emph{batch shadows} that provide estimates of OE and the SROE. b)-c) Experimental results for the R\'enyi 2-OE (Eq.~\eqref{eq:renyi-n-OE}) and its symmetry resolution (Eq.~\eqref{eq:renyi-n-OE-SR}) of a reduced density matrix formed from a partition of $4$ ions out of $20$ (panel b)) and $10$ ions (panel c)) from the data of~\cite{Brydges2019probing}  after the global quantum quench. The points correspond to the experimental data, the curves are numerical results obtained via tensor network algorithms with (solid) or without (dashed) dissipation considered. The entanglement barrier is visible for the total operator entanglement and the symmetry sector $q=0$ with $N=20$. 
d)-f) Symmetry resolution of the OE of the reduced density matrix after a global quantum quench in a free fermion chain under unitary evolution.
d) (and e)) Symmetry resolution of the OE of the reduced density matrix after a global quantum quench, for 4 sites out of a 20 (out of 10) sites chain. Comparing with the experimental results in b) and c) respectively, we can spot several qualitative features of OE even though the model is short ranged and there is no dissipation. f) Numerical data (circles) with subsystem length $\ell=256, \ell_A=\ell-\ell_B=120$ compared with quasiparticle prediction Eq.\eqref{eq:sqsp} (continuous lines). This plot shows the three main features of the SROE in the thermodynamic limit, i.e. the barrier in each sector $q$, the time delay and the equipartition for small $q$. In the free-fermion plots, $J$ is the hopping term and we consider, for the sake of simplicity, $\hbar=1$ and
lattice constant $a = 1$.}
    \label{fig:summary}
\end{figure*}

Inspired by the relevance of the \emph{entanglement barrier}, our goal is to observe it 
in an experimental quantum many-body system, using the randomized measurement data of the trapped ion experiment of Ref.~\cite{Brydges2019probing}.
The randomized measurement toolbox  \cite{RMtoolbox} has enabled measuring state-agnostically properties of the underlying quantum state, such as purity and R\'enyi entanglement entropies~\cite{Brydges2019probing,huang2020predicting,RMtoolbox,Rath2021,satzinger_realizing_2021}, negativities \cite{Zhou2020,neven2021symmetry,vitale2022symmetry}, state fidelities \cite{Elben2020a,huang2020predicting,zhu2021crossplatform} with a lower measurement cost compared to quantum state tomography~\cite{Gross2010,Kueng2017,Wright2016,Haah2017,Guta2020}.
One particular fruitful development is the formalism of classical shadows \cite{pk-19,huang2020predicting} that provides  estimations of additional non-linear functionals of the density matrix, such as the OE~\cite{liu2022detecting}.
However, measuring OE using the current randomized measurement toolbox requires a prohibitively expensive postprocessing method. To overcome this limitation and observe OE for a reasonable system size, we introduce in this work  the \emph{batch shadows estimator}. This new estimator, which should be of independent interest for the randomized measurement toolbox as a whole, provides a fast postprocessing technique for estimating multi-copy expectation values as functions of the density matrix. 
Importantly, this method offers, up to factors of order one in the experimentally relevant scenario, the same performance guarantees as classical shadows in terms of required number of measurements to overcome statistical errors. This enables us to experimentally access the OE and, in turn, witness the entanglement barrier. 

As a second important result, the experimental setup of Ref.~\cite{Brydges2019probing} provides us with an opportunity to study how the OE content is structured due to the presence of a symmetry which, here, is a $U(1)$ symmetry associated with the number of spin excitations.
In the case of pure state entanglement, the fruitful notion of symmetry-resolved entropies~\cite{laflorencie2014spin,moshe2018symmetry,xavier2018equipartition} has been introduced recently, computed theoretically~\cite{bpc-21,pbc-21,azses2021observation,turkeshi2020entanglement,murciano2020entanglement,capizzi2021symmetry,murciano2020symmetry_2D,murciano2020symmetry,murciano2021symmetry,shachar2021entanglement} and experimentally observed~\cite{lukin2019probing,vitale2022symmetry,neven2021symmetry}. Here, we generalize this to the case of OE.
Based on suitable supercharge operators (throughout this paper, we refer to the operators in the space of operators as superoperators and to the corresponding charges as supercharges), 
we introduce a notion of \emph{symmetry resolved operator entanglement (SROE)}, for which we also provide tractable estimation protocols. 
Using the SROE, we can theoretically study and experimentally observe a symmetry-resolved entanglement barrier. This is relevant for understanding thermalization in $U(1)$ symmetric non-equilibrium quantum systems, but also for numerical simulations, because symmetries can be incorporated in MPO algorithms.

    \label{fig:summary}
\section{Summary of results}\label{sec:summary}

Here we provide a bird's eye view of the results in this paper. The remaining manuscript is organised as follows.
In Sec.~\ref{sec:definition&results}, we introduce formally the operator entanglement (OE) and its symmetry resolution (SROE).
In Sec.~\ref{sec:experiment}, we demonstrate and describe the details of the randomized measurement protocol used to measure OE and SROE of the experiment performed in Ref.~\cite{Brydges2019probing}.
In Sec.~\ref{sec:reduced_quench}, we study analytically the SROE of the reduced density matrix of a subsystem of critical free fermionic chains after a quantum quench.
Finally we draw our conclusions in Sec.~\ref{sec:conclusion}. We include five
appendices with more details about the analytical and numerical computations. In App.~\ref{app:entanglementconditions}, we provide entanglement conditions for mixed states using quantities introduced for the operator entanglement, App.~\ref{app:proofs} follows up by detailing the proof of the symmetry-resolution of the operator Schmidt decomposition from Eq.~\eqref{eq:schmidtO0}. In App.~\ref{app:batchshadows}, we develop an analytical framework for the statistical error analysis of general batch shadows estimators, followed by describing the setup of the experiment that we have considered in this manuscript in App~\ref{app:detailsexperiment}. Further details on the treatment of experimental data using batch shadows is provided in App.~\ref{app:sroe_details}.\\

Let us thus start by providing a short summary of the results in this paper. The main points are also illustrated in Fig.~\ref{fig:summary}.

\begin{enumerate}
\item For the first time, we provide a general definition of the symmetry resolution of the OE in the presence of a global $U(1)$ symmetry. This is done formally in Sec.~\ref{sec:definition&results} and App.~\ref{app:proofs}.
\item We introduce a new analysis method, which allows us to measure the OE and SROE of a subsystem's density matrix in a many-body quantum system from the published experimental data of {\it Brydges et al.}~\cite{Brydges2019probing} presenting a $U(1)$ conserved charge.
Namely, we employ the \emph{randomized measurement toolbox}~\cite{RMtoolbox} and propose a new efficient method to post-process experimental data in order to extract arbitrary higher-order density matrix functionals of the form $\mathrm{Tr} \big( O^{(n)} \rho^{\otimes n} \big )$ expressed in terms of an $n$-copy operator $O^{(n)}$. 
A schematic of this procedure is shown in Fig.~\ref{fig:summary}a), and its details are elaborated in Sec.~\ref{sec:experiment}, as well as App.~\ref{app:batchshadows} -~\ref{app:sroe_details}. 
This tool is employed to extract the experimental results presented in  Fig.~\ref{fig:summary}b)-c).
Here we show the measured OE and SROE, as in Eq.~\eqref{eq:renyi-n-OE}, that are supported by tensor network simulations modelling the full experiment, i.e. the open dynamics of a long-range XY model starting from the Néel state, with conserved magnetization along the $z$-axis.
Our main observations of Sec.~\ref{sec:experiment} are summarized here:
\begin{enumerate}
    \item We witness experimentally the entanglement barrier of the OE and the SROE in the charge sector $q=0$ (Fig.~\ref{fig:summary}b), for a bipartite subsystem $A\cup B$ comprised of four out of \mbox{$N=20$} ions. These barriers present bump structures due to finite size effects. 
    For a smaller system $N=10$, we observe a second growth of OE after the first peak, as shown in Fig.~\ref{fig:summary}c). This can be interpreted as an effect of quasi-particles  reflected at the boundary of the chain, as described in Sec.\ref{sec:quasi-pic}.
    \item We observe a qualitative agreement of SROE with the numerical results for charge sectors $q = \pm 1$ at early times. The sizeable deviation between theory and experiment for $N=20$ (Fig.~\ref{fig:summary}b) is caused by the small populations in the corresponding charge sectors and by the low measurement statistics available from the experiment.
\end{enumerate} 

\item To gain insights into SROE  and its own entanglement barrier, we provide a thorough theory analysis in chains of non-interacting fermions, which despite their simplicity capture the main physical features of the OE and SROE.
This is already visible for small system sizes $N$, by comparing Fig. \ref{fig:summary}b) and \ref{fig:summary}c)
with \ref{fig:summary}d) and \ref{fig:summary}e), respectively. For these free models we moreover obtain the general formula in Eq.~\eqref{eq:sqsp}, which governs the evolution of the SROE. This formula allows us to uncover three main physical effects, which we expect to appear more generically in chains of qubits, beyond the simple non-interacting fermion ones.
These effects are:
\begin{enumerate}
    \item the appearance of a barrier for SROE in \emph{any} charge sector, which resembles the behavior of the total OE;
    \item a time delay for the onset of the SROE that grows linearly with the charge sector of the subsystem;
    \item the effective equipartition in the scaling limit of large time and subsystem size for small charges (see Eq.~\eqref{eq:equip}), where by equipartition we mean that the SROE is equally distributed among the different symmetry sectors. 
\end{enumerate}
These effects are visible in Fig.~\ref{fig:summary}f). There we plot OE and SROE of the reduced density matrix, for a bipartition $ A \cup B$, where the numerical results are obtained
for a quench in the tight-binding model from the N\'eel state while the solid lines correspond to Eq.~\eqref{eq:sqsp}. 

\end{enumerate}

\section{Operator entanglement and symmetry resolution}\label{sec:definition&results}

In this section, we formally revisit the OE definition, emphasize its connection to mixed state entanglement and also introduce symmetry resolved OE in the presence of an additive global conserved charge.

\subsection{Definition of Operator Entanglement}
Operator entanglement, or OE for short, can be defined for arbitrary operators acting on a bipartite quantum system $A \cup B$. 
But, for the sake of simplicity and clarity, we here present and discuss it solely for bipartite density operators $\rho_{AB}$. 

Recall from Eq.~\eqref{eq:schmidtO0} that every $\rho_{AB}$ admits an \emph{operator Schmidt decomposition}
\begin{equation}\label{eq:schmidtO}
\frac{\rho_{AB}}{\sqrt{\mathrm{Tr}(\rho_{AB}^2)}}=\sum_{i=1}^{R} \lambda_i \,O_{A,i} \otimes O_{B,i} \,,
\end{equation}
where $R=\mathrm{srank}(\rho_{AB})$ is the (operator) Schmidt rank, $\lambda_1,\ldots,\lambda_R >0$ are the Schmidt coefficients and 
$O_{A,i}$, as well as $O_{B,i}$ denote orthonormal operator families on subsystems $A$ and $B$, respectively. 
In general, for a Hermitian density operator $\rho_{AB}$, these operators $O_{A,j}$ and $O_{B,j}$ can be taken to be Hermitian themselves~\cite{GUHNE20091}, although this is not necessarily imposed.

In a similar way to the more widely known pure state case \cite{benenti2019principles}, the Schmidt values capture some form of entanglement that is present in the system. In fact, there is an intimate connection between the two. The Operator Schmidt decomposition in Eq.~\eqref{eq:schmidtO} arises from first vectorising the re-normalized operator \mbox{$\varrho_{AB} =\rho_{AB}/\sqrt{\Tr (\rho_{AB}^2)}$} (using the Choi-Jamio{\l}kowski isomorphism~\cite{jamiolkowski72,choi75})
\begin{equation}\label{eq:vectorisation}
    \varrho_{AB}=\sum_{ij}(\varrho_{AB})_{ij}\ket{i}\bra{j} \mapsto \ket{ \varrho_{AB}}=\sum_{ij}(\varrho_{AB})_{ij}\ket{i}\ket{j},
\end{equation}
applying the ordinary Schmidt decomposition to the pure state $|\varrho_{AB}\rangle$ and eventually reverting the vectorisation to get back to the space of operators. 
This connection justifies the quantification of OE in terms of the distribution of squared Schmidt values. 
More precisely, recalling that with our choice of normalisation in Eq.~\eqref{eq:schmidtO} the squared Schmidt coefficients $\{\lambda_i^2\}$ define a normalized probability distribution, we define the \emph{R\'enyi $\alpha$-OE}
\begin{equation}\label{eq:renyi-n-OE}
    S^{(\alpha)} (\rho_{AB})\,  := \,  \frac{1}{1-\alpha} \log \sum_{i=1}^R    \left(\lambda_i^2\right)^\alpha \quad \text{  for $\alpha \neq 1$},
\end{equation}
which in the limit $\alpha \to 1$ produces the (Shannon) OE
\begin{equation}
    \label{eq:OEdef}
    S (\rho_{AB})\,  := \,   \sum_{i=1}^R  - \lambda_i^2\log \lambda_i^2.
\end{equation}
It is clear that a state $\rho_{AB}$ has zero OE, $S(\rho_{AB})=0$, if and only if it can be written as a  product state of the form $\rho_{AB}=\rho_A\otimes \rho_B$. When this is not the case, we will call the state `operator-entangled'. It is worthwhile to emphasize that an operator-entangled (i.e., non-product) state may still be non-entangled according to the standard terminology for mixed state entanglement~\cite{GUHNE20091}.

\subsection{Operator Entanglement and entanglement criteria}

As already pointed out above, there is an intimate connection between operator entanglement and the more familiar concept of state entanglement where one is interested in showing that $\rho_{AB}$ cannot be written as a convex mixture of product states \mbox{$\rho_{AB}=\sum_k \alpha_k \rho_A^{(k)}\otimes \rho_B^{(k)}$}, \mbox{$\alpha_k\ge 0$} and $\rho_A^{(k)}, \rho_B^{(k)}$ are subsystem density matrices (i.e.\ positive semidefinite and unit trace)~\cite{GUHNE20091}. 
Let $\lambda_i$ be the coefficients of the operator Schmidt decomposition of $\rho_{AB}$, as introduced above. The  realignement/computable cross norm  criterion (CCNR) \cite{rudolph2000separability,ChenCCNR2003,rudolph2005further}
states that \emph{every} separable (i.e.\ nonentangled) state  produces operator Schmidt coefficients $\lambda_i$ that obey
\begin{equation}
\sum_i \lambda_i\le 1/\sqrt{\mathrm{Tr}\left[\rho_{AB}^{2}\right]}.
\label{eq:ccnr}
\end{equation}
Conversely, entanglement (accross the bipartation $A$ vs.\ $B$) must be present if this relation is violated.
The connection between the CCNR criterion and OE, which are the quantities that can be accessed experimentally, has been recently discussed in Ref.~\cite{liu2022detecting}.
Here we show a slightly weaker, but much more compact, entanglement condition: using the CCNR criterion, we can prove that separability implies that the R\'enyi 2-OE (Eq.~\eqref{eq:renyi-n-OE} for $\alpha=2$) and the R\'enyi 2-entropy $R^{(2)}:=-\log \Tr (\rho_{AB}^2)$ must obey
\begin{equation}
    S^{(2)}(\rho_{AB})\leq -\log(\mathrm{Tr}\left[\rho_{AB}^{2}\right])= R^{(2)}(\rho_{AB}).
\end{equation}
Conversely, if $S^{(2)}(\rho_{AB})> R^{(2)}(\rho_{AB})$, i.e. if $\rho_{AB}$ is more `operator mixed' than `state mixed' with respect to R\'enyi 2-entropies, $\rho_{AB}$ is necessarily entangled. We refer to App.~\ref{app:entanglementconditions} for details. There we also compare the detection power of this method with other entanglement conditions and also present experimental results.

\subsection{Symmetry Resolved Operator Entanglement} \label{subsec:SROE}
In the presence of a global symmetry, the OE of the operator $\rho_{AB}$ can be split into different charge sectors, similarly to that of the state entanglement \cite{xavier2018equipartition,moshe2018symmetry}.

This happens in particular for a global $U(1)$ symmetry, where the $U(1)$ charge operator acting on $A \cup B$ is a sum of the two charge operators acting on subsystems $A$ and $B$, i.e.\ $Q_{AB}=Q_A+Q_B $. From now on, by $Q_{A}+Q_B$ we mean $Q_{A}\otimes \mathbb{I}_B+\mathbb{I}_A\otimes Q_B$. If the density matrix $\rho_{AB}$ commutes with $Q_{AB}$, that is
\begin{equation} \label{eq:commut_QAB_rho}
    [ Q_A + Q_B, \rho_{AB} ] =0 ,
\end{equation}
then it becomes possible to reorganize the terms in the Schmidt decomposition \eqref{eq:schmidtO} according to their `charge' $q$:
\begin{equation}
    \label{eq:rhoABgeneral}
    \frac{\rho_{AB}}{\sqrt{ {\rm Tr} [\rho_{AB}^2]}} = \sum_q \sum_j \lambda^{(q)}_j O_{A,j}^{(q)} \otimes O_{B,j}^{(-q)},
\end{equation}
where 
\begin{equation}\label{eq:qdefinition}
    \Big[Q_A , O_{A,j}^{(q)}\Big] = q \, O_{A,j}^{(q)}, \quad  \Big[Q_B , O_{B,j}^{(-q)}\Big] = -q \, O_{B,j}^{(-q)}
\end{equation}
such that \mbox{$ [Q_{A} + Q_B, O_{A,j}^{(q)} \otimes O_{B,j}^{(-q)} ] =0$}.
Equations \eqref{eq:rhoABgeneral} and
\eqref{eq:qdefinition} are proven in  App.~\ref{app:proofs}.
In particular, we show that the `charge' $q$ that appears in these equations can be introduced, through the vectorisation technique introduced in Eq.~\eqref{eq:vectorisation}, based on the notion of a charge `superoperator'
\begin{equation} \label{eq:charge_superop}
    \mathcal{Q}_{AB}=Q_{AB}\otimes \mathbbm{1}-\mathbbm{1}\otimes Q_{AB}^T.
\end{equation}
Namely, the values $q$ are the eigenvalues of \mbox{$\mathcal{Q}_{A}=Q_{A}\otimes \mathbbm{1}-\mathbbm{1}\otimes Q_{A}^T$}, i.e. the restriction of $\mathcal{Q}_{AB}$ to the subsystem $A$ \cite{moshe2018symmetry} (or equivalently, the eigenvalues of the commutator $[Q_A,\cdot]$, as in Eq.~\eqref{eq:qdefinition}). 
As we prove in App.~\ref{app:proof4}  using a language analogous to the symmetry resolution for a state~\cite{moshe2018symmetry},  the super reduced-density matrix $\mathrm{Tr}_{B \otimes B}(\ket{\rho_{AB}}\bra{\rho_{AB}})$ admits a block decomposition in the eigenspaces corresponding to these charges $q$, which leads then to Eqs.~\eqref{eq:rhoABgeneral} and \eqref{eq:qdefinition}. 
We also provide an illustrative example of Eq.~\eqref{eq:rhoABgeneral} in App.~\ref{sec:example}, starting from a 3-qubit system.

Similarly to the non-symmetry-resolved case of Eq.~\eqref{eq:schmidtO}, the newly constructed operator families $O_{A,j}^{(q)}$ and $O_{B,j}^{(q)}$ in Eq.~\eqref{eq:rhoABgeneral} are orthonormal with respect to the Hilbert-Schmidt inner product, i.e.\ ${\rm Tr} [ (O_{A,j_1}^{(q_1)})^{\dagger} O_{A,j_2}^{(q_2)} ]  \, = \, {\rm Tr} [ (O_{B,j_1}^{(q_1)})^{\dagger} O_{B,j_2}^{(q_2)} ]   \, =\,  \delta_{q_1,q_2} \delta_{j_1,j_2}$.
 In contrast to Eq.~\eqref{eq:schmidtO} however, in the symmetry-resolved Schmidt decomposition~\eqref{eq:rhoABgeneral} these operators can not always be taken to be Hermitian.

 By uniqueness of the Schmidt coefficients, the set of all (non-zero) values $\{\lambda_j^{(q)}\}$ altogether must be the same as the set of values $\{\lambda_i\}$ from Eq.~\eqref{eq:schmidtO}.
We can now define the total weight of the terms at fixed $q$ to be
\begin{equation}
    p(q) := \sum_{j} (\lambda_j^{(q)})^2.
\end{equation}
These weights satisfy $\sum_{q} p(q) = 1$ and give a probability distribution over the different charge sectors. In terms of that probability distribution, the (Shannon) OE from Eq.~\eqref{eq:OEdef} becomes
\begin{equation}\label{eq:sum_rule}
    S(\rho_{AB}) \, = \, \sum_q  p(q) S_q(\rho_{AB})   \, + \, \sum_q - p(q) \log p(q) ,
\end{equation}
where the \emph{symmetry-resolved operator entanglement} (SROE) of $\rho_{AB}$ in the charge sector $q$ is
\begin{equation}\label{eq:renyi-n-OE-SR}
    S_q (\rho_{AB}) \, : =- \, \sum_j   \left(\frac{(\lambda_j^{(q)})^2}{p(q)}\right) \log \left( \frac{ (\lambda_j^{(q)})^2}{p(q)}\right).
\end{equation}

Similarly, for $\alpha \neq 1$, we define the \emph{R\'enyi-$\alpha$ SROE}  to be
\begin{equation}\label{eq:renyi-alpha-OE-SR}
    S_q^{(\alpha)} (\rho_{AB}) \, : = \, \frac{1}{1-\alpha}  \log \left( \sum_j \left(\frac{ (\lambda_j^{(q)})^2}{p(q)} \right)^{\!\!\alpha\,} \right).
\end{equation}
Note however that a formula analogous to Eq.~\eqref{eq:sum_rule} for a R\'enyi index $\alpha \neq 1$, in terms of $p(q)$, cannot be written.

Importantly, in this paper we focus on a density matrix on a bipartite subsystem $A \cup B$ that results from tracing out an additional system $C$.
Let us observe that $\rho_{AB}$ commutes with $Q_{AB}$ as soon as the full system $A \cup B \cup C$ is in a pure state, which is also an eigenstate of the total $U(1)$ charge operator $Q_A+Q_B + Q_C$.  Then tracing out the degrees of freedom in $C$ automatically yields a reduced density matrix $\rho_{AB}$ which is block diagonal in $Q_{AB}$. This, in turn, ensures that the SROE is well defined. This reasoning also extends to mixed states that are block diagonal with respect to the charge operator: if the density matrix of the full system, $\rho_{ABC}$, commutes with $Q_{A}+Q_B+Q_C$, then $\rho_{AB} = {\rm Tr}_C(\rho_{ABC})$ commutes with $Q_{AB}$, and the discussion above also applies. This is because $[Q_{AB},\rho_{AB}] = {\rm Tr}_C ( [ Q_{AB},  \rho_{ABC} ] ) = {\rm Tr}_C \left( [ Q_A+Q_B+Q_C, \rho_{ABC}] \right) - {\rm Tr}_C \left( [ Q_C, \rho_{ABC}] \right) =  - {\rm Tr}_C( [ Q_C, \rho_{ABC}] )$, and it vanishes due to the cyclicity of the partial trace over $C$. In this paper we always deal with full system density matrices $\rho_{ABC}$ that commute with $Q_A + Q_B +Q_C$.

\section{Operator entanglement in the quench dynamics of trapped ions}\label{sec:experiment}
Let us now come to one of the main results of the paper: the development of tractable methods to extract R\'enyi $\alpha-$OE in an experiment, and the corresponding experimental observations of the entanglement barriers with R\'enyi $2-$OE and its symmetry resolution.

In Sec.~\ref{sec:classicalshadows} we detail the experimental protocol of classical shadows and, in Sec.~\ref{sec:batchshadows}, the associated efficient method for the post-processing of the measurement data, dubbed the \emph{batch shadows estimator}. 
In Sec.~\ref{sec:experimentalresults} and Sec.~\ref{sec:quasi-pic}, we discuss the experimental results.

\subsection{R\'enyi OE from randomized measurements}\label{sec:classicalshadows}
In the previous sections, we have expressed OE as a function of the Schmidt spectrum $\{\lambda_i\}$. In order to express estimators of these quantities based on experimental data, one needs to rewrite them into a functional of the density matrix $\rho_{AB}$. In particular, the R\'enyi $2-$OE is a fourth order function of $\rho_{AB}$ that explicitly writes as~\cite{liu2022detecting}:
\begin{equation}
    S^{(2)} = - \log \frac{ \mathrm{Tr} \Big( \mathcal{S} \, \rho_{AB}^{\otimes 4} \Big )}{\mathrm{Tr}(\rho_{AB}^2)^2} = \tilde S^{(2)}(\rho_{AB}) - 2R^{(2)}(\rho_{AB}) \label{eq:renyi2OE},
\end{equation}
where $\mathcal{S} =  \mathbb{S}^{(A)}_{1,4} \otimes  \mathbb{S}^{(A)}_{2,3} \otimes \mathbb{S}^{(B)}_{1,2} \otimes \mathbb{S}^{(B)}_{3,4}$ is defined in terms of the swap operators 
$\mathbb{S}^{(X)}_{k,l}$ that swap the $k^\text{th}$ and $l^\text{th}$ copies of system $X$ (see App.~\ref{app:entanglementconditions}).
We also have defined the unnormalized R\'enyi 2-OE \mbox{$ \tilde S^{(2)}(\rho_{AB}) = -\log \Big( \mathrm{Tr} ( \mathcal{S} \rho_{AB}^{\otimes 4}) \Big )$}, and we note that $R^{(2)}(\rho_{AB})$ can also be written in a similar form as $R^{(2)}(\rho_{AB}) = -\log \Big( \mathrm{Tr} ( \mathbb{S}^{(AB)}_{1,2} \rho_{AB}^{\otimes 2}) \Big )$. We present similar expressions for the SROE in App.~\ref{app:sroe_details}. 

Such functionals on $N$-qubit density matrices can be accessed in qubit experiments via randomized measurements ~\cite{liu2022detecting,huang2020predicting,RMtoolbox}, as shown in Fig.~\ref{fig:summary}a).
We start with the preparation of our $N$-qubit state in the experiment.
We apply local random unitaries $u_i$ ($i=1,\dots,N$), sampled from the circular unitary ensemble (CUE) or a unitary $2-$design to each qubit separately and subsequently measure them in the $z$-basis. The measurement outcomes are recorded as a bit string $s=s_1,\dots ,s_N$.
We repeat this procedure for a set of $N_u$ distinct unitaries $u^{(r)}$ (of the form $u_1\otimes\cdots\otimes u_N$) and collect, for each thus applied unitary, $N_M$ bit-strings $s^{(r,m)} = s_1^{(r,m)}, \ldots, s_N^{(r,m)}$ with $r = 1,\ldots, N_u$ and $m = 1, \ldots, N_M$ .
This recorded data can then be used to construct operators 
\begin{equation}
    \hat{\rho}^{(r,m)}=\bigotimes_{i=1}^N \left[3 (u_i^{(r)})^{\dagger}\ket{s_i^{(r,m)}}\bra{s_i^{(r,m)}}(u_i^{(r)}) - \mathbb{I}_2 \right]. \label{eq:def_shadow_rm}
\end{equation}
These operators are called a \emph{classical shadows}~\cite{huang2020predicting} and constitute independent, unbiased estimators of the underlying quantum state, in the sense that $\mathbb{E}[\hat{\rho}^{(r,m)}]=\rho$, where the expectation value is taken over the applied unitaries and measurement outcomes (see also App.~\ref{app:batchshadows}). 
One can also perform appropriate robust estimations in the presence of an unknown noise channel by constructing robust versions of these classical shadows~\cite{Chen2021robust,Koh2022classicalshadows,vandenbergmitigation}.

In order to measure functions \mbox{$X_n = \mathrm{Tr}(O^{(n)} \rho^{\otimes n})$} that are expectation values of a $n$-copy observable $O^{(n)}$ (here, in particular, we are interested in $O^{(4)}=\mathcal{S}$ on four copies, see Eq.~\eqref{eq:renyi2OE}), one can define an U-statistics estimator $\hat{X}_n$ given by
\begin{equation}
     \hat{X}_n =\frac{1}{n!} \binom{N_u}{n}^{-1} \sum_{r_1 \ne \dots \ne r_n} \mathrm{Tr} \Big[ O^{(n)} \bigotimes_{i = 1}^n \hat{\rho}^{(r_i)}\Big], \label{eq:u_stat}
\end{equation}
where we have introduced the classical shadow \mbox{$\hat{\rho}^{(r)} = \mathbb{E}_{N_M}[\hat{\rho}^{(r,m)}]$} constructed by averaging over all measured bit-strings for an applied unitary $u^{(r)}$. 
The estimator $\hat{X}_n$ is unbiased, i.e $\mathbb{E}[\hat{X}_n] = X_n$~\cite{huang2020predicting}.

This estimator has been used to access experimentally properties involving observables on up to $n=3$ copies~\cite{Elben2020b,neven2021symmetry,vitale2022symmetry}.
However, the underlying procedure quickly becomes computationally unfeasible and impractical as it requires summing over all possible combinations of $n$ distinct shadows $\hat{\rho}^{(r_1)}, \ldots, \hat{\rho}^{(r_n)}$ for $r_i \in [1,\ldots, N_u]$. Furthermore, its runtime scales with the number of terms involved in the above sum: $\mathcal{O}(N_u^n)$, a number that grows exponentially with the polynomial degree $n$. This scaling prevents us in practice from extracting the R\'enyi $2-$OE from experimental data of~\cite{Brydges2019probing} (as $n > 3$). Thus we are in dire need of an alternate method with a substantially reduced runtime. 

\subsection{Fast estimation of high order functionals using randomized measurements data via batch shadows}\label{sec:batchshadows}
In order to improve the post-processing run time of classical shadows, we propose to form \mbox{$b=1,\dots,n'\ge n$} `batch shadows', each of which is an average of $N_u/n'$ shadows (assuming, for simplicity, that $N_u/n'$ is an integer): \mbox{$\tilde \rho^{(b)}=(n'/N_u)\sum_{r=(b-1)N_u/n'+1}^{bN_u/n'}\tilde \rho^{(r)}$}.
This allows us to define an alternate unbiased estimator
\begin{equation}
        \tilde{X}_n^{(n')} = \frac{1}{n!} \binom{n'}{n}^{-1} \sum_{b_1\neq \dots \neq b_n}  \mathrm{Tr} \Big [ O^{(n)} \bigotimes_{i = 1}^n \tilde \rho^{(b_i)}    \Big] , \label{data-split-general}
\end{equation}
which is different from Eq.~\eqref{eq:u_stat} and easier to compute.
The first step involves the construction of the $n'$ batch shadows $\tilde{\rho}^{(b)}$, which obey $\mathbb{E}[\tilde \rho^{(b)}] = \rho$ for all batches $b=1,\ldots,n'$. This is achieved 
by summing up all classical shadows that belong to a respective batch -- a subroutine that requires $O(N_u)$ arithmetic operations (provided that the sample complexity $N_u$ exceeds the total number of degrees of freedom in the reduced density matrix).
These individual summation steps can be obviously paralellized on $n'$ cores.
Note also that, in contrast to the bare classical shadows $\hat{\rho}^{(r,m)}$, the batch shadows $\tilde \rho^{(b)}$ are stored in memory as dense $2^{N} \times 2^{N}$ matrices. 
For typical memory available on current hardware, this limits our fast estimation methods to systems sizes of up to $N \approx 15$ qubits.

The second step requires the evaluation of $\tilde {X}_n^{(n')}$ from the constructed batch shadows, which scales as $\mathcal{O}(n'^n)$. Thus by choosing $n' = n$ and assuming that $N_u \gg n'^n$, we obtain the fastest estimator with an evaluation time $\mathcal{O}(N_u)$. This is a drastic runtime improvement compared to the original U-statistics estimator in Eq.~\eqref{eq:u_stat}: $\mathcal{O}(N_u)$ steps (new) vs.\ $\mathcal{O}(N_u^n)$ steps (old). As we increase $n'$, one starts to incorporate more terms with distinct combinations of $n$ different shadows that were not previously considered. This progression terminates in an eventual convergence to the original U-statistics estimator, i.e $\tilde{X}^{(N_u)}_n = \hat{X}_n$, as well as in an increasing of the post-processing run-time. In order to gauge the performance of the estimator $ \hat{X}^{(n')}_n$, we study its statistical error behavior.

{\it Statistical errors ---} The statistical errors in randomized measurements arise due to applying a finite number of random unitaries $N_u$ and performing a finite number of readout measurements $N_M$.
The statistical errors of any estimator $\hat{X}$ is governed by its variance $\mathrm{Var}[\hat{X}]$. One can provide rigorous performance guarantees to estimate $X_n$  with an accuracy $\epsilon$ from our protocol by bounding this variance and subsequently applying Chebyshev's inequality: $\mathrm{Pr}[|\hat{X}^{(n')}_n - X_n| \geq \epsilon] \leq \mathrm{Var}[\hat{X}^{(n')}_n]/\epsilon^2$.
In App.~\ref{app:batchshadows}, we provide a general framework that can be applied to calculate variance bounds on the batch shadow estimator for arbitrary multi-copy operators.
We can provide then rigorous performance guarantees for our estimation formulas, which we can also compare with the results for classical shadows presented in Ref.~\cite{RathFisher2021}.

From this study, in the limit of $N_M = 1$, as elaborated in App.~\ref{App:Generaltreatement}, we notice that $\mathrm{Var}[\hat{X}^{(n')}_n]$ and $\mathrm{Var}[\hat{X}_n]$ have the same scaling behavior in the high accuracy regime of $\epsilon \to 0$: that is, in first order in $1/N_u$, they both scale \mbox{$\propto n^2/N_u$} with the same proportionality constant. 
Moreover, for $n' = n$, at second order in $1/N_u$, $\mathrm{Var}[\tilde{X}^{(n)}_n]$ exceeds $\mathrm{Var}[\hat{X}_n]$ by only a small factor of $n/(n-1)$. This shows that the required number of measurements to achieve a given accuracy  $\epsilon$ is essentially the same for the fast batch shadow estimator (Eq.~\eqref{data-split-general}) and the standard shadow estimator (Eq.~\eqref{eq:u_stat}).

Of course, we can apply our general variance bound formalism to the quantities of interest for this work:  $O^{(2)}=\mathbb{S}_{1,2}^{(AB)}$
and $O^{(4)}=\mathcal{S}$ that give access to $R^{(2)}(\rho_{AB})$ and $\tilde S^{(2)}(\rho_{AB})$, respectively.
In the case of Clifford shadows (i.e.\ each random unitary is chosen uniformly from the single-qubit Clifford group) and $n'=n$, we find that in order to estimate them with a confidence interval of $\delta$, i.e., to make sure that $\mathrm{Pr}[|\tilde{X}^{(n)}_n - X_n| \geq \epsilon] \leq \delta$, we require a number of measurements that scales as $N_u \propto 3^N/\epsilon^2$ with $N$. We refer to  App.~\ref{app:puritybounds} and App.~\ref{app:boundX4} for further details. 
Hence, in the worst case scenario, our measurement bound of the batch-shadow estimator of $\tilde{X}^{(n)}_n$ scales as $3^N$ irrespective of the order $n = 2,\,4$.
For evaluating $\tilde S^{(2)}(\rho_{AB})$, in particular, this measurement bound is a polynomial improvement over the best previously obtained bounds which only achieve $4^N$~\cite{liu2022detecting}. We conjecture that this desirable scaling persists when we increase  $\alpha$ to evaluate higher-order R\'enyi $\alpha-$OE.
We also complement these rigorous bounds with small-scale numerical simulations in App.~\ref{app:numericalsims}.
\subsection{Experimental results using batch shadows}\label{sec:experimentalresults}
The batch shadow formalism allows us to extract experimentally the R\'enyi $2-$OE along with its symmetry resolution. 
We perform our set of observations by reprocessing batch shadows from the randomized measurement data of two sets of experiments, where a global quench with a long-range XY model was realized on a string of $10$ and $20$ qubits (ions), respectively~\cite{Brydges2019probing}.
The initial state was a N\'eel state, $\ket{\psi}=\ket{01}^{\otimes N/2}$, with vanishing operator (and state)  entanglement entropy.
The global quench was followed by the implementation of randomized measurement protocol involving a total of $N_u  = 500$ Haar random unitaries.
For each of the applied unitaries $N_M = 150$ bit-string measurements were made. 
Details on the modelling of quench dynamics with tensor network algorithms and the protocol are discussed in App.~\ref{app:detailsexperiment}.
 
We consider two bipartite reduced density matrices $\rho_{AB}$ defined on the subsystems $A = [2, \, 3]$ and $B = [4,\, 5]$ and $A = [8, \, 9]$ and $B = [10,\, 11]$ for a total chain of 10 ions and 20 ions, respectively, where we have labelled the ions along the chain from $1$ to $N$. Our observations remain unchanged for other partitions.
Fig.~\ref{fig:summary}(b-c) and Fig.~\ref{fig:fig2mt} show the experimental results with corresponding numerical simulations both with and without decoherence of the experiment. 
Panels a) and b) in Fig.~\ref{fig:fig2mt}, highlight the extracted R\'enyi 2-OE with the simplest batch shadow estimator ($n' = 4$). 

We first observe the entanglement barrier for the considered partition of the $20$ ion system in Fig.~\ref{fig:summary}b) and Fig.~\ref{fig:fig2mt}a). We observe a barrier composed of a growth phase from $t=0$ to $t\approx 3 $ ms, and a decay phase from $t\approx 3 $ ms to the last data point at $t = 10$ ms. The peak at $t\approx 3 $ ms actually looks more like a double-peak with maxima at $t\approx 1.8 $ ms and $t\approx 3.8 $ ms. We interpret this as oscillations on top of the main barrier caused by the small size of subsystems $A$ and $B$. This interpretation is supported by the fact that similar finite-size effects are found in our free fermion model, as shown in Fig.~\ref{fig:fig2mt}d) (see also Sec.~\ref{sec:reduced_quench}). The growth phase at early times signals the creation of correlations between the two subsystems $A$ and $B$, while the decay phase
 reflects the fact that $\rho_{AB}$ goes towards a thermal-like density matrix with small OE. Since the system is finite, we also expect revivals of the OE at longer times, however such revivals are not yet visible in the available time window.  
The barrier can also be understood as a competition between the terms $\tilde S^{(2)}(\rho_{AB})$ and $R^{(2)}(\rho_{AB})$ in the respective regimes as shown in Fig.~\ref{fig:fig2mt}c)~\cite{Wang2019barrier}.
In the growth phase,
the unnormalised R\'enyi 2-OE $\tilde S^{(2)}(\rho_{AB})$ grows at a faster rate compared to the state entropy $2R^{(2)}(\rho_{AB})$. 
In the decay phase, this behavior is inverted. These general features are consistent with the theoretical predictions of different models shown in Refs.~\cite{dubail,Wang2019barrier,Bertini2020}.

Comparing Fig.~\ref{fig:fig2mt}a) and b), we see, however, that in the smaller system of 10 ions no similar barrier is found. In particular, we do not observe the decay phase. We discuss this case in more detail in Sec.~\ref{sec:quasi-pic} below.

\begin{figure}
\begin{minipage}[b]{0.48\linewidth}
\centering
\includegraphics[width=\textwidth]{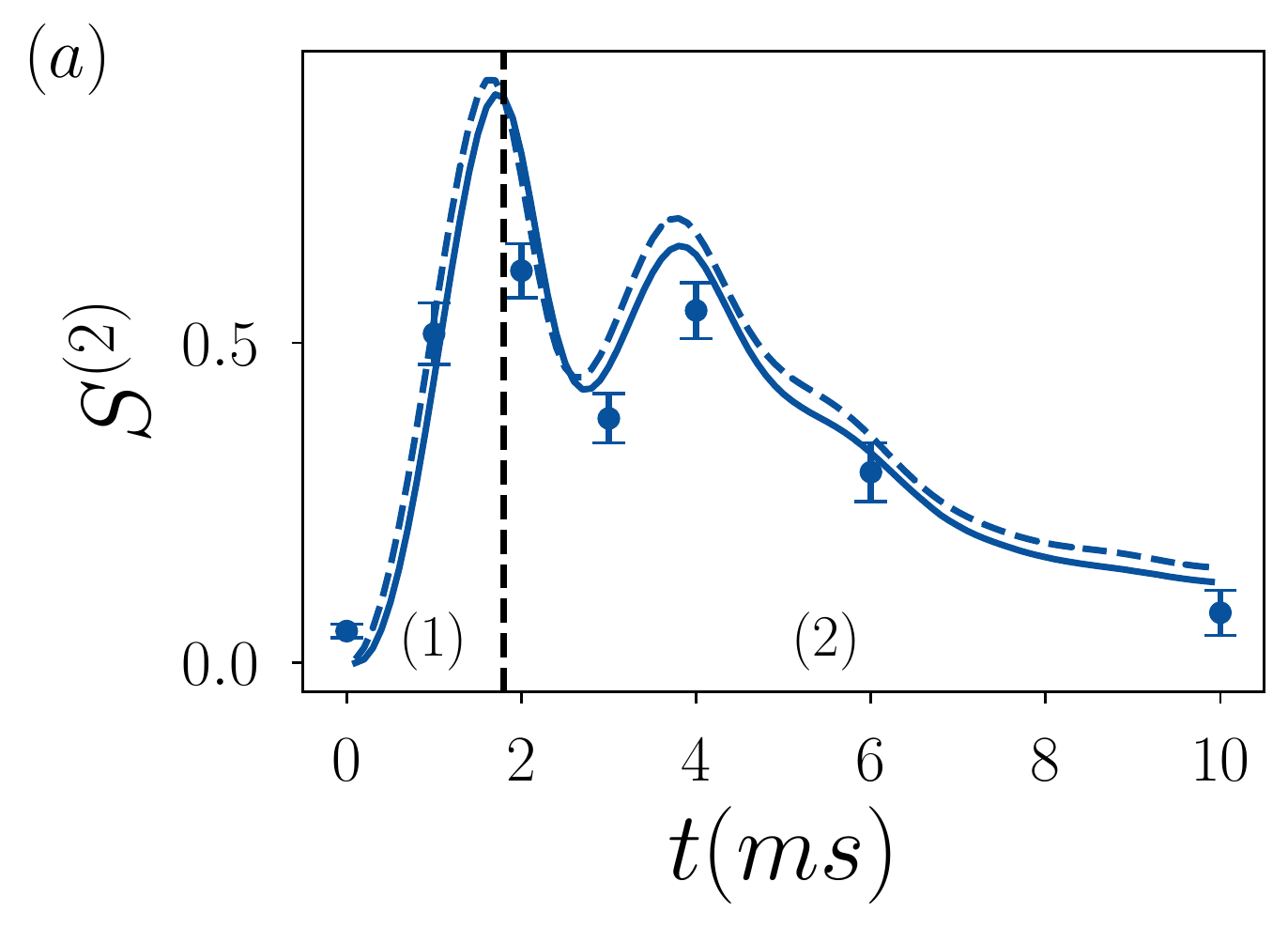}
\end{minipage}
\hskip -0.75ex
\begin{minipage}[b]{0.48\linewidth}
\centering
\includegraphics[width=\textwidth]{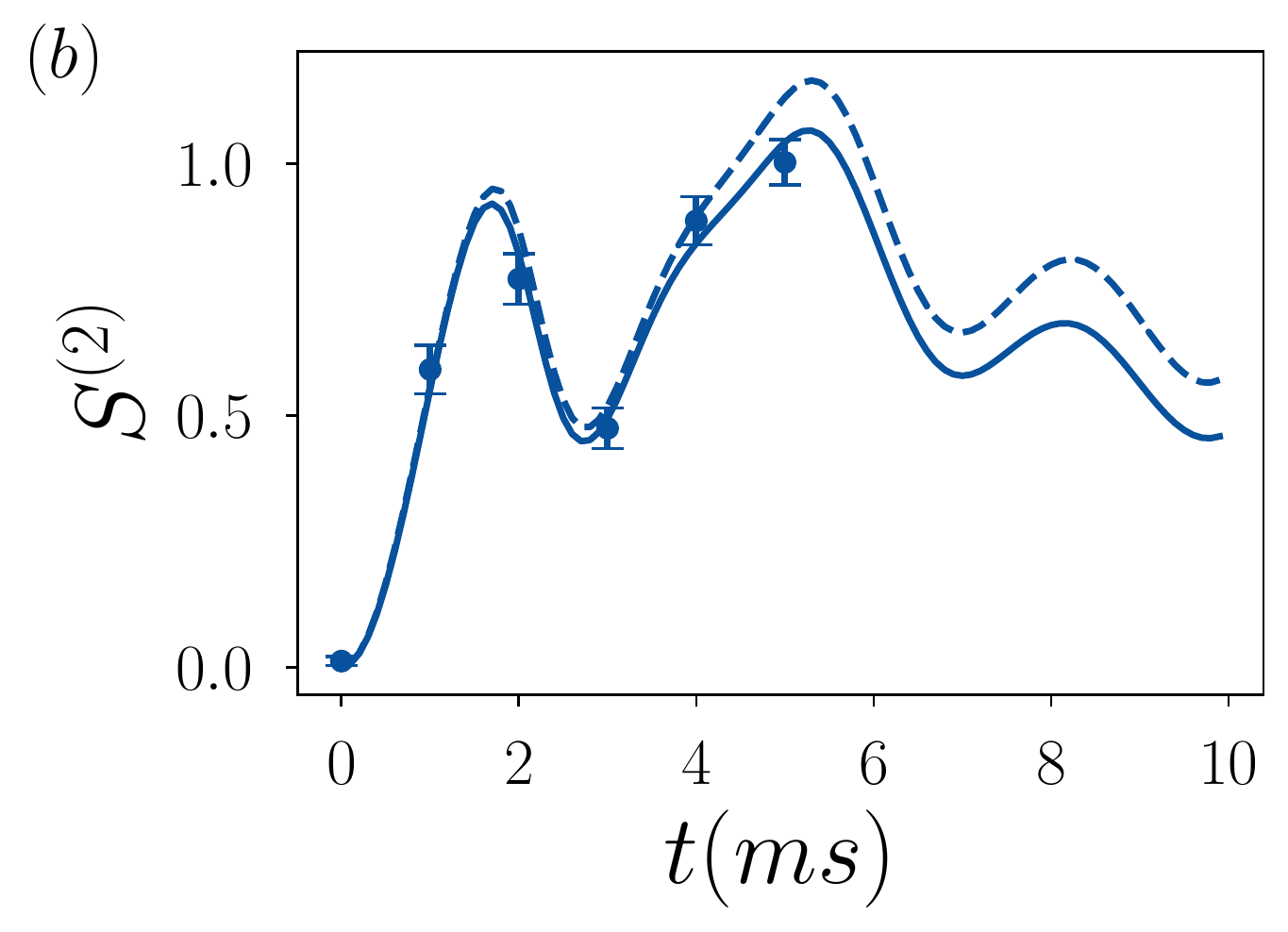}
\end{minipage}
\begin{minipage}[b]{0.48\linewidth}
\centering
\includegraphics[width=\textwidth]{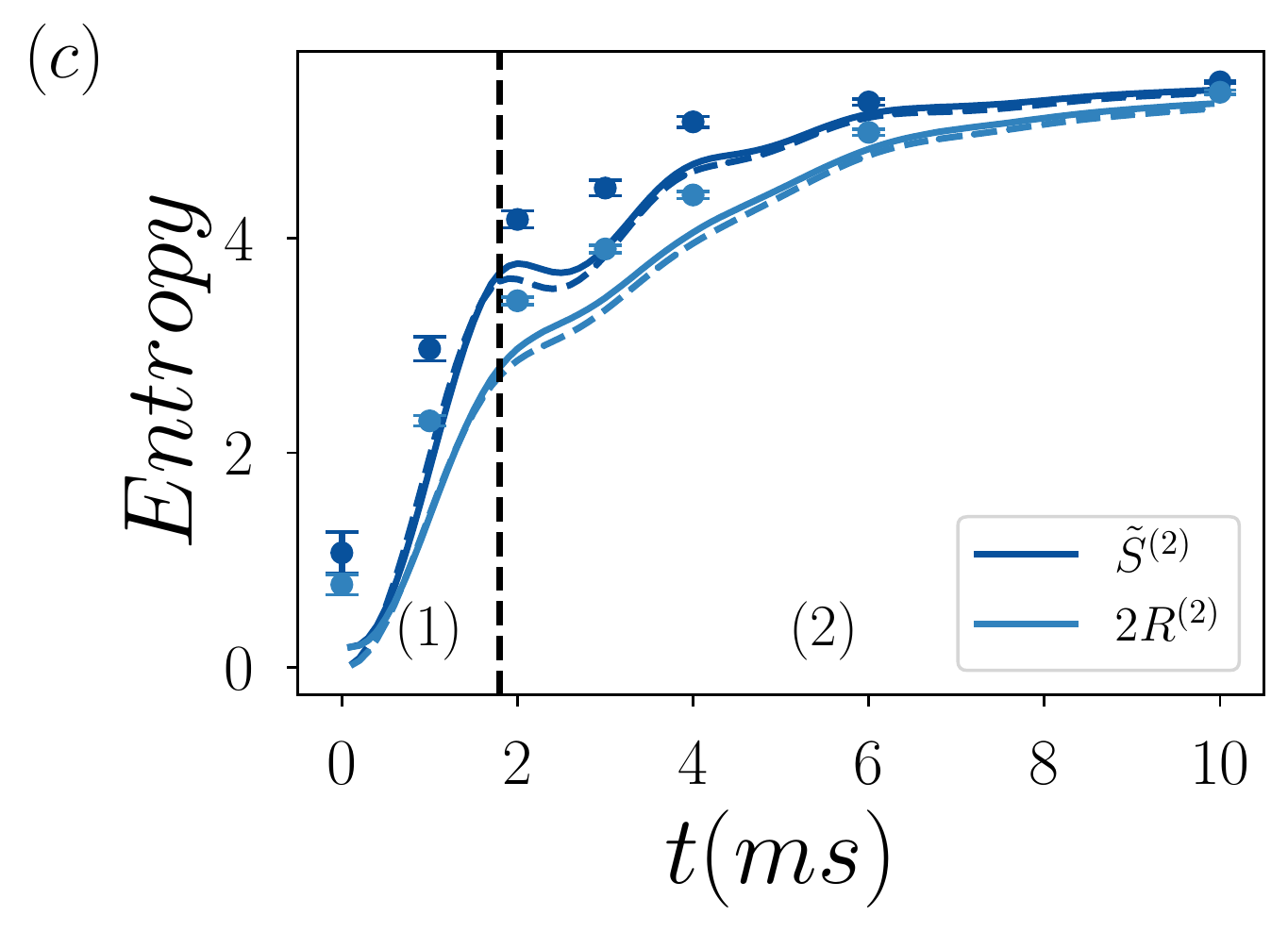}
\end{minipage}
\hskip -0.75ex
\begin{minipage}[b]{0.48\linewidth}
\centering
\includegraphics[width=\textwidth]{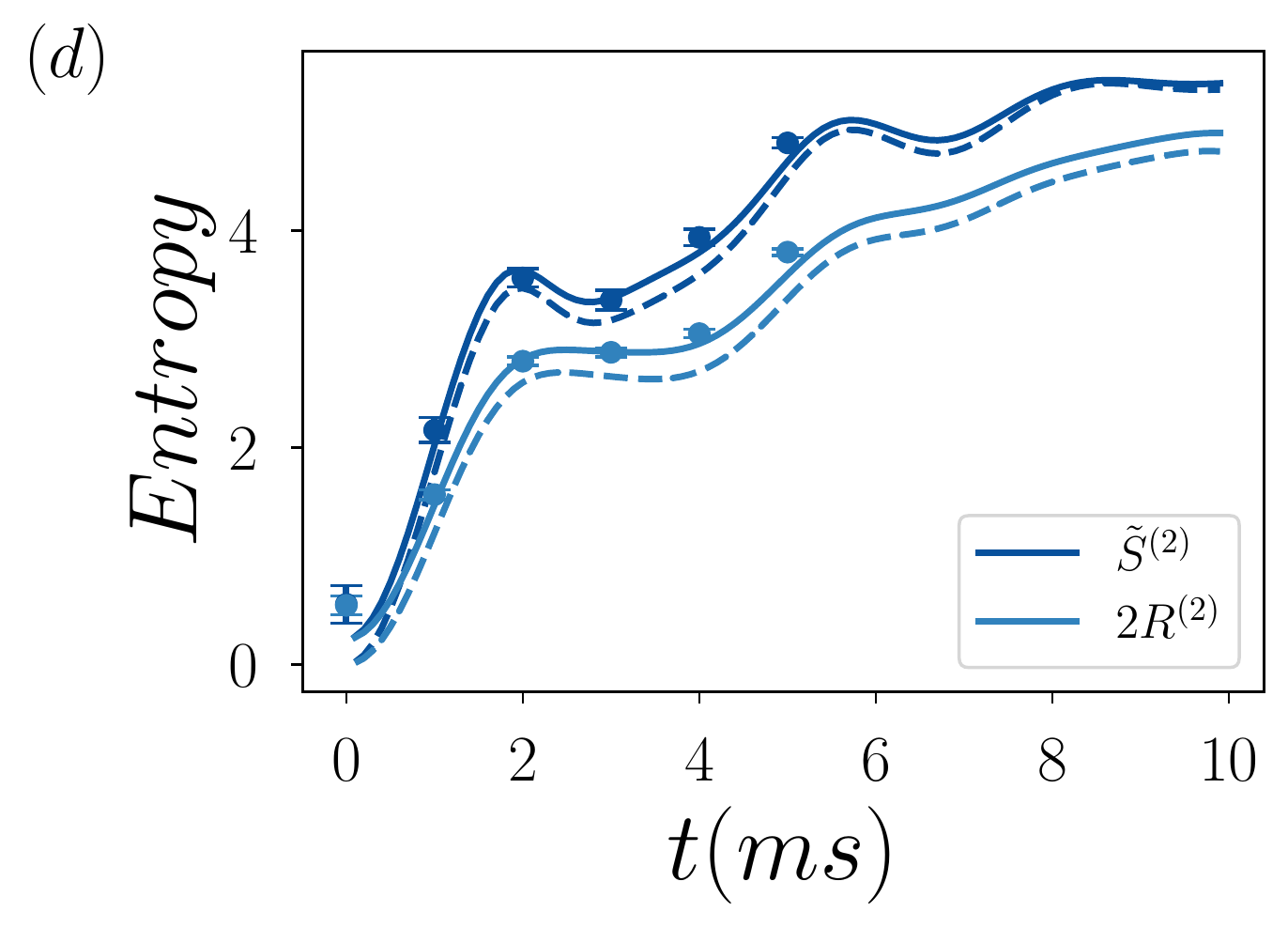}
\end{minipage}
\begin{minipage}[b]{0.48\linewidth}
\centering
\includegraphics[width=\textwidth]{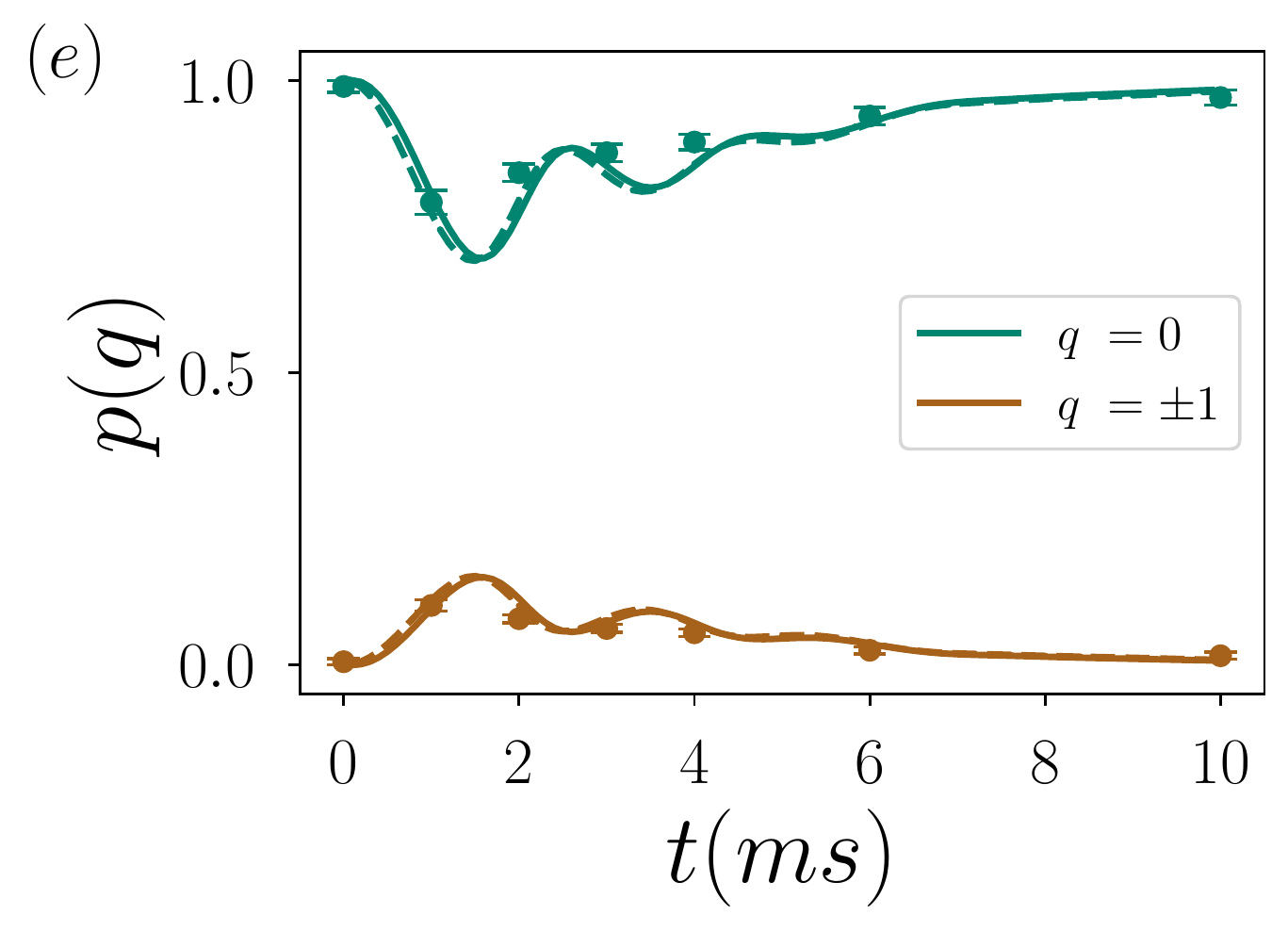}
\end{minipage}
\hskip -0.75ex
\begin{minipage}[b]{0.48\linewidth}
\centering
\includegraphics[width=\textwidth]{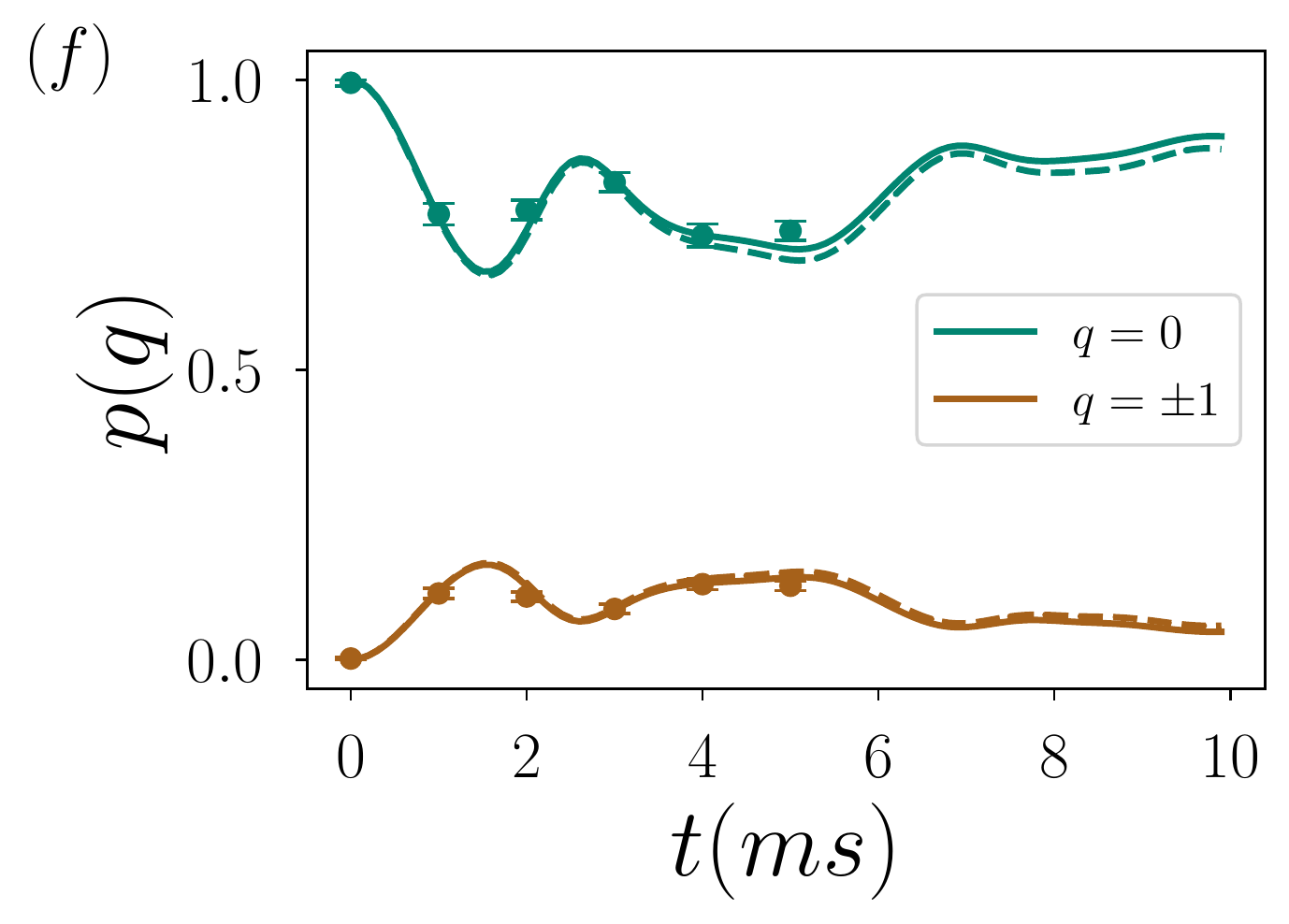}
\end{minipage}
\caption{Additional experimental observations: Panels (a-b) show the measured R\'enyi 2-OE and correspondingly, panels (c-d) the measured values of $\tilde S^{(2)}(\rho_{AB})$ and $R^{(2)}(\rho_{AB})$ relating to R\'enyi 2-OE as in Eq.~\eqref{eq:renyi2OE} for a reduced density matrix of 4 ions from a total a system consisting of $N = 20$ (left panels) and $N = 10$ (right panels). 
We observe the two phases of the entanglement barrier that is separated by a black vertical dashed line for panels a) and c) given by: (1) the growth phase followed by (2) the decay phase. 
Panels (e-f) show the corresponding populations $p(q)$ for symmetry sectors $q = 0 \, ,\, \pm 1$, on a reduced density matrix of 4 ions taken from their respective total system of ($N = 20$ and $N = 10$). The points show experimental results with the error bars calculated with Jackknife resampling.
Lines correspond to numerical simulations of the unitary dynamics (dashed) and including dissipation (solid).}
\label{fig:fig2mt} 
\end{figure}

Overall Fig.~\ref{fig:fig2mt}b) and d) show excellent agreement of the experimental data with the numerically modeled results for the 10 ion experiment.
On the other hand, it is quite surprising to see that even though the individual estimations of $\tilde S^{(2)}(\rho_{AB})$ and $R^{(2)}(\rho_{AB})$ from the 20 ion experiment as shown in Fig.~\ref{fig:fig2mt}c) have systematic shifts of the experimentally measured values caused likely due to an imperfect modeling of decoherence during the experiment and the measurement protocol, the corresponding measured R\'eyni 2-OE shows quite good agreement with the theoretical model as in Fig.~\ref{fig:fig2mt}a). This suggests a robustness feature of the R\'enyi 2-OE where errors in estimations of the two terms compensate each other. We also remark that the measured values of R\'enyi 2-OE are lower as shown in Fig.~\ref{fig:fig2mt}(a-b) from the numerical simulations of the experiment.

For the present model, the conserved quantity is the magnetization, i.e $Q_{AB}=\sum_{i\in AB}\sigma_i^z$ with $\sigma_i^z$ the $z$-Pauli matrix acting on the ion qubit $i$, c.f. App.~\ref{app:detailsexperiment}.
The corresponding symmetry-resolutions for the considered bipartitions of $N=20$ and $N=10$ ions are shown in Fig.~\ref{fig:summary}b) and c). Their respective populations in a given symmetry sector $q$ is given by \mbox{$p(q) = \mathrm{Tr}\big(\Pi_{q} \mathrm{Tr}_B(\ket{\rho_{AB}} \bra{\rho_{AB}})\big)/\mathrm{Tr}(\rho_{AB}^2)$} where $\Pi_q$ is the projector onto the eigenspace of the charge sector $q$ for system $A$ ($q=0$ being the sector initially populated).
This is highlighted in Fig.~\ref{fig:fig2mt}e) and f), respectively.
At $t = 0$, we see that the $q = 0$ sector is substantially populated, while the other sectors $q = \pm 1$ increase in population as a function of time.
In particular, for the 20 ion system, as shown in Fig.~\ref{fig:fig2mt}e), we observe very low population for the section $q = \pm 1$ as it decays as a function of time. This and the finite measurement statistics available from the experiment prevent us from resolving the experimental points for symmetry resolution sector of $q = \pm 1$ for later times.
In general, we also observe from Fig.~\ref{fig:summary}b) and c), that the sector $q = 0$ follows the features of the R\'enyi 2-OE. This translates, as shown in Fig.~\ref{fig:summary}b), to an entanglement barrier for $q = 0$ sector for the 20 ion system. One can also note the absence of the barrier for $q =0$ sector from the symmetry resolution of the 10 ion system.

\subsection{Interpretation in the quasi-particle picture}\label{sec:quasi-pic}

Interestingly, our experimental results can also be interpreted based on free fermions calculations detailed in Sec.~\ref{sec:reduced_quench}, with which we can qualitatively reproduce the behavior of the OE and the SROE for systems of $10$ and $20$ qubits. The analogy between the experimental setup and our free fermion model originates in the fact that the breaking of integrability in the experiment is weak~\cite{Hauke2013,jurcevic2014quasiparticle}.
Therefore, the short-time dynamics is comparable to the one of an integrable system, where entanglement generation can be qualitatively understood in terms of entangled pairs of quasiparticles propagating freely through the system~\cite{albac-14}. Deviations from integrable dynamics become relevant only on longer time scales which are not accessible with the available data.
As pointed out above, comparing Figs.~\ref{fig:summary}b) and \ref{fig:summary}d) for $20$ qubits and $20$ fermionic sites, respectively, we observe the same barrier shape for the OE, with oscillations due to the small subsystem size. The same barrier is found for the SROE for $q=0$, while for $q=\pm 1$, there is no apparent decay of the OE at long times.

We now come back to the fact that we did not observe a single-peaked barrier 
for $10$ ions, Fig.~\ref{fig:summary}c) and Fig.~\ref{fig:fig2mt}b). Importantly, this feature is also noticeable in our free fermions simulations with $10$ sites, see Fig.~\ref{fig:summary}e). Instead of a single barrier, the free fermion OE displays a double-peaked shape. The second peak can be understood from a quasi-particle picture as a consequence of the subsystem $A$ being particularly close to the boundary, as we explain now.

Recall that $A=[2,3]$ and $B=[4,5]$ for the chain of $10$ ions, with ions labelled from $1$ to $10$. Importantly, part $C$ then consists of two asymmetric pieces, $C = \{ 1\} \cup [6,10]$, with a very short domain on the left and a longer one on the right. The first growth phase of the OE is interpreted as originating from pairs of quasi-particles, initially located at the same position, that travel through the system in opposite directions and generate entanglement when one member of the pair is in $A$ and the other is in $B$. This interpretation of entanglement growth is usually given for the standard entanglement entropy~\cite{cc-05}, but also carries over to the OE. After the OE reaches its first maximum, it decreases because some quasiparticles, that formerly belonged to pairs shared between $A$ and $B$, arrive in $C$ and therefore stop contributing to the OE of $\rho_{AB}$. If the subsystems $A$ and $B$ were far away from the boundaries, then the OE would ultimately go to zero as the number of pairs shared between $A$ and $B$ would eventually vanish. This does not happen here, because the particles that escape from $A$ to $C$ (i.e. go from site $2$ to site $1$ in the chain) are soon reflected against the left boundary of the system. Consequently, they come back and are re-injected into $A$. As a result, the OE grows again, which explains the second peak in Fig.~\ref{fig:summary}e). The decay of that second peak occurs because, after the reflection, both members of a pair travel to the right, so they ultimately escape to the right half-system $[6,10]$.

The decay of the second peak is not visible in the experiment, Fig.~\ref{fig:summary}c).
Based on our numerical simulations of the experiment, as shown in Fig.~\ref{fig:fig2mt}b), we observe a decay occurring at a later time which is not accessible within the time window of the 10 ion experiment.

It should be possible to adapt the quasi-particle picture to describe both the experimental data and our free fermion results more quantitatively, following what is done for the time evolution of the entanglement entropy in nearly integrable dynamics, see e.g.  Ref.~\cite{bertini2020prethermalization} which implements previous ideas for local observable~\cite{begn-15}. This is however far beyond the scope of this paper.

\section{Symmetry-resolved operator entanglement in free fermionic chains}
\label{sec:reduced_quench}
So far we have presented results for finite-size systems, in direct connection with the experimental setup. We have shown in Fig.~\ref{fig:summary} that the qualitative features of the trapped ion experimental setup can also be observed in free-fermion chains under unitary evolution, despite the fact that these free-fermion models are short ranged and have no dissipation.

This raises the question as to whether one can understand more about the OE and the SROE of the reduced density matrix by studying free-fermion chains in the thermodynamic limit.
In this section, we show that the thermodynamic limit can be tackled analytically, unveiling some interesting properties of the  SROE, such as the time delay of the charge sectors or the equipartition. 

A direct analytical calculation of the SROE from the definition \eqref{eq:renyi-n-OE-SR} is difficult, but we can apply a trick similar to what has been done for the standard entanglement resolution \cite{moshe2018symmetry,xavier2018equipartition,murciano2020symmetry,asymmetry}, consisting in computing instead the charged moments of the reduced density matrix. Using the vectorization of the operator $\rho_{AB}$, $\left| \rho_{AB}\right>$, the object we want to compute is 
\begin{equation}\label{eq:Zcal}
    \mathcal{Z}_{\alpha}(q)=\sum_j (\lambda_j^{(q)})^{2\alpha}=\frac{\mathrm{Tr}[\Pi_q \left(\mathrm{Tr}_{B \otimes B}(\ket{\rho_{AB}}\bra{\rho_{AB}})\right)^{\alpha}]}{(\mathrm{Tr}[\rho^2_{AB}])^{\alpha}},
\end{equation}
where $q$ labels the (integer) eigenvalues of $\mathcal{Q}_A$ and $\Pi_q$ is the projector on the corresponding eigenspace of $\mathcal{Q}_A$, as already mentioned above.  To do so, we use the Fourier representation of
$\Pi_q$,
\begin{equation}\label{eq:fourierp}
  \Pi_q=\int_{-\pi}^{\pi}\frac{d\theta}{2\pi}e^{-iq\theta}e^{i\theta \mathcal{Q}_A}.
\end{equation}
Plugging Eq.~\eqref{eq:fourierp} into Eq.~\eqref{eq:Zcal}, we get
\begin{equation}\label{eq:Zcal2}
    \mathcal{Z}_{\alpha}(q)=\int_{-\pi}^{\pi}\frac{d\theta}{2\pi}e^{-iq\theta} Z_{\alpha}(\theta),
\end{equation}
where the charged moment $Z_\alpha(\theta)$ is defined as
\begin{equation}\label{eq:charged_moments}
    Z_{\alpha}(\theta)=\frac{1}{(\mathrm{Tr}[\rho^2_{AB}])^{\alpha}} \mathrm{Tr} [\left(\mathrm{Tr}_{B \otimes B}(\ket{\rho_{AB}}\bra{\rho_{AB}})\right)^{\alpha}e^{i\theta \mathcal{Q}_A}],
\end{equation}
The charged moment is the main object that we need to evaluate; we explain how to do so in the next subsection. In terms Eq.~\eqref{eq:Zcal}, the SROE reads
\begin{equation}\label{eq:FTOE}
\begin{split} S_{q}^{(\alpha)}(\rho_{AB})&=\frac{1}{1-\alpha}\log \frac{\mathcal{Z}_{\alpha}(q)}{[\mathcal{Z}_{1}(q)]^\alpha},
\end{split}
\end{equation}
while in terms of Eq.~\eqref{eq:charged_moments} the total OE is
\begin{equation}
S^{(\alpha)}(\rho_{AB})=\frac{1}{1-\alpha}\log Z_{\alpha}(0).
\end{equation}

\subsection{Free-fermion techniques for the OE}\label{sec:correlation}
For the eigenstates of quadratic lattice Hamiltonians, it is possible to compute the entanglement entropies in terms of the eigenvalues of the correlation matrix of the subsystem \cite{Peschel2,p-03}. This trick can be applied also for the computation of the OE and, more generally, of the charged moments in Eq.~\eqref{eq:charged_moments}.

Let us take a free-fermionic chain of length $N$ with $U(1)$ symmetry, described by the Hamiltonian
\begin{equation}\label{eq:XX}
H=-\frac{J}{2}\sum_{i=1}^N(c^{\dagger}_{i+1}c_i +\mathrm{h.c.})
\end{equation}
where $c_i^{\dagger}$ ($c_i$) is the creation (annihilation) operator such that the anticommutator obeys $\{c_i,c_{j}^{\dagger}\}=\delta_{ij}$ and also \mbox{$c_{N+1}=c_1, c^{\dagger}_{N+1}=c^{\dagger}_1$}, i.e. we impose periodic boundary conditions. For the sake of simplicity, we set $J=1$ from now on and remind the reader that we do the same for the reduced Planck constant ($\hbar = 1$) and the lattice constant $a = 1$.
The reduced density matrix $\rho_{AB}$ for a subsystem $A \cup B$, where $A \cup B=[1,\ell_A] \cup [\ell_A+1,\ell_A+\ell_B]$ consists of two adjacent intervals,
can be put in a diagonal form as 
\begin{equation}\label{eq:rhodiag}
    \rho_{AB}=\bigotimes_{k=1}^{\ell_A+\ell_B}\dfrac{e^{-\lambda_kd^{\dagger}_kd_k}}{1+e^{-\lambda_k}},
\end{equation}
where $e^{-\lambda_k}=n_k/(1-n_k)$, with $n_k$ being the occupation number at a given wave vector $k$ and $d_k$'s defined as fermionic operators satisfying  $\{d_k,d_{k'}^{\dagger}\}=\delta_{kk'}$.
It is more convenient to write Eq.~\eqref{eq:rhodiag} as
\begin{equation}\label{eq:rhodiag2}
\begin{split}
\rho_{AB} =&\bigotimes_{k=1}^{\ell_A+\ell_{B}} \frac{\ket{0}_k\bra{0}_k+e^{-\lambda_k}\ket{1}_k\bra{1}_k}{1+e^{-\lambda_k}}\\
=&\bigotimes_{k=1}^{\ell_A+\ell_{B}} [(1-n_k)\ket{0}_k\bra{0}_k+n_k\ket{1}_k\bra{1}_k],
\end{split}
\end{equation}
so that by applying the vectorization trick in Eq.~\eqref{eq:vectorisation} for $\rho_{AB}$, we get 
\begin{equation}
\begin{split}
    \frac{\ket{\rho_{AB}}}{\sqrt{\mathrm{Tr}[\rho^2_{AB}]}} &=\bigotimes_{k=1}^{\ell_A+\ell_{B}} \frac{[(1-n_k)\ket{0}_k\ket{0}_{\tilde{k}}+n_k\ket{1}_k\ket{1}_{\tilde{k}}]}{\sqrt{n^2_k+(1-n_k)^2}}\\ &=\bigotimes_{k=1}^{\ell_A+\ell_{B}} \frac{[1-n_k+n_kd^{\dagger}_k\tilde{d}^{\dagger}_k] \ket{0}}{\sqrt{n^2_k+(1-n_k)^2}},
    \end{split}
\end{equation}
where the $\tilde{d}_k$ operators are the copies of the $d_k$'s introduced in the vectorization process, and $\ket{0}$ is the state annihilated by all the $d_k$’s and $\tilde{d}_k$'s. 
The correlation matrix of the state $\ket{\rho_{AB}}$ reads
\begin{equation} \label{eq:correl_mtx}
\begin{split}
C_{kk'}=&\bra{\rho_{AB}} \begin{pmatrix}
d_k^{\dagger}\\
\tilde{d}_k
\end{pmatrix}
\begin{pmatrix}
d_{k'} \, \tilde{d}_{k'}^{\dagger}
\end{pmatrix}\ket{\rho_{AB}}\\&=\frac{\delta_{kk'}}{n^2_k+(1-n_k)^2}\begin{pmatrix}
& n_k^2 &n_k(1-n_k) \\
&n_k(1-n_k)  & (1-n_k)^2 \\
\end{pmatrix}.
\end{split}
\end{equation}
In the basis of $d_k, \tilde{d}_k$'s, the supercharge operator takes the form 
\begin{equation}\label{eq:charge2}
\mathcal{Q}=(\sum_k d^{\dagger}_k d_k)\otimes \mathbbm{1}-\mathbbm{1}\otimes (\sum_k\tilde{d}^{\dagger}_k \tilde{d}_k)^T. 
\end{equation}
We can collect the operators into the vector $\mathbf{f}=(d_1,\dots d_{\ell_A+\ell_B},\tilde{d}^{\dagger}_1 \dots  \tilde{d}^{\dagger}_{\ell_A+\ell_B})^T$ (making the identity operators in Eq.~\eqref{eq:charge2} implicit, for simplicity, and noting we can ignore the transpose) such that $\mathcal{Q}$ reads \mbox{$\mathcal{Q}=\mathbf{f}^{\dagger}\mathbf{f}-(\ell_A+\ell_B)$}, where $\ell_A+\ell_B$ acts just as an additive constant here. 

At this point, we can compute the $2(\ell_A+\ell_B)\times 2(\ell_A+\ell_B)$ correlation matrix as 
\begin{equation}\label{eq:corrfull}
    C_{AB}=\bigoplus_{k=1}^{\ell_A+\ell_B}C_{kk},
\end{equation}
and by doing a Fourier transform, we can write $C_{AB}$ in the spatial basis.
To evaluate the charged moments in Eq.~\eqref{eq:charged_moments}, we just have to focus on the subsystem $A$, i.e. we can restrict the supercharge operator to $\mathcal{Q}_A$ and the Fourier transform of the correlation matrix in Eq.~\eqref{eq:corrfull} to the subspace corresponding to the subsystem $A$. Diagonalising the latter matrix, we get $2\ell_A$ real eigenvalues $\xi_i$ between 0 and 1. 

Therefore, one can compute the charged moments of the reduced density matrix built from $\ket{\rho_{AB}}$ in terms of the eigenvalues $\xi_i$ as
\begin{equation}\label{eq:lattice}
Z_{\alpha}(\theta)=e^{-i\theta (\ell_A+\ell_B)}\prod_{a=1}^{2\ell_A}(\xi_a^{\alpha}e^{i\theta}+(1-\xi_a)^{\alpha}).
\end{equation}
Using Eqs. \eqref{eq:Zcal2} and \eqref{eq:FTOE}, we can compute exactly the SROE for the reduced density matrix of a free fermionic chain.
The same trick also allows the computation of the total R\'enyi-$\alpha$ OE as 
\begin{equation}
    S^{(\alpha)}(\rho_{AB})=\frac{1}{1-{\alpha}}\sum_{a=1}^{2\ell_A}\log[\xi_a^{\alpha}+(1-\xi_a)^{\alpha}].
\end{equation}
\subsection{Charged moments:  a quasiparticle picture}

Let us now consider a global quantum quench from an initial conformal invariant state with an evolution Hamiltonian given by the continuum limit of Eq.~\eqref{eq:XX} \cite{f-13}. The emerging
quasiparticles move with a single velocity and, in the space-time scaling limit $t,\ell_A,\ell_B\gg \tau_0$  (with $\tau_0$ an ultraviolet cutoff), we can introduce the function (assuming, without loss of generality, $\ell_A \leq \ell_B$)
\begin{equation}
f_{\ell_A,\ell_B}(t) =\left\{ \begin{array}{ll}
    t & \text{for } 0 \leq 2t \leq \ell_A \\
    \ell_A/2 & \text{for } \ell_A \leq 2t \leq \ell_B \\
    (\ell_A+\ell_B)/2 - t & \text{for } \ell_B \leq 2t \leq (\ell_A+\ell_B) \\
    0 & \text{for } (\ell_A+\ell_B) \leq 2t
\end{array} \right.
\end{equation}
so that the charged moments read
\begin{equation}\label{eq:renyi-d-cft}
\log Z_{\alpha}(\theta)=\frac{\pi \Delta_{\alpha}^{\theta}}{\tau_0} f_{\ell_A,\ell_B}(t),
\end{equation}
where \cite{moshe2018symmetry}
\begin{equation}\label{eq:scalingK}
    \Delta_{\alpha}^{\theta}=\frac{1}{12}\left(\alpha-\frac{1}{\alpha} \right)+\frac{1}{\alpha}\left(\frac{\theta}{2\pi} \right)^2.
\end{equation}
From this result, which is valid for a conformal field theory (CFT), one can formulate a quasiparticle picture for the charged moments of free fermionic models with global conserved $U(1)$ charge, whose  quench dynamics starts from initial states 
that are also invariant under $U(1)$ symmetry. 
This is obtained from the 
CFT result in Eq.~\eqref{eq:renyi-d-cft} by first replacing $t\to |v(k)|t$, with $|v(k)|$ being the velocity of quasiparticles, which for conformal invariant systems is fixed to be $v(k)=1$. Then, we should 
integrate over the quasiparticles with quasimomentum $k$, but properly accounting for the density (in momentum space) of the thermodynamic charged moments $z_{\alpha}(k,\theta)$ in the stationary state \cite{albac-14,ac-18,c-20}. The latter, can be inferred from the results for charged moments of state entanglement \cite{bpc-21,pbc-21} and the final result is the replacement $\pi\Delta^{\theta}_{\alpha}/\tau_0 \to 2z_{\alpha}(k,\theta)$
\begin{equation}\label{eq:qpcharged}
\begin{split}
&\log Z_{\alpha}(\theta)=\int_{-\pi}^{\pi} \frac{dk}{2\pi}2z_{\alpha}(k,\theta)f_{\ell_A,\ell_B}(|v(k)|t).
\end{split}
\end{equation}
In order to have a predictive formula, one has to fix the function $z_{\alpha}(k,\theta)$ in Eq.~\eqref{eq:qpcharged}. 
Here we focus on out-of-equilibrium protocols for free-fermion models, whose time evolution is given by the Hamiltonian in Eq.~\eqref{eq:XX}.
In this case, $z_{\alpha}(k,\theta)$ is determined from the  population of the modes $n_k$ of the post-quench Hamiltonian in the stationary state ~\cite{c-18,c-20} and it reads
\begin{equation}
z_{\alpha}(k,\theta)=\log [e^{i\theta}n_k^{\alpha}+(1-n_k)^{\alpha}]-i\theta /2.
\end{equation}
 For concreteness, from now on we restrict to a quench from the N\'eel state, for which $n_k=1/2$ for all $k$ \cite{pbc-21}, so that the charged moment (in Eq.~\eqref{eq:qpcharged}) becomes 
\begin{equation}\label{eq:cmquasi}
\begin{split}
&\log Z_{\alpha}(\theta)=\big[2(1-\alpha)\log 2 +2\log (\cos (\theta/2))\big]\mathcal{J}(t)
\end{split}
\end{equation}
where $\mathcal{J}(t)$ is defined as
\begin{equation}
    \mathcal{J}(t)=\int_{-\pi}^{\pi} \frac{dk}{2\pi}f_{\ell_A,\ell_B}(|v(k)|t).
\end{equation}
and $|v(k)|=|\sin(k)|$. The function $\mathcal{J}(t)$ displays the same qualitative features of Eq.~\eqref{eq:renyi-d-cft}, where $|v(k)|=1$: $\mathcal{J}(t)$ grows until $t<\ell_A/2$, then it presents a plateau barrier between $\ell_A/2<t<\ell_B/2$, it decays again for $\ell_B/2<t<(\ell_A+\ell_B)/2$, and, eventually, it saturates to $0$ for $t>(\ell_A+\ell_B)/2$. In other words, $\mathcal{J}(t)$ behaves as a barrier, a characteristic that we will find also for the SROE in the following section.
\subsection{Time delay, barrier and equipartition}
From the computation of the charged moments done above, the symmetry resolved moments read
\begin{equation}\label{eq:saddletodo}
\mathcal{Z}_{\alpha}(q)=2^{2(1-\alpha)\mathcal{J}(t)}\int_{-\pi}^{\pi}\frac{d\theta}{2\pi}e^{-i\theta q}\left(\cos \frac{\theta}{2} \right)^{2\mathcal{J}(t)}.
\end{equation}
As already pointed out for the usual symmetry resolved entropies in \cite{bpc-21,pbc-21}, this expression formally assumes negative values for $\mathcal{J}(t)<|q|$, so it means we have to replace it with $\mathcal{Z}_{\alpha}(q)=0$. This allows us to identify a {\it delay time} $t_D$ such that the SROE in a given charge sector starts only after $t_D$. 
The equation $\mathcal{J} (t_D) = |q|$ reads (as long as $v_Mt_D< \frac{1}{2} \mathrm{Min}(\ell_A,\ell_B)$  self-consistently and $v_M \equiv \mathrm{max}( v(k) )= 1$)
\begin{equation}
    \int_{-\pi}^{\pi}\frac{dk}{2\pi}|\sin(k)|t_D = |q|, 
\end{equation} 
and we can conclude that $t_D=\pi |q|/2$ for $|q|<\mathrm{Min}(\ell_A,\ell_B)/\pi$.
Therefore, and after simplification we find that the SROE is given by
\begin{equation}\label{eq:sq1}
S_{q}^{(\alpha)}(\rho_{AB})=
\begin{cases}
0 & (t\leq t_D) \\
2\mathcal{J}(t)\log 2+\log \mathcal{Z}_1(q) & (t>t_D).
\end{cases}
\end{equation} 
We remark that this expression does not depend on the R\'enyi index $\alpha$ , also for $\alpha = 1$. Notice that for large $\mathcal{J}(t)$, i.e. in the scaling limit $\mathcal{J}(t)>|q| \gg 1$, the integral \eqref{eq:saddletodo} can be computed by saddle-point approximation,  obtaining (for more details see \cite{pbc-21} where the same integral appears) 
\begin{equation}\label{eq:sqsp}
S_{q}^{(\alpha)}(\rho_{AB})=2\mathcal{J}(t) h\bigg(\frac{1+q/\mathcal{J}(t)}{2}\bigg)
\end{equation}
 where $h(x) = -x \log x - (1-x) \log(1-x)$ is the well-known binary entropy function.
The comparison between this
formula and the numerical results in the tight-binding model chain is displayed in  the top-right panel of Fig. \ref{fig:summary}f).
The solid lines correspond to Eq.~\eqref{eq:sqsp} for $t>t_D$, obtained from the saddle-point approximation of Eq.~\eqref{eq:saddletodo}.
The
agreement is good and we can also observe that there are some charge sectors with zero entanglement for $t<t_D$. However, for $t > (\ell_A+\ell_B)/2$ the discrepancy between the numerics
and the analytical prediction in Eq.~\eqref{eq:sqsp} is larger. One explanation could be that at finite $\ell_A$ and $t$, the data exhibit some small corrections, and our prediction is recovered only
in the scaling limit $t,\ell_A,\ell_B\to\infty$ with their ratio fixed.

For $|q| \ll \mathcal{J}(t)$ we find from Eq.~\eqref{eq:sqsp}
\begin{equation}\label{eq:equip}
    S_{q}^{(\alpha)}(\rho_{AB})=\mathcal{J}(t)\left(2\log 2-\frac{q^2}{\mathcal{J}(t)^2}\right).
\end{equation}
This result states that for small $|q|$ there is an effective equipartition of the OE with violations of order \mbox{$q^2/\mathcal{J}(t)$}. We compare the exact result for the SROE in the scaling limit reported in Eq.~\eqref{eq:sq1} (solid lines) with its asymptotic expansions in Fig. \ref{fig:asympt}. We notice that as $\ell=\ell_A+\ell_B$ increases (here $\ell_A=\ell_B$), the approximation in Eq.~\eqref{eq:sqsp} (large dashed lines) improves since $\mathcal{J}(t)$ also increases. The tiny dashed lines represent the further approximation in Eq.~\eqref{eq:equip}, which also improves as $\mathcal{J}(t)$ increases for the small charge value ($q=4$) that we plot. We observe that the SROE is small both at short and at large times, and it blows up linearly in the transient regime $t\leq (\ell_A+\ell_B)/2$, as for the total OE \cite{dubail}. 

We conclude by commenting on Figs.~\ref{fig:summary}d) and \ref{fig:summary}e), obtained through the free-fermion techniques described in Sec. \ref{sec:correlation}. We show that the dynamics of the SROE in the different charge sectors is affected by the finite size of the system. In particular, for $N=20$ one can observe the entanglement barrier only in the sector $q=0$, while for $q=1$ the absence of the decay is consistent with the experimental results of Fig. \ref{fig:summary}b). Moreover, for this system size the total OE presents a single peak, while for $N=10$, we notice the presence of two peaks in the total OE, which can be justified by the quasiparticle picture explained at the end of Sec. \ref{sec:experimentalresults}. 
 \begin{figure}
 \centering
  \includegraphics[width=\linewidth]{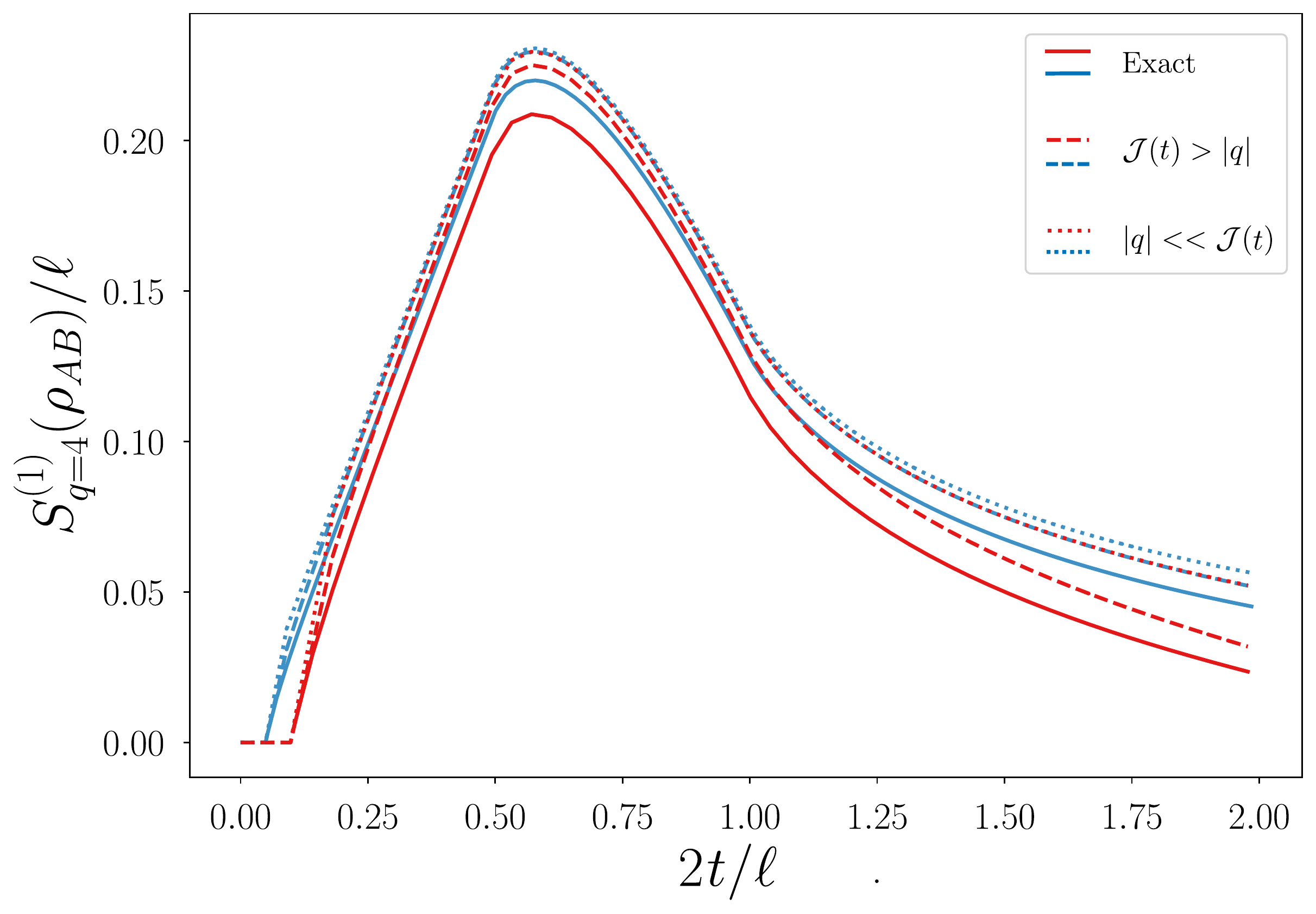}
 \caption{SROE in the scaling limit: Comparison between the analytical expression in Eq.~\eqref{eq:sq1} (solid line) and its asymptotic approximation in Eq.~\eqref{eq:sqsp} (large dashed line) and Eq.~\eqref{eq:equip} (tiny dashed line).
 Here $q=4$, $\ell=\ell_A+\ell_B$ with $\ell_A=\ell_B$, the blue lines corresponds to $\ell=256$, while the red ones to $\ell=128$. }\label{fig:asympt}
 \end{figure}

\section{Conclusions}\label{sec:conclusion}
This manuscript is devoted to a thorough analysis regarding the operator entanglement (OE) of a reduced density matrix after a global quantum quench, as well as its symmetry resolution.
These quantities first grow linearly in time, before they decrease again and eventually saturate to a finite value. The presence of such an entanglement barrier is strongly affected by the finite size of the system, as we demonstrated here, based on experimental data. This feature is also visible for free fermionic systems evolving under a unitary evolution.

The experimental results, also supported by tensor network simulations, have been obtained by a novel post-processing method of randomized measurement data, dubbed \emph{batch shadow estimator}, that has practical applications to probe non-linear properties of quantum many-body systems.
This method provides a faster and more efficient data-treatment technique with respect to the known ones~\cite{RMtoolbox} and enabled us to actually estimate the OE from existing experimental data~\cite{Brydges2019probing}. 

We observe the presence of the entanglement barrier of the reduced density matrix of a partition of 4 ions out of $N=20$, both for the total OE and and its symmetry resolved counterpart (SROE) in the charge sector $q=0$. However, finite size effects prevent the experimental observation of such a barrier in the charge sectors $q =\pm 1$ and for $q=0$ in the case of $N=10$.
For $N=20$, in the charge sectors $q =\pm 1$, the available measurement statistics has only allowed us to explore the early time behavior of the SROE. 

For small system sizes $N$, the phenomenology discussed in the previous paragraph 
can be also observed in free fermionic systems without dissipation. Therefore, guided by conformal field theory and free-fermion techniques, we showed that the semi-classical picture of moving quasi-particles \cite{cc-05,albac-14} can be adapted to this context. This leads to a general conjecture for the charged OEs whose Fourier transform gives the desired SROE. Beyond the barrier, we observe a time delay proportional to the charge sector and an effective equipartition for small $q$.

Because of this phenomenology, 
we expect our main physical  findings to show up for rather generic quench protocols. However, it would be very interesting to engineer situations in which some of them are absent, e.g. with the entanglement barrier appearing only in certain charge sectors, breaking equipartition. 

It is worthwhile to point out that the time evolution of the total OE is closely related to other entanglement measures such as the reflected entropy~\cite{reflected1,reflected2} (which is the OE of $\sqrt{\rho_{AB}}$), negativity~\cite{cct-12,cct-13,ez-15,shapourian2017partial}, and temporal entanglement ~\cite{Lerose2021,Sonner2021,Giudice2022}: in these latter cases, the connection is merely technical, but the fact that they can be computed in a similar way leads to analogous results, like the entanglement barrier of the logarithmic negativity after a quench \cite{Alba_2019}. Our works naturally paves the way for their symmetry resolution and to understand whether their connection could be understood sector by sector. 

To conclude, we remark that the OE of operators different from the reduced density matrix are known to capture important universal properties of the dynamics~\cite{Prosen2007,Pizorn2009,dubail,Zhou2017,jonay2018coarse,Wang2019barrier,Alba2019operator,Bertini2020,bertini2020operator,Alba2022rise}. For instance, the OE of the evolution operator $U(t) = e^{-i H t}$ grows linearly in ergodic phases~\cite{dubail,jonay2018coarse}, but only logarithmically in localized phases~\cite{Zhou2017,dubail}. Another example is the OE of a local operator $O$ evolving in the Heisenberg-picture, i.e.\ $O(t) = e^{i H t} O e^{-i H t}$. There, the OE grows linearly in systems with chaotic dynamics~\cite{jonay2018coarse}, but only logarithmically for integrable dynamics~\cite{Alba2019operator,Bertini2020,bertini2020operator}. 
It is then natural to wonder what happens to their symmetry resolved (SR) version, which certainly deserves future investigation.

\section{Acknowledgements}

V.V. is grateful to A. Santini for useful discussions on related topics.
We thank A. Elben for helping us to reprocess the experimental data of Ref.~\cite{Brydges2019probing}. 
P.C. and S.M. acknowledge support from ERC under Consolidator grant number 771536 (NEMO).
A.R. is grateful for the support by Laboratoire d'excellence LANEF in Grenoble (ANR-10-LABX-51-01) and from the Grenoble Nanoscience Foundation.
B.V.\ and M.V. acknowledge funding from the French National Research Agency (ANR-20-CE47-0005, JCJC project QRand), and from the Austrian Science Foundation (FWF, P 32597 N). The work of V.V. was partly
supported by the ERC under grant number 758329
(AGEnTh), and by the MIUR Programme FARE (MEPH). J.D. acknowledges support from the ANR (ANR-20-CE30-0017-02, project QuaDy) and from CNRS IEA (project QuDOD).
For our numerical simulations we used the quantum toolbox QuTiP~\cite{Johansson2013}.

\onecolumngrid
\pagebreak[4]

\appendix

\section{Entanglement conditions } \label{app:entanglementconditions}

In this Appendix, we derive our rigorous conditions to detect operator entanglement in bipartite mixed states $\rho_{AB}$. The starting point is the operator Schmidt decomposition 
\begin{equation}
\frac{\rho_{AB}}{\sqrt{\mathrm{Tr}\left[\rho_{AB}^{2}\right]}}=\sum_{i=1}^R \lambda_i O_{A,i} \otimes O_{B,i}\,,
\label{eq:schmidt-appendix}
\end{equation}
see also Eq.~\eqref{eq:schmidtO0} in the main text. Here, $R \geq 1$ denotes the operator Schmidt rank
and the Schmidt values $\lambda_1,\ldots,\lambda_R$ are nonnegative ($\lambda_i \geq 0$) and obey $\sum_{i=1}^R \lambda_i^2 =1$.
A seminal result in entanglement theory states that the Schmidt values of any separable state $\rho_{AB}$ must obey
\begin{equation}
\sum_{i=1}^R \lambda_i \leq 1/ \sqrt{\Tr \left(\rho_{AB}^2\right)}\,, \label{eq:ccnr-appendix}
\end{equation}
see e.g. \cite[Theorem~6]{GUHNE20091} and also Eq.~\eqref{eq:ccnr} in the main text. 
Conversely, if Eq.~\eqref{eq:ccnr-appendix} is violated, then $\rho_{AB}$ must be  entangled (across the bipartition into subsystems $A$ and $B$).
This entanglement criterion is called the computable cross norm or realignment (CCNR) condition and applies to any type of bipartite state. The main drawback is that it seems to rely on the explicit availability of an operator Schmidt decomposition~\eqref{eq:schmidt-appendix}. Obtaining such a decomposition requires full state tomography of the density matrix $\rho_{AB}$.

This apparent drawback was recently overcome in Ref.~\cite{liu2022detecting}. There, the authors point out that sums of higher powers of Schmidt values can be reformulated in terms of linear observables in tensor products of the original density matrix $\rho_{AB}$. 
This can be achieved by concatenating subsystem swap operators. Let $\mathbb{S}^{(X)}_{k,l}$ be the operator that swaps  the $k^\text{th}$ and $l^\text{th}$ copies of system $X$ ($X = A, B$, or $AB$ below). Namely, it acts as $\mathbb{S}_{k,l}^{(X)}(\ket{i}^{X_k} \otimes \ket{j}^{X_l} ) = \ket{j}^{X_k} \otimes \ket{i}^{X_l}$ on any pair of basis states for systems $X_k$ and $X_l$ (as indicated by the superscripts), and as the identity on all other systems. Then, the following relation holds:
\begin{equation}
\sum_{i=1}^R \lambda_i^4 = \frac{\mathrm{Tr} \left( \mathcal{S} \rho_{AB}^{\otimes 4}\right)}{\Tr \left(\rho_{AB}^2 \right)^2}\quad \text{where} \quad \mathcal{S} =  \mathbb{S}^{(A)}_{1,4} \otimes  \mathbb{S}^{(A)}_{2,3} \otimes \mathbb{S}^{(B)}_{1,2} \otimes \mathbb{S}^{(B)}_{3,4}. \label{eq:fourth-moment}
\end{equation}
The denominator $\mathrm{Tr} \left( \rho_{AB}^2 \right)^2$ is a consequence of the normalization in the lhs of Eq.~\eqref{eq:schmidt-appendix}, and validity of the overall expression readily follows from inserting the operator Schmidt decomposition into the rhs of Eq.~\eqref{eq:fourth-moment} and from using the fact that the operators $O_{A,i}$ and $O_{B,i}$ are all orthonormal, i.e.\ $\Tr\left( O_{A,i} O_{A,j}\right) = \Tr \left( O_{B,i} O_{B,j} \right) = \delta_{i,j}$.
It is also worth pointing out that the purity can also be reformulated as a linear observable on tensor products: 
\begin{equation}
\mathrm{Tr} \left( \rho_{AB}^2 \right)=\mathrm{Tr} \left(\mathbb{S}_{1,2}^{(AB)} \rho_{AB}^{\otimes 2} \right).
\end{equation}
This is relevant, because trace polynomials of the form $\mathrm{Tr} \left( O^{(n)} \rho_{AB}^{\otimes n}\right)$ can be measured directly in actual experiments by employing techniques from the randomized measurement toolbox~\cite{RMtoolbox}. This will be the content of the subsequent Appendix section. 
For now it is enough to remember that we know how to directly estimate the lhs of Eq.~\eqref{eq:fourth-moment}, while we are not aware of a direct estimation protocol for the lhs of Eq.~\eqref{eq:ccnr-appendix}.

So, how do we overcome this discrepancy between what can be measured (Eq.~\eqref{eq:fourth-moment}) and what is required to detect entanglement (violation of Rel.~\eqref{eq:ccnr-appendix})? We collect the positive Schmidt-values into an $R$-dimensional vector $l=(\lambda_1,\ldots,\lambda_R)$ and use fundamental $\ell_p$-norm relations to obtain a relation between $\| v \|_{\ell_1}=\sum_{i=1}^R |\lambda_i| = \sum_{i=1}^R \lambda_i$ (the last equation uses the fact that all Schmidt values are nonnegative) and $\|v \|_{\ell_4}^4 = \sum_{i=1}^R \lambda_i^4$. To achieve such a conversion, we can also use the fact that Schmidt values are normalized, i.e.\ $\|v \|_{\ell_2}^2=\sum_{i=1}^R \lambda_i^2 =1$. 
We can use the Berger's inequality~\cite{Berger97} that relates the $\ell_1$, $\ell_2$ and $\ell_4$ norms of any vector. This inequality then ensures
\begin{equation}
\sum_{i=1}^R \lambda_i = \| v \|_{\ell_1} \geq \frac{\| v \|_{\ell_2}^3}{\|v \|_{\ell_4}^2}
= \frac{1}{\left(\sum_{i=1}^R \lambda_i^4\right)^{1/2}}.
\label{eq:berger}
\end{equation}
 This relation is a simple consequence of Hölder's inequality and we refer to Ref.~\cite[Proof of Lemma~12]{Matthews09} for a quick derivation. 
 
We now have all ingredients in place to derive our experimentally accessible entanglement condition. A combination of Rel.~\eqref{eq:ccnr-appendix}, Eq.~\eqref{eq:berger} and Eq.~\eqref{eq:fourth-moment} implies that \emph{every} separable state $\rho_{AB}$ must obey
\begin{align}
\frac{1}{\Tr (\rho_{AB}^2)} \geq \left( \sum_{i=1}^R \lambda_i \right)^2 \geq \frac{1}{\sum_{i=1}^R \lambda_i^4} = \frac{\Tr \left(\rho_{AB}^2\right)^2}{\Tr \left( \mathcal{S} \rho_{AB}^{\otimes 4}\right)}\,, \label{eq:CCNR_Holder}
\end{align}
or equivalently:
\begin{equation}
\Tr \left( \mathbb{S}^{(A)}_{1,4} \otimes  \mathbb{S}^{(A)}_{2,3} \otimes \mathbb{S}^{(B)}_{1,2} \otimes \mathbb{S}^{(B)}_{3,4} \rho_{AB}^{\otimes 4}\right)=
\Tr \left( \mathcal{S} \rho_{AB}^{\otimes 4}\right) \geq \Tr \left( \rho_{AB}^2 \right)^3 = \Tr \left( \mathbb{S}^{(AB)}_{1,2} \rho_{AB}^{\otimes 2}\right)^3.
\label{eq:entanglement-condition}
\end{equation}
If this relation is violated, then we can be sure that the state $\rho_{AB}$ must be entangled. And, in stark contrast to the original CCNR condition, both the expression on the very left and the expression on the very right are directly accessible in an experiment. 
From Eq.~\eqref{eq:CCNR_Holder} we can also take logarithms and negate the sign to obtain an equivalent statement in terms of R\'enyi entropies.

\begin{proposition}[Entanglement condition]
Let $\rho_{AB}$ be a bipartite quantum state with R\'enyi 2-OE \mbox{$S^{(2)}(\rho_{AB})=-\log (\sum_i \lambda_i^4)$} and R\'enyi 2-entropy \mbox{$R^{(2)}(\rho_{AB})=-\log (\Tr (\rho_{AB}^2))$}. 
Then, the relation
\begin{equation}
    S^{(2)}(\rho_{AB}) > R^{(2)}(\rho_{AB})
\end{equation}
implies that $\rho_{AB}$ must be entangled (across the bipartition $A$ vs $B$).

\label{prop-ent}
\end{proposition}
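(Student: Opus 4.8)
The plan is to prove the contrapositive: I will show that \emph{every} separable state $\rho_{AB}$ obeys $S^{(2)}(\rho_{AB}) \leq R^{(2)}(\rho_{AB})$, so that a violation of this inequality certifies entanglement. The guiding idea is to sandwich the $\ell_1$-norm $\sum_i \lambda_i = \|v\|_{\ell_1}$ of the Schmidt vector $v=(\lambda_1,\ldots,\lambda_R)$ between two bounds of opposite type: an \emph{upper} bound that holds only for separable states, and a \emph{lower} bound that holds universally. Comparing the two endpoints will then directly relate the experimentally accessible quantities $S^{(2)}$ (an $\ell_4$ functional) and $R^{(2)}$ (a purity).

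For the upper bound I would invoke the CCNR/realignment criterion, Rel.~\eqref{eq:ccnr-appendix}, which I take as an established fact: any separable $\rho_{AB}$ satisfies $\sum_i \lambda_i \leq 1/\sqrt{\Tr(\rho_{AB}^2)}$. For the lower bound I would exploit the normalization $\sum_i \lambda_i^2 = \|v\|_{\ell_2}^2 = 1$ built into Eq.~\eqref{eq:schmidt-appendix} together with Berger's inequality (a consequence of Hölder's inequality), which gives $\|v\|_{\ell_1} \geq \|v\|_{\ell_2}^3/\|v\|_{\ell_4}^2 = 1/\left(\sum_i \lambda_i^4\right)^{1/2}$, exactly as recorded in Eq.~\eqref{eq:berger}. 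Chaining these two bounds yields, for every separable state,
\begin{equation}
\frac{1}{\left(\sum_i \lambda_i^4\right)^{1/2}} \;\leq\; \sum_i \lambda_i \;\leq\; \frac{1}{\sqrt{\Tr(\rho_{AB}^2)}},
\end{equation}
so that the two outer terms immediately give $\sum_i \lambda_i^4 \geq \Tr(\rho_{AB}^2)$.

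The final step is essentially cosmetic: taking logarithms of $\sum_i \lambda_i^4 \geq \Tr(\rho_{AB}^2)$ and multiplying by $-1$ (which reverses the inequality) converts it into $-\log\sum_i \lambda_i^4 \leq -\log\Tr(\rho_{AB}^2)$, i.e.\ precisely $S^{(2)}(\rho_{AB}) \leq R^{(2)}(\rho_{AB})$. Contraposition of this implication then establishes the claim as stated in the Proposition.

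I expect the only genuinely delicate point to be the norm inequality of the second step: one must get the exponents in $\|v\|_{\ell_1} \geq \|v\|_{\ell_2}^3/\|v\|_{\ell_4}^2$ correct and verify that the normalization $\|v\|_{\ell_2}=1$ is applied properly, since an error here would flip the final inequality and invert the entanglement criterion. Everything else is bookkeeping—the CCNR bound is cited wholesale, and the logarithmic manipulation is routine once one tracks that $-\log$ is monotonically decreasing—so essentially all the content of the argument lives in correctly bridging the $\ell_1$ quantity supplied by CCNR and the $\ell_4$ quantity encoded in $S^{(2)}$.
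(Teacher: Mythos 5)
Your proof is correct and follows essentially the same route as the paper's: you invoke the CCNR criterion for the upper bound on $\sum_i \lambda_i$, Berger's (Hölder-type) inequality $\|v\|_{\ell_1} \geq \|v\|_{\ell_2}^3/\|v\|_{\ell_4}^2$ with the normalization $\|v\|_{\ell_2}=1$ for the lower bound, and then chain and take logarithms, exactly as in App.~\ref{app:entanglementconditions}. Nothing is missing; the delicate exponent bookkeeping you flag is handled correctly.
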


Finally, we point out that a centering operation on the level of density matrices can substantially enhance the 
ability to detect entanglement. The key idea is to shift the original density matrix by
\begin{equation}
\rho_{AB} \mapsto \rho_{AB}-\rho_A \otimes \rho_B = X_{AB},
\end{equation}
where $\rho_A=\Tr_B (\rho_{AB})$ and $\rho_B = \Tr_A (\rho_{AB})$ are the reduced density matrices of $\rho_{AB}$. Note that this centered density matrix $X_{AB}$ is not physical, because it has negative eigenvalues and a vanishing trace.
The Schmidt coefficients $(\chi_1,\ldots,\chi_{R'})$ of this shifted density matrix are known to obey the \emph{enhanced realignement and computable cross norm condition}~\cite{Zhang2008,liu2022T2}:
\begin{equation}
\sum_{i=1}^{R'} \chi_i \leq \frac{\sqrt{1-\mathrm{Tr}(\rho_A^2)}\sqrt{1-\mathrm{Tr}(\rho_B^2)}}{\sqrt{\mathrm{Tr}([X_{AB}]^2)} },
\end{equation}
see e.g. Ref.~\cite{liu2022detecting}. We can now adjust the arguments from before to obtain the following relation that must hold for \emph{every} shifted version $X_{AB}$ of a separable state $\rho_{AB}$:
\begin{align}
\Tr \left( \mathcal{S} X_{AB}^{\otimes 4}\right) \geq  \frac{\Tr \left( X_{AB}^2\right)^3}{\left(1- \Tr (\rho_A^2)\right) \left(1-\Tr (\rho_B^2)\right)}. \label{eq:ent_enhanced}
\end{align}
If this condition is violated, the underlying state must be entangled.
Although it requires some additional work, the expression on both sides of this equations can be re-expressed in terms of linear observables in tensor products of $\rho_{AB}$, which makes them experimentally accessible.

 Moreover, entanglement conditions based on realignment moments have been introduced in~\cite{zhang_quantum_2022}.
 Finally, let us point out that the idea of using linear observables in tensor products -- which are also known as index permutation matrices in this context -- to detect entanglement is not new. Specifically connected to this work, optimal entanglement detection criteria have already been found in this framework~\cite{liu2022detecting}. Here, we provide additional criteria which are perhaps less powerful, but simpler to state, simpler to estimate and which follow from a simpler proof argument. Below in Fig.~\ref{fig:ent-detection}, with the batch shadow estimators, we illustrate an example of mixed state entanglement detection from the experimental data of Ref.~\cite{Brydges2019probing} using Proposition.~\ref{prop-ent} and the optimal condition in Eq.~(7) of~\cite{liu2022detecting}, where we clearly observe an enhanced detection capability of the optimal condition. 
 We additionally note that with the finite measurement statistics available from the experiment of Ref.~\cite{Brydges2019probing}, we were unable to extract experimentally, the enhanced condition derived in Eq.~\eqref{eq:ent_enhanced} and its corresponding optimal condition~\cite[Eq.~(8)]{liu2022detecting} due to large error bars on the experimental data arising from the finite available measurement statistics.
 \clearpage 
 \begin{figure}[h]
\begin{minipage}[b]{0.37\linewidth}
\centering
\includegraphics[width=\textwidth]{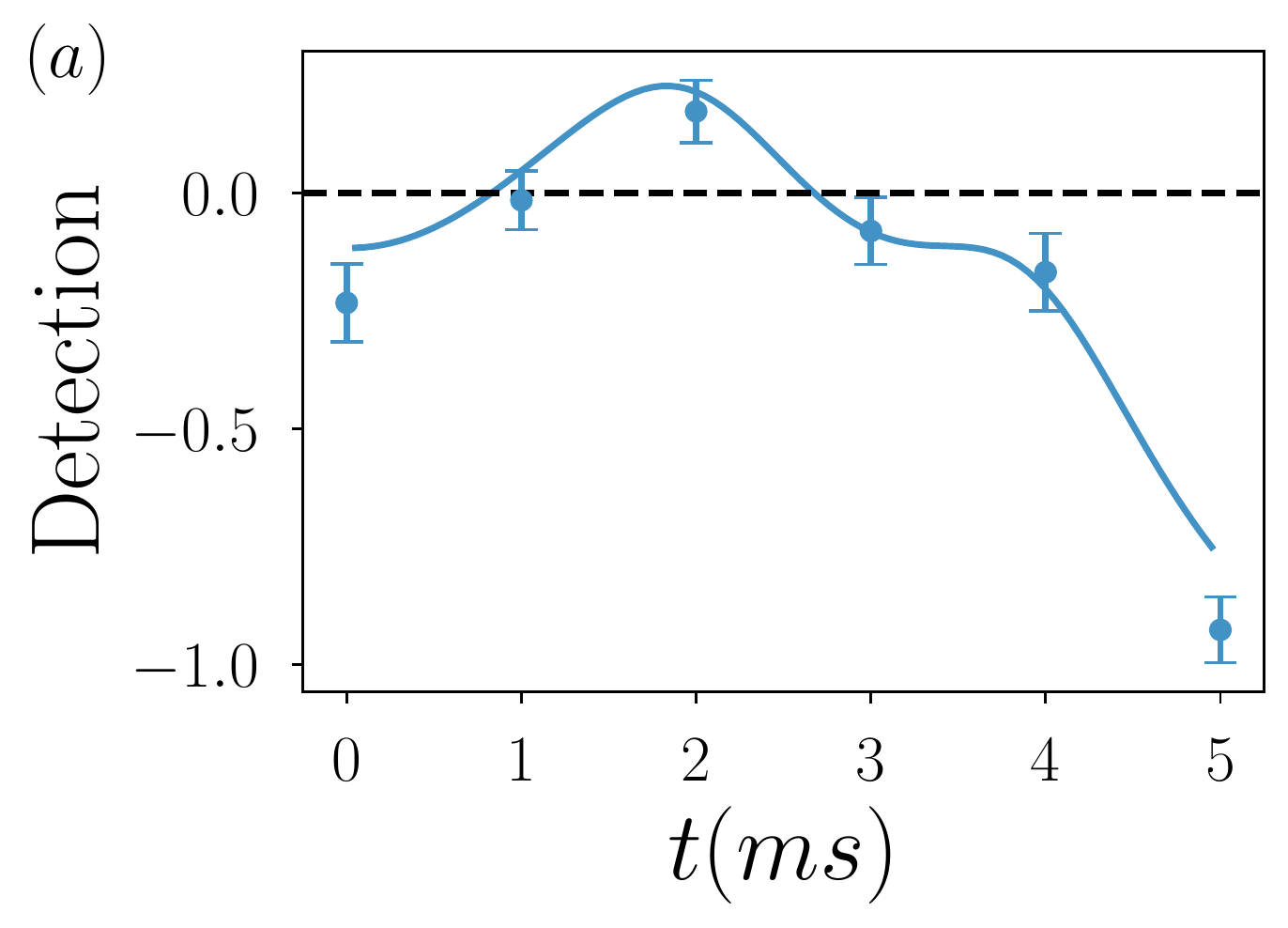}
\end{minipage}
\hskip +5ex
\begin{minipage}[b]{0.37\linewidth}
\centering
\includegraphics[width=\textwidth]{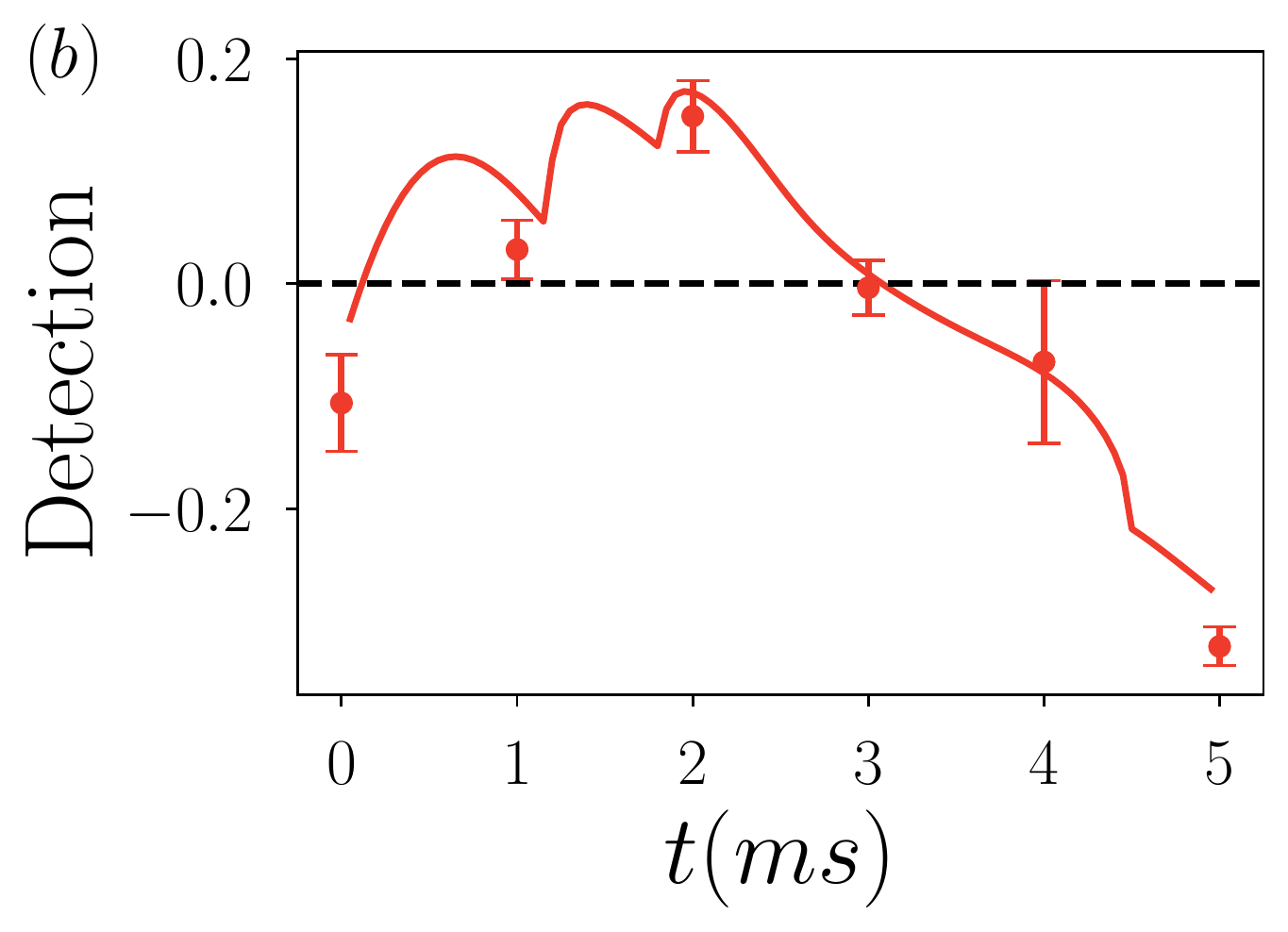}
\end{minipage}
\caption{ Entanglement detection:  We consider a reduced density matrix $\rho_{AB}$ defined on the subsystem $A = [1, \, 2]$ and $B = [3,\,4]$ for the 10-ion experiment of Ref.~\cite{Brydges2019probing}. In panel (a), we plot as detection on the vertical axis, the condition given in Proposition.~\ref{prop-ent} $(S^{(2)}(\rho_{AB}) - R^{(2)}(\rho_{AB}))$, and similarly in panel (b) we use the optimal condition in Eq.~(7) of~\cite{liu2022detecting} $(E_{2n}^\pi(\rho_{AB})-1)$. 
We detect entanglement between the partitions $A$ and $B$ for various times $t$ during the quench dynamics when we observe values greater than 0.
The points show experimental results with the error bars calculated with Jackknife resampling. The solid lines correspond to numerical simulations of the unitary dynamics including dissipation. \label{fig:ent-detection}}
\end{figure}

\section{Symmetry-resolution of operator Schmidt decomposition}\label{app:proofs}
In this appendix we provide a proof of Eqs.~\eqref{eq:rhoABgeneral}--\eqref{eq:qdefinition} of the main text. We start by showing the symmetry resolution of the Schmidt decomposition for a pure state $\ket{\psi} \in \mathcal{H} = \mathcal{H}_A \otimes \mathcal{H}_B$, which is an eigenstate of an additive charge $Q=Q_A+Q_B$ in App.~\ref{app:proof1}. This is equivalent to showing that the reduced density matrix $\rho_A = {\rm Tr}_B | \psi \rangle \langle \psi |$ admits a block diagonal structure with respect to the eigensubspaces of $Q_A$ \cite{moshe2018symmetry}, which we prove in App.~\ref{app:proof2}. In App.~\ref{app:proof3} and App.~\ref{app:proof4} we repeat the same arguments for a generic operator $O$, showing that the decomposition in Eq.~\eqref{eq:rhoABgeneral} is possible.
In App.~\ref{sec:example} we give a simple example of symmetry resolution of an operator for the density matrix of a 3-qubit system.

\subsection{Symmetry-resolved Schmidt decomposition of a pure state}\label{app:proof1}
Let $\ket{\psi} \in \mathcal{H} = \mathcal{H}_A \otimes \mathcal{H}_B$ be a state that satisfies
\begin{equation}
	\label{eq:assumption1_charge}
	(Q_A + Q_B) \left| \psi \right> = 0,
\end{equation}
for Hermitian operators $Q_A$ and $Q_B$ acting on $\mathcal{H}_A$, $\mathcal{H}_B$.

\begin{proposition}\label{app:propB1} There exists a symmetry-resolved Schmidt decomposition
\begin{equation}
	\label{eq:claim1_schmidt}
	\left| \psi \right> =  \sum_q \sum_j  \lambda_j^{(q)}   |  \psi_{A,j}^{(q)} \rangle    |  \psi_{B,j}^{(-q)} \rangle  ,
\end{equation}
with $\langle \psi_{A,j}^{(q)} |  \psi_{A,j'}^{(q')} \rangle  = \delta_{q,q'} \delta_{j,j'}$, $\langle \psi_{B,j}^{(q)} |  \psi_{B,j'}^{(q')} \rangle  = \delta_{q,q'} \delta_{j,j'}$, and 
\begin{equation}
	Q_A  | \psi_{A,j}^{(q)} \rangle  =  q  | \psi_{A,j}^{(q)} \rangle , \qquad  Q_B  | \psi_{B,j}^{(q)} \rangle  =  q  | \psi_{B,j}^{(q)} \rangle .
\end{equation}
\end{proposition}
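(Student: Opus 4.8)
The plan is to reduce the symmetry-resolved Schmidt decomposition to an ordinary Schmidt decomposition carried out block by block within the charge eigenspaces of $Q_A$. The crucial preliminary fact is that the charge-conservation hypothesis $(Q_A+Q_B)\ket{\psi}=0$ forces the reduced density matrix $\rho_A=\mathrm{Tr}_B(\ket{\psi}\bra{\psi})$ to commute with $Q_A$. First I would rewrite the hypothesis as $(Q_A\otimes\mathbb{I})\ket{\psi}=-(\mathbb{I}\otimes Q_B)\ket{\psi}$ and, taking the adjoint and using that $Q_A,Q_B$ are Hermitian, also $\bra{\psi}(Q_A\otimes\mathbb{I})=-\bra{\psi}(\mathbb{I}\otimes Q_B)$. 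Substituting the first identity into $Q_A\rho_A$ and the second into $\rho_A Q_A$, both are seen to equal $-\mathrm{Tr}_B[(\mathbb{I}\otimes Q_B)\ket{\psi}\bra{\psi}]$, which establishes $[Q_A,\rho_A]=0$.

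The one manipulation that deserves care — and which I regard as the main technical obstacle — is the partial trace step: I must justify that $\mathrm{Tr}_B[(\mathbb{I}\otimes Q_B)M]=\mathrm{Tr}_B[M(\mathbb{I}\otimes Q_B)]$ for any operator $M$ on $\mathcal{H}_A\otimes\mathcal{H}_B$, i.e.\ a restricted cyclicity property of the partial trace valid for operators supported on the subsystem being traced out. I would verify this in components (or invoke it as standard); it is precisely what allows the $Q_B$ factor produced from acting on the ket to be moved next to the bra so as to match the expression obtained by acting on the bra.

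With $[Q_A,\rho_A]=0$ in hand, the remainder is routine linear algebra. Since $\rho_A$ and $Q_A$ are commuting Hermitian operators they admit a common orthonormal eigenbasis; the eigenvectors of $\rho_A$ are exactly the $A$-side Schmidt vectors, so simultaneous diagonalization labels each with a definite $Q_A$-eigenvalue $q$, producing the states $\ket{\psi_{A,j}^{(q)}}$ and Schmidt coefficients $\lambda_j^{(q)}$ (square roots of the $\rho_A$-eigenvalues) and hence a decomposition $\ket{\psi}=\sum_{q,j}\lambda_j^{(q)}\ket{\psi_{A,j}^{(q)}}\ket{\psi_{B,j}}$ with orthonormal $B$-side partners furnished by the ordinary Schmidt construction.

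The final step fixes the charge carried by the $B$-side vectors. Applying $Q_A+Q_B$ to this decomposition and using $Q_A\ket{\psi_{A,j}^{(q)}}=q\ket{\psi_{A,j}^{(q)}}$ yields $\sum_{q,j}\lambda_j^{(q)}\ket{\psi_{A,j}^{(q)}}\big(q\ket{\psi_{B,j}}+Q_B\ket{\psi_{B,j}}\big)=0$. Because the $\ket{\psi_{A,j}^{(q)}}$ are linearly independent and all $\lambda_j^{(q)}>0$, each bracketed factor must vanish, so $Q_B\ket{\psi_{B,j}}=-q\ket{\psi_{B,j}}$; relabeling $\ket{\psi_{B,j}}=:\ket{\psi_{B,j}^{(-q)}}$ gives the stated charge assignment and proves Proposition~\ref{app:propB1}. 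The same argument, with the eigenvalue $0$ replaced by a generic total charge, would instead assign the $B$-side vectors the complementary charge.
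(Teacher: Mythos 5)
Your proof is correct, but it takes a genuinely different route from the paper's own proof of this proposition. The paper argues directly at the level of the state: it decomposes $\mathcal{H}_A$ and $\mathcal{H}_B$ into charge eigenspaces of $Q_A$ and $Q_B$, expands $\ket{\psi}$ in the corresponding product basis with coefficients $M^{(q,q')}_{j,j'}$, uses $(Q_A+Q_B)\ket{\psi}=0$ together with linear independence of the basis vectors to conclude $M^{(q,q')}_{j,j'}=0$ whenever $q+q'\neq 0$, and then Schmidt-decomposes each $q$-block separately; the orthonormality and charge assignments on \emph{both} sides then come for free, since each block lives in $\mathcal{H}_A^{(q)}\otimes\mathcal{H}_B^{(-q)}$. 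You instead first establish $[Q_A,\rho_A]=0$ via the restricted cyclicity of the partial trace---this is essentially the paper's Proposition~\ref{app:propB2} and its proof in App.~\ref{app:proof2}, which the paper presents as an equivalent reformulation, remarking only briefly that it ``leads to the same Schmidt decomposition''---then simultaneously diagonalize the commuting Hermitian operators $\rho_A$ and $Q_A$, and finally fix the $B$-side charges by applying $Q_A+Q_B$ to the resulting Schmidt decomposition and projecting onto the orthonormal $A$-vectors, using $\lambda^{(q)}_j>0$. That last step is a genuine addition: the paper's density-matrix route leaves implicit precisely how the $B$-side vectors acquire charge $-q$, and your argument supplies it cleanly. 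Comparing the two: the paper's coefficient-matrix proof is more elementary (no spectral theorem for $\rho_A$ is invoked) and transfers verbatim to the operator case in App.~\ref{app:proof3}, whereas your route connects directly to the standard block-diagonal-$\rho_A$ formulation of symmetry-resolved entanglement~\cite{moshe2018symmetry} and delivers the charge labels automatically, at the modest cost of the degeneracy care you correctly take---a generic eigenbasis of $\rho_A$ need not consist of $Q_A$-eigenvectors, so one must choose a common eigenbasis within each degenerate $\rho_A$-eigenspace, which your appeal to simultaneous diagonalization handles. The partial-trace cyclicity you flag as the main obstacle is standard and immediate in components, so no gap remains.
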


\begin{proof} $\mathcal{H}_A$ can be decomposed into eigenspaces of $Q_A$: $\mathcal{H}_A = \bigoplus_q   \mathcal{H}^{(q)}_A$ where the $q$'s are the eigenvalues of $Q_A$. We pick an orthonormal basis for each $\mathcal{H}^{(q)}_A$: $\{  | e_{A,1}^{(q)} \rangle , | e_{A,2}^{(q)} \rangle , \dots ,   \} $. Notice that if $q \neq q'$, then $\langle e_{A,i}^{(q)} | e_{A,j}^{(q')}  \rangle = 0$. This is because $Q_A$ is Hermitian, so $q \langle e_{A,i}^{(q)} | e_{A,j}^{(q')}  \rangle  =  \langle e_{A,i}^{(q)}  | Q | e_{A,j}^{(q')}  \rangle = q' \langle e_{A,i}^{(q)} | e_{A,j}^{(q')}  \rangle$. Thus the basis vectors of different eigenspaces are orthogonal, and we can use all of them together as an orthonormal basis  that spans the total space $\mathcal{H}_A$. Similarly, one can pick a basis for $\mathcal{H}_B$ such that every basis vector is an eigenstate of $Q_B$.\\
Any state $\left|  \psi \right> \in \mathcal{H} = \mathcal{H}_A \otimes \mathcal{H}_B$ can be written in the form
\begin{equation}
	| \psi \rangle = \sum_{q} \sum_j \sum_{q'} \sum_{j'}  M^{(q,q')}_{j,j'}  |  e_{A,j}^{(q)} \rangle \otimes | e_{B,j'}^{(q')} \rangle ,
\end{equation}
with some components $M^{(q,q')}_{j,j'}$.
If $\left| \psi \right>$ satisfies Eq.~\eqref{eq:assumption1_charge}, then
\begin{equation}
	0 = (Q_A + Q_B) | \psi \rangle = \sum_{q} \sum_j \sum_{q'} \sum_{j'} (q+q')  M^{(q,q')}_{j,j'}   |  e_{A,j}^{(q)} \rangle \otimes | e_{B,j'}^{(q')} \rangle ,
\end{equation}
and linear independence of the basis vectors implies
\begin{equation}
	 M^{(q,q')}_{j,j'} = 0   \quad {\rm if} \quad q + q' \neq 0 .
\end{equation}
Thus, for a state $| \psi \rangle$ that satisfies Eq.~\eqref{eq:assumption1_charge},
\begin{equation}
	| \psi \rangle = \sum_{q} \sum_j \sum_{j'}  M^{(q,-q)}_{j,j'}  |  e_{A,j}^{(q)} \rangle \otimes | e_{B,j'}^{(-q)} \rangle .
\end{equation}
Then we can treat the different $q$-blocks separately. For every $q$, we  take a Schmidt decomposition of the state
\begin{equation}
	\sum_j \sum_{j'}  M^{(q,-q)}_{j,j'}  |  e_{A,j}^{(q)} \rangle \otimes | e_{B,j'}^{(-q)} \rangle =  \sum_j  \lambda^{(q)}_j |  \psi_{A,j}^{(q)} \rangle \otimes | \psi_{B,j}^{(-q)} \rangle ,
\end{equation}
and putting all these together we obtain the \emph{symmetry-resolved Schmidt decomposition} Eq.~(\ref{eq:claim1_schmidt}). 
\end{proof}

\subsection{Block-diagonal form of the reduced density matrix}
\label{app:proof2}

We now prove the block-diagonal structure of the reduced density matrix, which is equivalent to the symmetry-resolved Schmidt decomposition for the evaluation of the entanglement in the charge sectors \cite{moshe2018symmetry}.
As before, let $\left| \psi \right> \in \mathcal{H} = \mathcal{H}_A \otimes \mathcal{H}_B$ be a state that satisfies
\begin{equation}
	\label{eq:assumption2_charge}
	(Q_A + Q_B) \left| \psi \right> = 0,
\end{equation}
for Hermitian operators $Q_A$ and $Q_B$ acting on $\mathcal{H}_A$, $\mathcal{H}_B$.

\begin{proposition}\label{app:propB2} The reduced density matrix $\rho_A = {\rm Tr}_B | \psi \rangle \langle \psi |$ commutes with $Q_A$.
\end{proposition}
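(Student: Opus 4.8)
The plan is to deduce the claim directly from the symmetry-resolved Schmidt decomposition established in Proposition~\ref{app:propB1}. Writing $|\psi\rangle = \sum_q \sum_j \lambda_j^{(q)} |\psi_{A,j}^{(q)}\rangle |\psi_{B,j}^{(-q)}\rangle$ and forming $|\psi\rangle\langle\psi|$, I would trace out $B$ and use the orthonormality relation $\langle \psi_{B,j'}^{(-q')}|\psi_{B,j}^{(-q)}\rangle = \delta_{q,q'}\delta_{j,j'}$. All cross terms with $q\neq q'$ or $j\neq j'$ then vanish, leaving $\rho_A = \sum_q \sum_j (\lambda_j^{(q)})^2 |\psi_{A,j}^{(q)}\rangle\langle\psi_{A,j}^{(q)}|$. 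Since each $|\psi_{A,j}^{(q)}\rangle$ is an eigenvector of $Q_A$ with eigenvalue $q$, this expression is manifestly block-diagonal with respect to the eigenspaces of $Q_A$, so $[Q_A,\rho_A]=0$ follows immediately.

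I would also present a second, self-contained argument that does not invoke Proposition~\ref{app:propB1}, since it clarifies why the two statements are equivalent. The idea is to compute $[Q_A,\rho_A]$ directly. Recalling that $Q_A$ abbreviates $Q_A\otimes\mathbb{I}_B$ and $Q_B$ abbreviates $\mathbb{I}_A\otimes Q_B$, the assumption $(Q_A+Q_B)|\psi\rangle=0$ gives $(Q_A\otimes\mathbb{I}_B)|\psi\rangle\langle\psi| = -(\mathbb{I}_A\otimes Q_B)|\psi\rangle\langle\psi|$; taking the Hermitian conjugate of the assumption (and using that $Q_A,Q_B$ are Hermitian) likewise yields $|\psi\rangle\langle\psi|(\mathbb{I}_A\otimes Q_B) = -|\psi\rangle\langle\psi|(Q_A\otimes\mathbb{I}_B)$. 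Because $Q_A$ acts trivially on $B$, it commutes with the partial trace, so $Q_A\rho_A = {\rm Tr}_B[(Q_A\otimes\mathbb{I}_B)|\psi\rangle\langle\psi|]$ and $\rho_A Q_A = {\rm Tr}_B[|\psi\rangle\langle\psi|(Q_A\otimes\mathbb{I}_B)]$.

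Combining these, I would rewrite $Q_A\rho_A = -{\rm Tr}_B[(\mathbb{I}_A\otimes Q_B)|\psi\rangle\langle\psi|]$ and then move $Q_B$ across the partial trace using its cyclicity for operators supported on $B$, obtaining $-{\rm Tr}_B[|\psi\rangle\langle\psi|(\mathbb{I}_A\otimes Q_B)] = {\rm Tr}_B[|\psi\rangle\langle\psi|(Q_A\otimes\mathbb{I}_B)] = \rho_A Q_A$. Hence $[Q_A,\rho_A]=0$, as claimed.

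The main subtlety to handle carefully is the cyclicity of the partial trace: the identity ${\rm Tr}_B(O_B M) = {\rm Tr}_B(M O_B)$ holds only when $O_B$ is supported entirely on the traced subsystem $B$, which is exactly the case for $\mathbb{I}_A\otimes Q_B$, while the steps that pull $Q_A\otimes\mathbb{I}_B$ into and out of ${\rm Tr}_B$ rely on $Q_A$ acting as the identity on $B$. I would state these two facts explicitly so that the bookkeeping of which operator is supported on which subsystem stays transparent; everything else is routine.
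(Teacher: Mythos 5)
Your proposal is correct, and your second argument is essentially the paper's own proof: the paper computes $[Q_A, \mathrm{Tr}_B\,|\psi\rangle\langle\psi|] = \mathrm{Tr}_B\big([Q_A+Q_B,|\psi\rangle\langle\psi|] - [Q_B,|\psi\rangle\langle\psi|]\big) = -\mathrm{Tr}_B(Q_B|\psi\rangle\langle\psi|) + \mathrm{Tr}_B(|\psi\rangle\langle\psi|Q_B)$, which vanishes by cyclicity of the partial trace over $B$ --- precisely the manipulation you carry out, with the same two supporting facts you flag (Hermiticity of $Q_A+Q_B$ to kill the first commutator, and partial cyclicity only for operators supported on $B$) left implicit in the paper.

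Your first argument is a genuinely different, equally valid route, and it is worth noting how it relates to the paper's logical architecture. The paper treats Proposition~\ref{app:propB2} as an \emph{independent} statement and remarks afterwards that the block-diagonal $\rho_A$ can be eigendecomposed sector by sector to \emph{recover} the symmetry-resolved Schmidt decomposition of Proposition~\ref{app:propB1}; you run the implication in the opposite direction, deducing $\rho_A = \sum_q \sum_j (\lambda_j^{(q)})^2\, |\psi_{A,j}^{(q)}\rangle\langle\psi_{A,j}^{(q)}|$ from Proposition~\ref{app:propB1} via orthonormality of the $|\psi_{B,j}^{(-q)}\rangle$, whence $[Q_A,\rho_A]=0$ is manifest. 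There is no circularity, since the paper proves Proposition~\ref{app:propB1} directly from the charge-graded basis expansion, not from Proposition~\ref{app:propB2}. What your derivation buys is a concrete exhibition of the block structure (with the blocks' spectra read off as $\{(\lambda_j^{(q)})^2\}$, exactly the weights entering $p(q)$ and the SROE); what the paper's direct commutator argument buys is independence from the Schmidt machinery, which is why it transfers verbatim to the superoperator setting of Proposition~\ref{app:propB4}, where the same three-line computation with $\mathcal{Q}_A$, $\mathcal{Q}_B$ and $\mathrm{Tr}_{B\otimes B}$ proves block-diagonality of the super-reduced density matrix.
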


\begin{proof} We notice that \begin{eqnarray}
	\nonumber			[Q_A,  {\rm Tr}_B | \psi \rangle \langle \psi | ] &=&  {\rm Tr}_B (  [Q_A, | \psi \rangle \langle \psi | ] ) \\
	\nonumber			&=&  {\rm Tr}_B (  [Q_A+Q_B, | \psi \rangle \langle \psi | ]  -   [Q_B, | \psi \rangle \langle \psi | ]   ) \\
				&=& - {\rm Tr}_B (    Q_B  | \psi \rangle \langle \psi |   ) +  {\rm Tr}_B (     | \psi \rangle \langle \psi |  Q_B  ) ,
 		\end{eqnarray}
and this vanishes because of the cyclicity of the trace. 
\end{proof}

We can thus write an eigenvalue decomposition for the reduced density matrix with blocks labelled by the charges $q$ of $Q_A$. This leads to the same Schmidt  decomposition as in Eq.~\eqref{eq:claim1_schmidt}.

\subsection{Symmetry-resolved operator Schmidt decomposition}
\label{app:proof3}
What has been done so fare for the state $\ket{\psi}$ can be generalized to operators. In particular, let \mbox{$O  \in {\rm End} (\mathcal{H}) = {\rm End} (\mathcal{H}_A)  \otimes  {\rm End} (\mathcal{H}_B) $} be an operator that satisfies
\begin{equation}
	\label{eq:assumption3_charge}
	[ Q_A + Q_B, O ] = 0,
\end{equation}
for Hermitian operators $Q_A$ and $Q_B$.\\

\begin{proposition}\label{app:propB3}
There exists a symmetry-resolved operator Schmidt decomposition
\begin{equation}
	\label{eq:claim2_schmidt}
	\frac{O}{\sqrt{ {\mathrm {Tr}} (O^\dagger O) }} =  \sum_q \sum_j  \lambda_j^{(q)}   O_{A,j}^{(q)} \otimes  O_{B,j}^{(-q)}  ,
\end{equation}
with $ {\rm Tr}  (O_{A,j}^{(q)})^\dagger   O_{A,j'}^{(q')}   = \delta_{q,q'} \delta_{j,j'}$, $ {\rm Tr}  (O_{B,j}^{(q)})^\dagger   O_{B,j'}^{(q')}   = \delta_{q,q'} \delta_{j,j'}$, and
\begin{equation}
	[Q_A ,  O_{A,j}^{(q)} ] =  q O_{A,j}^{(q)}, \qquad [Q_B ,  O_{B,j}^{(q)} ] =  q O_{B,j}^{(q)}   .
\end{equation}
\end{proposition}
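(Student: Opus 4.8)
The plan is to reduce this operator statement to the pure-state result of Proposition~\ref{app:propB1} by passing through the vectorization (Choi--Jamio{\l}kowski) map of Eq.~\eqref{eq:vectorisation}. Writing $O = \sum_{ij} O_{ij} |i\rangle\langle j|$, I would send it to the vector $|O\rangle = \sum_{ij} O_{ij}|i\rangle|j\rangle$, splitting each basis label into its $A$ and $B$ parts and regrouping the four resulting tensor factors so that the two copies attached to $A$ form $\tilde{\mathcal{H}}_A = \mathcal{H}_A \otimes \mathcal{H}_A^*$ and the two attached to $B$ form $\tilde{\mathcal{H}}_B = \mathcal{H}_B \otimes \mathcal{H}_B^*$. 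Under this identification the Hilbert--Schmidt inner product becomes the ordinary inner product, $\mathrm{Tr}(O^\dagger O') = \langle O | O'\rangle$; in particular $\sqrt{\mathrm{Tr}(O^\dagger O)}$ is the norm of $|O\rangle$, and an operator Schmidt decomposition across $A$ vs.\ $B$ is precisely an ordinary Schmidt decomposition of $|O\rangle$ across $\tilde{\mathcal{H}}_A$ vs.\ $\tilde{\mathcal{H}}_B$.

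The next step is to translate the hypothesis $[Q_A + Q_B, O] = 0$ into pure-state language. Using the standard vectorization identity $|AOB\rangle = (A \otimes B^T)|O\rangle$, I would note that left multiplication by $Q$ becomes $Q \otimes \mathbb{1}$ while right multiplication becomes $\mathbb{1}\otimes Q^T$, so that $|[Q,O]\rangle = \mathcal{Q}|O\rangle$ with the charge superoperator $\mathcal{Q} = Q \otimes \mathbb{1} - \mathbb{1}\otimes Q^T$ of Eq.~\eqref{eq:charge_superop}. Applying this to the total charge and regrouping the factors, the condition $[Q_A+Q_B,O]=0$ becomes $(\mathcal{Q}_A + \mathcal{Q}_B)|O\rangle = 0$, where $\mathcal{Q}_A = Q_A \otimes \mathbb{1} - \mathbb{1}\otimes Q_A^T$ acts on $\tilde{\mathcal{H}}_A$ and $\mathcal{Q}_B$ acts analogously on $\tilde{\mathcal{H}}_B$. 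A short computation using Hermiticity of $Q_A$ (which gives $Q_A^T = \overline{Q_A}$) shows that $\mathcal{Q}_A$ and $\mathcal{Q}_B$ are themselves Hermitian, so the hypotheses of the pure-state proposition are met.

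With these ingredients the claim follows by direct invocation of Proposition~\ref{app:propB1} under the substitutions $\mathcal{H}_A \to \tilde{\mathcal{H}}_A$, $\mathcal{H}_B \to \tilde{\mathcal{H}}_B$, $Q_A \to \mathcal{Q}_A$, $Q_B \to \mathcal{Q}_B$, and $|\psi\rangle \to |O\rangle/\sqrt{\mathrm{Tr}(O^\dagger O)}$. The proposition yields a symmetry-resolved Schmidt decomposition $|O\rangle/\sqrt{\mathrm{Tr}(O^\dagger O)} = \sum_q \sum_j \lambda_j^{(q)} |\psi_{A,j}^{(q)}\rangle\, |\psi_{B,j}^{(-q)}\rangle$ with orthonormal families obeying $\mathcal{Q}_A|\psi_{A,j}^{(q)}\rangle = q|\psi_{A,j}^{(q)}\rangle$ and $\mathcal{Q}_B|\psi_{B,j}^{(-q)}\rangle = -q|\psi_{B,j}^{(-q)}\rangle$. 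The final step is to undo the vectorization: the Schmidt vectors become operators $O_{A,j}^{(q)}$ and $O_{B,j}^{(-q)}$, orthonormality in $\tilde{\mathcal{H}}_A$ translates back to $\mathrm{Tr}((O_{A,j}^{(q)})^\dagger O_{A,j'}^{(q')}) = \delta_{q,q'}\delta_{j,j'}$, and the superoperator eigenvalue equations translate into the commutator conditions $[Q_A, O_{A,j}^{(q)}] = q\,O_{A,j}^{(q)}$ and $[Q_B, O_{B,j}^{(q)}] = q\,O_{B,j}^{(q)}$, which is exactly Eq.~\eqref{eq:claim2_schmidt}.

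I expect the only delicate point to be bookkeeping: keeping the transpose/conjugation conventions consistent so that the superoperators $\mathcal{Q}_A,\mathcal{Q}_B$ come out Hermitian (required to apply Proposition~\ref{app:propB1}), and checking carefully that after regrouping the tensor factors the Hilbert--Schmidt inner product on each subsystem maps to the standard inner product on $\tilde{\mathcal{H}}_A$ and $\tilde{\mathcal{H}}_B$ separately. Everything else is a routine transcription through the vectorization isomorphism.
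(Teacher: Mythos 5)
Your proof is correct and takes essentially the same route as the paper: the paper's own proof of this proposition is exactly the Proposition~\ref{app:propB1} argument rerun with $\mathcal{H}_A$ replaced by $\mathrm{End}(\mathcal{H}_A)$ equipped with the Hilbert--Schmidt inner product and $Q_A$ replaced by the Hermitian superoperator $[Q_A,\,\cdot\,]$ (equivalently $\mathcal{Q}_A = Q_A\otimes\mathbbm{1}-\mathbbm{1}\otimes Q_A^T$), which is precisely the identification your vectorization-and-regrouping makes explicit. The only difference is presentational --- you invoke Proposition~\ref{app:propB1} as a black box through the isomorphism, while the paper repeats its eigenspace-decomposition argument verbatim on the operator space --- and your flagged bookkeeping points (Hermiticity of $\mathcal{Q}_A,\mathcal{Q}_B$, factorization of the Hilbert--Schmidt inner product) all check out.
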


\begin{proof} This is exactly the same statement as above, with $\mathcal{H}_A$ replaced by ${\rm End} (\mathcal{H}_A)$ equipped with the Hilbert-Schmidt inner product. To be more explicit, ${\rm End}( \mathcal{H}_A)$ can be decomposed into eigenspaces of the commutator \mbox{$[Q_A, . ]$: ${\rm End}( \mathcal{H}_A) = \bigoplus_q  {\rm End}(  \mathcal{H}^{(q)}_A)$} where the $q$'s are the eigenvalues of $[Q_A, . ]$. We pick an orthonormal basis of ${\rm End}(  \mathcal{H}^{(q)}_A)$: $\{ E^{(q)}_{A,1} , E^{(q)}_{A,2} , \dots ,   \} $. As above, we can combine all these basis vectors with different $q$'s into a basis for the total Hilbert space ${\rm End}( \mathcal{H}_A)$. We do the same for part $B$.\\
The operator $O \in {\rm End} (\mathcal{H})$ can then be written in that basis, with some components $ M^{(q,q')}_{j,j'} $,
\begin{equation}
	\frac{O}{\sqrt{ {\mathrm{ Tr}} (O^\dagger O )}}  = \sum_{q} \sum_j \sum_{q'} \sum_{j'}  M^{(q,q')}_{j,j'}   E^{(q)}_{A,j  } \otimes  E^{(q')}_{B,j'} ,
\end{equation}
where, by definition,
\begin{equation}
	[ Q_A, E^{(q)}_{A,j  } ] = q  E^{(q)}_{A,j  } , \qquad  [ Q_B, E^{(q)}_{B,j  } ] = q  E^{(q)}_{B,j  } .
\end{equation}

If $O$ satisfies Eq.~\eqref{eq:assumption3_charge}, then
\begin{equation}
	 M^{(q,q')}_{j,j'} = 0   \quad {\rm if} \quad q + q' \neq 0 ,
\end{equation}
for the same reason as above. Again, the $q$-blocks can be separately Schmidt-decomposed, and this leads to Eq.~(\ref{eq:claim2_schmidt}).
\end{proof}

\subsection{Block-diagonal form of the \emph{super-reduced density matrix}}\label{app:proof4}

Let $O \in {\rm End} (\mathcal{H}) = {\rm End} (\mathcal{H}_A)  \otimes  {\rm End} (\mathcal{H}_B)$ be an operator that satisfies
\begin{equation}
	\label{eq:assumption4_charge}
	[ Q_A + Q_B, O ] = 0 ,
\end{equation}
for Hermitian operators $Q_A$ and $Q_B$.

\begin{proposition}\label{app:propB4} The super-reduced density matrix ${\mathrm Tr}_{B\otimes B}  (|  O \rangle \langle O |)$ commutes with the supercharge \mbox{$\mathcal{Q}_A = Q_A \otimes 1 - 1 \otimes Q_A^T$}.\\
\end{proposition}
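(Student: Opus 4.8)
The plan is to reduce this claim to the already-established Proposition~\ref{app:propB2} by passing to the vectorised picture, exactly in the spirit in which Proposition~\ref{app:propB3} reduces to Proposition~\ref{app:propB1}. First I would vectorise the operator $O \mapsto \ket{O}$ via the Choi--Jamio{\l}kowski map of Eq.~\eqref{eq:vectorisation}. The vector $\ket{O}$ then lives in $\mathcal{H}_A \otimes \mathcal{H}_B \otimes \mathcal{H}_A \otimes \mathcal{H}_B$ (ket copies of $A,B$ followed by bra copies of $A,B$); regrouping the two copies of $A$ and the two copies of $B$ identifies this space with $(\mathcal{H}_A \otimes \mathcal{H}_A) \otimes (\mathcal{H}_B \otimes \mathcal{H}_B)$, and then $\mathrm{Tr}_{B\otimes B}$ traces out precisely the second factor.

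The key step is to translate the operator identity $[Q_A + Q_B, O] = 0$ into a condition on $\ket{O}$. For this I would use the standard vectorisation rule $\ket{AOB} = (A \otimes B^T)\ket{O}$, checked on matrix entries. Applying it to left and right multiplication by $Q_{AB} = Q_A + Q_B$ gives
\begin{equation}
\ket{[Q_{AB},O]} = \left(Q_{AB}\otimes \mathbbm{1} - \mathbbm{1}\otimes Q_{AB}^T\right)\ket{O} = \mathcal{Q}_{AB}\ket{O},
\end{equation}
with $\mathcal{Q}_{AB}$ the charge superoperator of Eq.~\eqref{eq:charge_superop}, so the hypothesis $[Q_{AB},O]=0$ is equivalent to $\mathcal{Q}_{AB}\ket{O} = 0$. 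Expanding $Q_{AB}=Q_A+Q_B$ and regrouping the factors as above, one finds $\mathcal{Q}_{AB} = \mathcal{Q}_A + \mathcal{Q}_B$, where $\mathcal{Q}_A = Q_A\otimes\mathbbm{1} - \mathbbm{1}\otimes Q_A^T$ acts on the $A\otimes A$ factor and the analogous $\mathcal{Q}_B$ acts on the $B\otimes B$ factor.

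At this point the situation is a verbatim copy of Proposition~\ref{app:propB2}: the vector $\ket{O}$ plays the role of $\ket{\psi}$, the subsystems $A\otimes A$ and $B\otimes B$ play the roles of $A$ and $B$, and the supercharges $\mathcal{Q}_A,\mathcal{Q}_B$ play the roles of $Q_A,Q_B$. The one hypothesis of Proposition~\ref{app:propB2} left to check is that $\mathcal{Q}_A$ and $\mathcal{Q}_B$ are Hermitian; this holds because $Q_A$ Hermitian implies $Q_A^T = \overline{Q_A}$ is Hermitian, so each supercharge is a difference of Hermitian operators. Invoking Proposition~\ref{app:propB2} then yields directly that $\mathrm{Tr}_{B\otimes B}(\ket{O}\bra{O})$ commutes with $\mathcal{Q}_A$, which is the claim.

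The only genuine obstacle is bookkeeping: getting the transpose in $\ket{AOB} = (A\otimes B^T)\ket{O}$ right and tracking the factor permutation that produces the clean splitting $\mathcal{Q}_{AB} = \mathcal{Q}_A + \mathcal{Q}_B$. Once the vectorised charge condition $\mathcal{Q}_{AB}\ket{O}=0$ is in hand, no further computation is needed. An equally valid alternative, avoiding the citation, would be to repeat the short trace-cyclicity argument of Proposition~\ref{app:propB2} directly, computing $[\mathcal{Q}_A,\mathrm{Tr}_{B\otimes B}(\ket{O}\bra{O})] = -\mathrm{Tr}_{B\otimes B}(\mathcal{Q}_B\ket{O}\bra{O}) + \mathrm{Tr}_{B\otimes B}(\ket{O}\bra{O}\mathcal{Q}_B)$ and observing that this vanishes by cyclicity of the partial trace over $B\otimes B$.
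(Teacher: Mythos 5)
Your proposal is correct and is essentially the same argument as the paper's: the paper proves Proposition~\ref{app:propB4} by exactly the trace-cyclicity computation you offer as your ``alternative,'' namely writing $[\mathcal{Q}_A,\mathrm{Tr}_{B\otimes B}(\ket{O}\bra{O})] = -\mathrm{Tr}_{B\otimes B}(\mathcal{Q}_B\ket{O}\bra{O})+\mathrm{Tr}_{B\otimes B}(\ket{O}\bra{O}\mathcal{Q}_B)$ and invoking cyclicity of the partial trace over $B\otimes B$ --- which is precisely what your reduction to Proposition~\ref{app:propB2} repackages, since that proposition is proved by the identical computation. A minor point in your favour: you spell out the step $[Q_{AB},O]=0 \Leftrightarrow \mathcal{Q}_{AB}\ket{O}=0$ via the rule $\ket{AOB}=(A\otimes B^T)\ket{O}$, together with the splitting $\mathcal{Q}_{AB}=\mathcal{Q}_A+\mathcal{Q}_B$ and the Hermiticity of the supercharges, all of which the paper's proof uses silently when it drops the $[\mathcal{Q}_A+\mathcal{Q}_B,\ket{O}\bra{O}]$ term.
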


\begin{proof} We notice that

\begin{eqnarray}
	\nonumber			[\mathcal{Q_A},  {\rm Tr}_{B\otimes B} \ket{O}\bra{O}] &=&  {\rm Tr}_{B\otimes B} (  [\mathcal{Q}_A, \ket{O}\bra{O} ] ) \\
	\nonumber			&=&  {\rm Tr}_{B\otimes B} (  [\mathcal{Q}_A+\mathcal{Q}_B, \ket{O}\bra{O} ]  -   [\mathcal{Q}_B, \ket{O}\bra{O} ]   ) \\
				&=& - {\rm Tr}_{B\otimes B} (    \mathcal{Q}_B  \ket{O}\bra{O}   ) +  {\rm Tr}_{B\otimes B} (     \ket{O}\bra{O}
				\mathcal{Q}_B  ) ,
		\end{eqnarray}
and this vanishes because of the cyclicity of the trace.
\end{proof}

We can thus write an eigenvalue decomposition for the super-reduced density matrix with blocks labelled by the charges $q$ of $\mathcal{Q}_A$. This leads to the same Schmidt  decomposition as in Eq.~\eqref{eq:claim2_schmidt}.

\subsection{3-qubit example}\label{sec:example}

To get familiar with the symmetry resolution of an operator, let us look at a minimal illustrative example: a 3-qubit system, whose qubits are labeled $A$, $B$ and $C$, in a state of the form (\mbox{$|\alpha|^2+|\beta|^2+ |\gamma|^2=1$})
\begin{equation}\label{eq:psi}
    \ket{\psi}_{ABC} =  \alpha \ket{100} +  \beta \ket{010} + \gamma \ket{001}.
\end{equation}
This is an eigenstate of the total charge operator \mbox{$Q_{ABC}= \sum_{j=A,B,C} Q_j$} with $Q_j=\ketbra{1}{1}$. The reduced density matrix of the subsystem $AB$ is
\begin{equation}
    \label{eq:rhoAB1}
    \rho_{AB} = \left(\alpha \ket{10} + \beta \ket{01} \right)\left(\alpha^* \bra{10} + \beta^* \bra{01} \right) + |\gamma|^2 \ket{00} \bra{00},
\end{equation}
which commutes with $Q_{A} + Q_B$. Therefore the definitions introduced in the main text can be used and it makes sense to study the SROE of the reduced density matrix $\rho_{AB}$ in this minimal example.

Let us proceed by vectorizing the reduced density matrix $\rho_{AB}$:
\begin{equation}
\begin{aligned}
    \ket{\rho_{AB}} \, = \, & |\gamma|^2 \ket{00}_A\ket{00}_B + |\beta|^2 \ket{00}_A \ket{11}_B
     + |\alpha|^2 \ket{11}_A \ket{00}_B \\
    &
   + \alpha^* \beta \ket{01}_A \ket{10}_B +\alpha \beta^* \ket{10}_A \ket{01}_B.
\end{aligned}
\end{equation}
From this, we can build the object $\ket{\rho_{AB}}\bra{\rho_{AB}}$ and take the trace over the subsystem $B$. This gives
\begin{equation}
\begin{aligned}
    \mathrm{Tr}_B \ket{\rho_{AB}}\bra{\rho_{AB}} = &|\beta|^4 \ket{00}\bra{00} +|\alpha|^2|\beta|^2 (\ket{01}\bra{01}+\ket{10}\bra{10}) \\
    &
   +(|\gamma|^2\ket{00} + |\alpha|^2\ket{11})(|\gamma|^2\bra{00} + |\alpha|^2\bra{11}).
    \end{aligned}
\end{equation}
By reshuffling the elements of the basis, we find out that the matrix has a block-diagonal decomposition as 
\begin{equation}
\label{eq:matrixtrmod}
\mathrm{Tr}_B \ket{\rho_{AB}}\bra{\rho_{AB}} =
\begin{pmatrix}
|\alpha|^2|\beta|^2
\end{pmatrix}_{ q=1}
\oplus 
\begin{pmatrix} 
|\alpha|^2|\beta|^2
\end{pmatrix}_{ q=-1} \oplus \\
\begin{pmatrix}
|\beta|^4+ |\gamma|^4 &   |\alpha|^2|\gamma|^2    \\
 |\alpha|^2|\gamma|^2  & |\alpha|^4 
\end{pmatrix}_{ q=0} ,
\end{equation}
where each block lives in the eigensubspace of the supercharge operator ${\mathcal Q}_A = Q_{A}\otimes \mathbbm{1}-\mathbbm{1}\otimes Q_{A}^T$ (with the corresponding eigenvalues indicated as subscripts).

From this we can treat each block separately, as in the proofs of Propositions~\ref{app:propB1} or~\ref{app:propB3} above. We thus obtain the
following ‘symmetry resolved operator Schmidt decomposition':
\begin{equation}   \label{eq:rhoAB3}
\frac{\rho_{AB}}{\sqrt{ {\rm Tr} [\rho_{AB}^2] }} = \lambda^{(1)}  ~O_{A}^{(1)} \otimes  O_{B}^{(-1)}  
    + \lambda^{(-1)} ~ O_{A}^{(-1)} \otimes  O_{B}^{(1)} + \sum_{j=1,2} \lambda^{(0)}_{j} O_{A,j}^{(0)} \otimes  O_{B,j}^{(0)} ,
\end{equation}
where the Schmidt coefficients $\lambda^{(q)}_{j}$ are given by
\begin{equation}
\lambda^{(1)}=\lambda^{(-1)} = \frac12\sqrt{\chi}, \quad \lambda^{(0)}_{j} = \frac{1}{2}(1\pm\sqrt{1-\chi})
\end{equation}
with $\chi = \frac{4|\alpha|^2|\beta|^2}{{\rm Tr} [\rho_{AB}^2]}$, ${\rm Tr}[\rho_{AB}^2]=(|\alpha|^2+|\beta|^2)^2+|\gamma|^4$, and the operators $O_{A,j}^{(q)}$ and $O_{B,j}^{(q)}$, which form orthonormal sets, read
\begin{equation}\label{eq:rhoAB4}
    \begin{split}
O_A^{(1)} = \ket{1} \bra{0}_A,& \quad O_B^{(-1)} = \ket{0} \bra{1}_B,\\
   O_A^{(-1)} = \ket{0} \bra{1}_A,&\quad O_B^{(1)} = \ket{1} \bra{0}_B,\\
    O_{A,1}^{(0)} = a_{+}\ket{0} \bra{0}_A+ a_{-}\ket{1} \bra{1}_A,&\quad O_{B,1}^{(0)} = b_{+}\ket{0} \bra{0}_B+ b_{-}\ket{1} \bra{1}_B ,\\ 
   O_{A,2}^{(0)}=a_{-}\ket{0} \bra{0}_A - a_{+}\ket{1} \bra{1}_A,& \quad O_{B,2}^{(0)} = -b_{-}\ket{0} \bra{0}_B+ b_{+}\ket{1} \bra{1}_B,
    \end{split}
\end{equation}
for some real parameters $a_\pm, b_\pm$ that are functions of $|\alpha|, |\beta|$ and $|\gamma|$ (which can be obtained explicitly, but are rather tedious to write).

As a more specific example, if we fix the parameters in Eq.~\eqref{eq:psi} as for instance $\alpha=\sqrt{5/12}$, $\beta=1/2$, $\gamma=1/\sqrt{3}$, then the coefficients $\lambda_j^{(q)}, a_{\pm}, b_{\pm}$ introduced above are found to be:
\begin{equation}\label{eq:eigs}
\begin{split}
&\lambda^{(1)}=\lambda^{(-1)} = \frac{\sqrt{3}}{4}, \quad \lambda^{(0)}_1 = \frac{3}{4}, \quad
\lambda^{(0)}_2 = \frac{1}{4}, \\
&a_+ = a_- = \frac{1}{\sqrt{2}},\quad
b_{+}=\frac{3}{\sqrt{10}},\quad b_{-}=\frac{1}{\sqrt{10}} .
\end{split}
\end{equation}
From Eq.~\eqref{eq:eigs}, we then get the following expressions for the OE
\begin{equation}
   S(\rho_{AB})=4\log 2-\frac{3}{2}\log 3
\end{equation}
and SROE
\begin{equation}
    S_{\pm 1}(\rho_{AB})=0,\quad S_{0}(\rho_{AB})=\log 10-\frac{9}{5}\log 3, 
\end{equation}
and we can check that Eq.~\eqref{eq:sum_rule} is indeed satisfied.

To finish with this example, let us comment on the non-hermiticity of the operators $O_{A/B}^{(\pm 1)}$ introduced in Eq.~\eqref{eq:rhoAB4} above. As mentioned in the main text, in general it is always possible to impose that the operators $O_{A/B,i}$ that enter the standard Schmidt decomposition, Eq.~\eqref{eq:schmidtO0} or~\eqref{eq:schmidtO}, of a Hermitian operator must be Hermitian themselves. However, this can no longer be ensured when imposing the symmetry-resolved form of the Schmidt decomposition, as in Eqs.~\eqref{eq:rhoABgeneral}--\eqref{eq:qdefinition}.

To see this, recall that while the Schmidt coefficients are unique, the operators that are used in the standard (non-symmetry-resolved) decomposition are not: they are only unique---up to some complementary phases (or up to a sign if one wants them to be Hermitian)---when the corresponding Schmidt coefficient has multiplicity one. E.g. in our example above, the operators $O_{A/B,j}^{(0)}$, corresponding to the different Schmidt coefficients $\lambda^{(0)}_j$, are unique (up to a phase or a sign), but there remains some freedom, in a standard Schmidt decomposition, for the choice of operators $O_{A/B}^{(\pm 1)}$ corresponding to the multiplicity-2 Schmidt coefficients $\lambda^{(1)}=\lambda^{(-1)}$.

In fact, one finds that the most general form of the non-symmetry-resolved Schmidt decomposition of $\rho_{AB}$ above would be as in Eq.~\eqref{eq:rhoAB3}, but replacing $O_{A/B}^{(\pm 1)}$ from Eq.~\eqref{eq:rhoAB4} by
\begin{equation}
    \begin{split}
O_A^{(1)} = \mu \ketbra{1}{0}_A + \nu \ketbra{0}{1}_A,& \quad O_B^{(-1)} = \nu^* \ketbra{1}{0}_B + \mu^* \ketbra{0}{1}_B,\\
   O_A^{(-1)} = e^{i\varphi} \big(-\nu^* \ketbra{1}{0}_A + \mu^* \ketbra{0}{1}_A\big),&\quad O_B^{(1)} = e^{-i\varphi} \big(\mu \ketbra{1}{0}_B - \nu \ketbra{0}{1}_B\big),
    \end{split}
\end{equation}
for some complex coefficients $\mu,\nu$ such that $|\mu|^2+|\nu|^2=1$ and some phase $\varphi$
(and, for full generality, replacing $O_{A,j}^{(0)}$ and $O_{B,j}^{(0)}$ by $e^{i\varphi_j}O_{A,j}^{(0)}$ and $e^{-i\varphi_j}O_{B,j}^{(0)}$, resp., for some phases $\varphi_j$). It is easily seen that these operators can indeed be chosen to be Hermitian, by taking e.g. $\mu=\nu=\frac{1}{\sqrt{2}}$ and $\varphi=\frac{\pi}{2}$.
 However, imposing that the Schmidt decomposition is symmetry-resolving, i.e., that the above operators satisfy Eq.~\eqref{eq:qdefinition}, turns out to be incompatible with these being Hermitian: e.g., $[Q_A , O_A^{(1)}] = O_A^{(1)}$ requires $|\mu| = 1$, $\nu = 0$ (as in Eq.~\eqref{eq:rhoAB4}).

One may note, for completeness, that Eq.~\eqref{eq:qdefinition} can actually be satisfied by Hermitian operators $O_{A,j}^{(q)}$, $O_{A,j}^{(q)}$ only in the case where $q=0$: indeed, assuming that $O_{A,j}^{(q)} = (O_{A,j}^{(q)})^\dagger$, Eq.~\eqref{eq:qdefinition} then implies (using the cyclicity of the trace) $\mathrm{Tr}\big([Q_A , O_{A,j}^{(q)}]O_{A,j}^{(q)}\big) = 0 = q\,\mathrm{Tr}\big((O_{A,j}^{(q)})^\dagger O_{A,j}^{(q)}\big) = q$.

\section{The batch shadow randomized measurement toolbox}\label{app:batchshadows}

\subsection{Classical shadows with local CUE and Pauli measurements}
\label{app:paulishadows}

Given an $N$ qubit state prepared on a quantum device, we can construct a Haar classical shadow $\hat{\rho}^{(r)}$ (equivalently called a Haar shadow)
of the state defined in Eq.~\eqref{eq:def_shadow_rm} (with $N_M = 1$) of the main text~\cite{huang2020predicting}: 
\begin{equation}
    \hat{\rho}^{(r)} = \bigotimes_{i = 1}^N \Big( 3 (u^{(r)}_i)^{\dagger} \ket{s^{(r)}_i}\bra{s^{(r)}_i} u^{(r)}_i - \mathbbm{I}_2 \Big)
\end{equation}
where the applied local random unitary is sampled from the CUE equivalently from the Haar measure (local CUE measurements).
Alternatively we could consider random single-qubit operations that equivalently lead to measuring each qubit in one of the random Pauli basis of $\mathcal{X},\, \mathcal{Y}$ or $\mathcal{Z}$ (local Pauli measurements).
These lead to six possible states that can be succinctly summarized as:
\begin{equation}
    \ket{\mathcal{B},s} \quad \text{with} \quad \mathcal{B} \in \{\mathcal{X},\mathcal{Y},\mathcal{Z}\}, \,\,\, s \in \{\pm\}
\end{equation}
More precisely, these states correspond to the following six possibilities:
\begin{equation}
    \ket{0} = \ket{\mathcal{Z},+}, \quad \ket{1} = \ket{\mathcal{Z},-}, \quad \ket{+} = \ket{\mathcal{X},+}, \quad \ket{-} = \ket{\mathcal{X},-}, \quad \ket{i+} = \ket{\mathcal{Y},+}, \quad \ket{i-} = \ket{\mathcal{Y},-}.
\end{equation}
To construct a Pauli shadow $\hat{\rho}$, we choose randomly and uniformly for each single qubit $i$, a basis $\mathcal{B}_i$ in $\mathcal{X},\, \mathcal{Y}$ or $\mathcal{Z}$ which is subsequently followed by the resulting basis measurement that provides a string of signs $\textbf{s} = (s_1, \ldots, s_N) \in \{\pm \}$. With this information and defining $N$ chosen bases $\mathbfcal{B} = (\mathcal{B}_1, \ldots, \mathcal{B}_N)$, we can provide an unbiased estimator of the density matrix $\rho$ as~\cite{huang2021provably}:
\begin{equation}
    \hat{\rho}(\mathbfcal{B}, \textbf{s}) = \bigotimes_{i = 1}^N \Big( 3\ket{\mathcal{B}_i,s_i}\bra{\mathcal{B}_i,s_i} - \mathbb{I}_2 \Big)
    \quad \text{such that} \quad \mathbb{E}[\hat{\rho}(\mathbfcal{B}, \textbf{s})] = \rho
    \label{eq:paulishadow}
\end{equation}
Here, $\mathbb{E}$ denotes the expectation value over the uniformly sampled random bases, as well as the resulting measurement outcomes.
Note that the single qubit Pauli shadows have some interesting properties due to the fact that their chosen measurement bases are mutually unbiased. 

For $\mathcal{B} \ne \mathcal{B}'$, we have
\begin{equation}
    \mathrm{Tr}\Big[ \big(3\ket{\mathcal{B},s}\bra{\mathcal{B},s} - \mathbb{I}_2\big)\big(3\ket{\mathcal{B}',s'}\bra{\mathcal{B}',s'} - \mathbb{I}_2\big)\Big] = \frac{1}{2} \quad \forall s,\,\, s' \in \{\pm\}
\end{equation}
and for $\mathcal{B} =  \mathcal{B}'$ we have
\begin{equation}
    \mathrm{Tr}\Big[ \big(3\ket{\mathcal{B},s}\bra{\mathcal{B},s} - \mathbb{I}_2\big)\big(3\ket{\mathcal{B}',s'}\bra{\mathcal{B}',s'} - \mathbb{I}_2\big)\Big] = 
    \begin{cases}
     -4 \quad &\text{if} \quad s \ne s'. \\
     5 \quad &\text{if}\quad  s = s'.
    \end{cases}
\end{equation}
This rich geometric structure allows us to deduce streamlined upper bounds on the trace overlap between different Pauli shadows.

\newtheorem{lem}{Lemma}
\begin{lem}
Given two N-qubit basis strings $\mathbfcal{B}, \, \mathbfcal{B}' \in \{ \mathcal{X}, \, \mathcal{Y},\, \mathcal{Z}\}^{\times N}$, for any sign of the outcome strings $s, \, s' \in \{\pm\}^{\times N}$ the following two statements hold: 
\begin{align}
    &&
    \mathrm{Tr} \Big(\hat{\rho}\big(\mathbfcal{B}, \textbf{s}\big)\hat{\rho}'\big(\mathbfcal{B'}, \textbf{s}'\big) \Big)^2 \leq \prod_{i = 1}^N \Bigg( 5^2 \textbf{1} \{ \mathcal{B}_i = \mathcal{B}'_i\} + \Big(\frac12\Big)^2 \textbf{1} \{ \mathcal{B}_i \ne \mathcal{B}'_i\}\Bigg) \label{eq:lemm1.1}
\end{align}
and
\begin{align}
    &&
    \mathbb{E} \Bigg[ \prod_{i = 1}^N \bigg( 5^2 \textbf{1} \{ \mathcal{B}_i = \mathcal{B}'_i\} + \Big(\frac12\Big)^2 \textbf{1} \{ \mathcal{B}_i \ne \mathcal{B}'_i\} \bigg) \Bigg] = 8.5^N, \label{eq:lemm1.2}
\end{align}
where $\textbf{1} \{ \mathcal{B}_i = \mathcal{B}'_i\}$ and $\textbf{1} \{ \mathcal{B}_i \neq \mathcal{B}'_i\}$ denote the indicator function of the advertised events.\label{lemm1}
\end{lem}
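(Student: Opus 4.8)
The plan is to exploit the tensor-product structure of the single-qubit Pauli shadows so that both statements collapse to independent single-qubit computations, and then to invoke the two explicit overlap values recorded just above the lemma. The whole argument is elementary; there is no deep obstacle, only a couple of points that require a moment's care.

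For the first inequality, I would begin from the observation that each shadow is a tensor product over qubits, so the Hilbert--Schmidt overlap factorizes:
\begin{equation*}
\mathrm{Tr}\big(\hat{\rho}(\mathbfcal{B},\mathbf{s})\,\hat{\rho}'(\mathbfcal{B}',\mathbf{s}')\big) = \prod_{i=1}^N \mathrm{Tr}\big[(3\ket{\mathcal{B}_i,s_i}\bra{\mathcal{B}_i,s_i}-\mathbb{I}_2)(3\ket{\mathcal{B}_i',s_i'}\bra{\mathcal{B}_i',s_i'}-\mathbb{I}_2)\big].
\end{equation*}
Squaring turns this into a product of squared single-qubit overlaps. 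I would then split into the two cases established above the lemma: when $\mathcal{B}_i=\mathcal{B}_i'$ the single-qubit overlap is $5$ if $s_i=s_i'$ and $-4$ if $s_i\ne s_i'$, so its square is at most $25=5^2$; when $\mathcal{B}_i\ne\mathcal{B}_i'$ the overlap equals $\tfrac12$ exactly, so its square is $(\tfrac12)^2$. Bounding each factor by the sign-independent worst case $5^2\,\mathbf{1}\{\mathcal{B}_i=\mathcal{B}_i'\}+(\tfrac12)^2\,\mathbf{1}\{\mathcal{B}_i\ne\mathcal{B}_i'\}$ and multiplying over $i$ yields Eq.~\eqref{eq:lemm1.1}, uniformly in $\mathbf{s},\mathbf{s}'$.

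For the second identity, I would note that the expectation is over the two independently and uniformly drawn basis strings $\mathbfcal{B},\mathbfcal{B}'$. Because the bases on distinct qubits are independent, the expectation of the product factorizes into $N$ identical single-qubit expectations. For a single qubit $P(\mathcal{B}_i=\mathcal{B}_i')=\tfrac13$ and $P(\mathcal{B}_i\ne\mathcal{B}_i')=\tfrac23$, so each factor equals $25\cdot\tfrac13+\tfrac14\cdot\tfrac23=\tfrac{25}{3}+\tfrac16=\tfrac{17}{2}=8.5$, and the product over the $N$ qubits gives $8.5^N$, which is Eq.~\eqref{eq:lemm1.2}.

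The two steps that require attention, rather than any genuine difficulty, are the following. In the first part one must bound the same-basis contribution by the \emph{larger} of the two admissible overlaps ($25$ rather than $16$) precisely to obtain a bound that is valid for every choice of outcome signs. In the second part the only substantive input is correctly identifying the uniform product distribution over the three Pauli bases, so that the single-site collision probability is $\tfrac13$; everything else is a direct evaluation.
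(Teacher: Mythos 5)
Your proposal is correct and follows essentially the same route as the paper's proof: factorize the Hilbert--Schmidt overlap over qubits, bound the same-basis squared overlap by the larger value $5^2$ (rather than $(-4)^2=16$) to get a sign-independent bound, and use independence plus the collision probability $\Pr[\mathcal{B}_i=\mathcal{B}_i']=\tfrac13$ to evaluate each single-qubit expectation as $25\cdot\tfrac13+\tfrac14\cdot\tfrac23=8.5$. No gaps; the two points you flag for care are exactly the ones the paper handles the same way.
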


The proof strategy for this auxiliary statement is inspired by a recent analysis of classical shadows for single-qubit SIC POVMs, see \cite[Appendix~IX.B]{stricker2022SIC}.

\begin{proof}
The proof of the first inequality follows from the observation that the single qubit states $\ket{\mathcal{B}_i, s_i}$ and $\ket{\mathcal{B}'_i, s'_i}$ are mutually unbiased whenever $\mathcal{B}_i \ne \mathcal{B}'_i$. 
If two bases coincide ($\mathcal{B}_i=\mathcal{B}_i$), the squared overlap either contributes $(-4)^2$ ($s'\neq s$) or $5^2$ ($s=s'$) and can be bounded by choosing the larger term amongst them. Eq.~\eqref{eq:lemm1.1} now follows from applying this single-qubit argument to each contribution in the $N$-fold tensor product that make up the two shadows as the trace inner product of two shadows factorises into $N$ single-qubit contributions from Eq.~\eqref{eq:paulishadow}. 

Secondly, noting that all random basis choices are independent, we can develop Eq.~\eqref{eq:lemm1.2} as:
\begin{align}
        &&
        &\mathbb{E} \Bigg[ \prod_{i = 1}^N \bigg( 5^2 \textbf{1} \{ \mathcal{B}_i = \mathcal{B}'_i\} + \Big(\frac12\Big)^2 \textbf{1} \{ \mathcal{B}_i \ne \mathcal{B}'_i\} \bigg) \Bigg] =  \Bigg[  \mathbb{E} \bigg( 5^2 \textbf{1} \{ \mathcal{B}_i = \mathcal{B}'_i\} + \Big(\frac12\Big)^2 \textbf{1} \{ \mathcal{B}_i \ne \mathcal{B}'_i\} \bigg)  \Bigg]^N\\
        &&
        &= \bigg[ 5^2 \mathbb{E}[\textbf{1} \{ \mathcal{B}_i = \mathcal{B}'_i\}] + \Big(\frac12\Big)^2 \mathbb{E}[\textbf{1} \{ \mathcal{B}_i \ne \mathcal{B}'_i\}] \bigg]^N = \Bigg[ 5^2 \times \frac{1}{3} +\Big(\frac12\Big)^2 \times \frac{2}{3}\Bigg]^N  = 8.5^N,
    \end{align}
where we have used the fact that the expectation of an indicator function is the probability of the associated event. More precisely, $\mathbb{E} [\mathbf{1}\left\{ \mathcal{B}_i = \mathcal{B}_i' \right\} = \mathrm{Pr} \left[ \mathcal{B}_i = \mathcal{B}_i'\right] =1/3$, because there is a total of three basis choices to choose from. The same argument also ensures $\mathbb{E}[ \mathbf{1} \left\{\mathcal{B}_i \neq \mathcal{B}_i'\right\}] = \mathrm{Pr} \left[ \mathcal{B}_i \neq \mathcal{B}_i'\right] =1 -\mathrm{Pr} \left[ \mathcal{B}_i = \mathcal{B}_i'\right] = 1-1/3=2/3$.
\end{proof}

We now collect a number of helpful auxiliary statements that will enable us to deduce tight bounds on the estimation protocol for operator entanglement further down the road. Some statements directly follow from the properties of classical shadows and are therefore valid for both 
Pauli and CUE shadows. Other results, however, do explicitly use the structure of Pauli basis measurements and are therefore only valid for Pauli shadows. 

\begin{lem}
Given a Pauli or Haar shadow $\hat{\rho}$ that acts on $N$ qubits and $O$ be an observable on the same dimension. We have:
\begin{align}
    \mathrm{Var} \Big[\mathrm{Tr}(O\hat{\rho}) \Big] \leq \mathbb{E} \Big[  \mathrm{Tr}(O\hat{\rho})^2\Big] \leq \mathrm{Tr}(O^2) 2^N.
\end{align}
\begin{proof}
    The above statement follows from the proof of the original bound on the shadow norm of linear observables in Ref.~\cite{huang2020predicting} (proof of Proposition~3).
\end{proof}
\label{lemm2}
\end{lem}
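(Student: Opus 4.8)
The first inequality is immediate and holds for any estimator: since $O$ and $\hat\rho$ are Hermitian, the random variable $X=\mathrm{Tr}(O\hat\rho)$ is real, so $\mathrm{Var}[X]=\mathbb{E}[X^2]-(\mathbb{E}[X])^2\le\mathbb{E}[X^2]$. All the work lies in the second inequality $\mathbb{E}[\mathrm{Tr}(O\hat\rho)^2]\le 2^N\,\mathrm{Tr}(O^2)$.

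My plan is to convert the second moment into the expectation of a single fixed operator against the underlying state. Writing $\mathrm{Tr}(O\hat\rho)^2=\mathrm{Tr}\big((O\otimes O)(\hat\rho\otimes\hat\rho)\big)$ and using linearity gives $\mathbb{E}[\mathrm{Tr}(O\hat\rho)^2]=\mathrm{Tr}\big((O\otimes O)\,\mathbb{E}[\hat\rho\otimes\hat\rho]\big)$. Because both the random unitary and the readout factorize over qubits, the two-copy second-moment map $\mathcal{T}:\rho\mapsto\mathbb{E}[\hat\rho\otimes\hat\rho]$ is a tensor product $\mathcal{T}=\bigotimes_i\mathcal{T}_i$ of single-qubit maps. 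Passing to the Hilbert–Schmidt adjoint then yields $\mathbb{E}[\mathrm{Tr}(O\hat\rho)^2]=\mathrm{Tr}(M\rho)$ with $M:=\mathcal{T}^{*}(O\otimes O)$, $\mathcal{T}^{*}=\bigotimes_i\mathcal{T}_i^{*}$. Since $\rho$ is a density matrix and $M$ is Hermitian, $\mathrm{Tr}(M\rho)\le\lambda_{\max}(M)$, so it suffices to show $\lambda_{\max}(M)\le 2^N\mathrm{Tr}(O^2)$.

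Next I would compute the single-qubit dual map $\mathcal{T}_i^{*}$ explicitly on pairs of Pauli operators, using only the first and second moment identities $\mathbb{E}[P]=\mathbb{I}/2$ and $\mathbb{E}[P\otimes P]=(\mathbb{I}+\mathbb{S})/6$ for a Haar-random rank-one projector $P$ (with $\mathbb{S}$ the single-qubit swap on the two copies); the same identities hold for the Pauli ensemble, since the six single-qubit stabilizer states form a $2$-design, which is why the lemma applies uniformly to Pauli and Haar shadows. A short calculation gives the rules $\mathcal{T}_i^{*}(\mathbb{I}\otimes\mathbb{I})=\mathbb{I}$, $\mathcal{T}_i^{*}(\mathbb{I}\otimes B)=\mathcal{T}_i^{*}(B\otimes\mathbb{I})=B$ for a traceless Pauli $B$, and $\mathcal{T}_i^{*}(A\otimes B)=3\,\delta_{A,B}\,\mathbb{I}$ for traceless $A,B$. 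Expanding $O=\sum_P\alpha_P P$ in the Pauli basis produces $M=\sum_R\beta_R R$, where each coefficient $\beta_R$ is a quadratic form in the $\alpha_P$ whose kernel factorizes over sites with nonnegative weights $1$ or $3$. I would then bound $\lambda_{\max}(M)\le\|M\|_\infty\le\sum_R|\beta_R|$, use that the total per-site weight summed over the right index equals exactly $4$ for every fixed left index, and finally apply $|\alpha_P\alpha_Q|\le\tfrac12(\alpha_P^2+\alpha_Q^2)$ to collapse the double sum to $4^N\sum_P\alpha_P^2$. Since $\mathrm{Tr}(O^2)=2^N\sum_P\alpha_P^2$, this is exactly $2^N\mathrm{Tr}(O^2)$.

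The main obstacle, and the reason the estimate is subtler than it first appears, is that $O\otimes O$ is \emph{not} positive semidefinite, so one cannot simply invoke monotonicity of the positive map $\mathcal{T}^{*}$ to push an operator inequality through. Naive alternatives—applying Cauchy–Schwarz as $\langle\hat{\mathbf b}|\mathcal{M}^{-1}(O)|\hat{\mathbf b}\rangle^2\le\langle\hat{\mathbf b}|\mathcal{M}^{-1}(O)^2|\hat{\mathbf b}\rangle$, or bounding the outcome probability $\langle\hat{\mathbf b}|\rho|\hat{\mathbf b}\rangle$ by $1$—destroy the cancellations encoded in $\mathcal{M}^{-1}(O)$ and overshoot the target by an exponential factor. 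The crux is therefore the exact single-qubit computation of $\mathcal{T}_i^{*}$, together with the clean combinatorial fact that the resulting per-site weight matrix has constant row sum $4$; this is precisely the shadow-norm bookkeeping of Ref.~\cite{huang2020predicting}, so one may alternatively import the argument verbatim from the proof of their Proposition~3.
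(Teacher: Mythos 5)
Your route is, in substance, exactly the paper's: the printed proof is a one-line deferral to the shadow-norm analysis of Ref.~\cite{huang2020predicting} (proof of their Proposition~3), and what you spell out --- the factorization of the two-copy second-moment map $\mathcal{T}=\bigotimes_i\mathcal{T}_i$ over qubits, the evaluation of the dual map on Pauli pairs, the per-site weights $1$, $1$, $3\delta_{A,B}$ with constant row sum $4$, and the endgame $\lambda_{\max}(M)\le\sum_{P,Q}|\alpha_P\alpha_Q|\,c(P,Q)\le 4^N\sum_P\alpha_P^2 = 2^N\,\mathrm{Tr}(O^2)$ --- is precisely that bookkeeping, as you yourself acknowledge in closing. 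The skeleton is sound: the first inequality is trivial, the factorization argument (linearity plus the fact that product states span) is valid, and the final triangle-inequality/AM--GM computation checks out and lands exactly on the claimed constant.

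There is, however, one concrete flaw in your justification: you assert that the single-qubit dual map can be computed ``using only the first and second moment identities,'' and that the Pauli case follows because the six stabilizer states form a $2$-design. Writing out the dual map,
\begin{equation}
\mathcal{T}_1^{*}(A\otimes B)\;=\;2\,\mathbb{E}_P\!\left[\mathrm{Tr}\big((3P-\mathbb{I})A\big)\,\mathrm{Tr}\big((3P-\mathbb{I})B\big)\,P\right],
\end{equation}
the rule you need for two traceless Paulis, $\mathcal{T}_1^{*}(A\otimes B)=3\,\delta_{A,B}\,\mathbb{I}$, comes from the only surviving term $18\,\mathbb{E}_P[\mathrm{Tr}(PA)\,\mathrm{Tr}(PB)\,P]$, which is \emph{cubic} in $P$: it is determined by $\mathbb{E}[P^{\otimes 3}]$, and a $2$-design does not fix third moments, so as written your ``short calculation'' cannot produce this rule and could fail for a generic $2$-design. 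The step is rescued by a stronger true fact: the six single-qubit stabilizer states form a projective $3$-design (equivalently, the single-qubit Clifford group is a unitary $3$-design), so their third moments agree with Haar, where $\mathbb{E}[P^{\otimes 3}]=\tfrac14\Pi_{\mathrm{sym}}$ gives $\mathbb{E}[\mathrm{Tr}(PA)\,\mathrm{Tr}(PB)\,P]=\tfrac{1}{24}\big(\mathrm{Tr}(AB)\,\mathbb{I}+\{A,B\}\big)=\tfrac{\delta_{A,B}}{6}\,\mathbb{I}$ for traceless Paulis, recovering your rule for both ensembles. Replace ``$2$-design'' by ``$3$-design'' (with a citation or a direct verification over the six states), and your proof is complete and matches the argument the paper imports from Ref.~\cite{huang2020predicting}.
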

\begin{lem}
Let $\hat{\rho}$ and $\hat{\rho}'$ be two independent Pauli shadows on $N$ qubits. Then we have
\begin{align}
    \mathrm{Var} \Big[\mathrm{Tr}(\hat{\rho}\hat{\rho}') \Big] \leq \mathbb{E} \Big[ \mathrm{Tr}(\hat{\rho}\hat{\rho}')^2\Big] \leq 8.5^N.
\end{align}
\label{lemm3}
\end{lem}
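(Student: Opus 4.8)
The plan is to derive both inequalities directly from Lemma~\ref{lemm1}, which already contains all the substantive work. The first inequality is completely elementary: for any real random variable $X$ one has $\mathrm{Var}[X] = \mathbb{E}[X^2] - (\mathbb{E}[X])^2 \le \mathbb{E}[X^2]$, since $(\mathbb{E}[X])^2 \ge 0$. Applying this to $X = \mathrm{Tr}(\hat{\rho}\hat{\rho}')$ immediately yields $\mathrm{Var}[\mathrm{Tr}(\hat{\rho}\hat{\rho}')] \le \mathbb{E}[\mathrm{Tr}(\hat{\rho}\hat{\rho}')^2]$, so all the effort goes into bounding the second moment.

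For the second moment, I would write the two independent shadows explicitly as $\hat{\rho} = \hat{\rho}(\mathbfcal{B},\textbf{s})$ and $\hat{\rho}' = \hat{\rho}'(\mathbfcal{B}',\textbf{s}')$, where the basis/outcome pairs $(\mathbfcal{B},\textbf{s})$ and $(\mathbfcal{B}',\textbf{s}')$ are sampled independently. The starting point is the pointwise bound Eq.~\eqref{eq:lemm1.1}, valid for \emph{every} realization of the bases and outcomes:
\[
\mathrm{Tr}(\hat{\rho}\hat{\rho}')^2 \;\le\; \prod_{i=1}^N\left(5^2\,\textbf{1}\{\mathcal{B}_i = \mathcal{B}'_i\} + \tfrac14\,\textbf{1}\{\mathcal{B}_i \ne \mathcal{B}'_i\}\right).
\]
Because this holds deterministically for all outcome strings, I can take the expectation over the full randomness (the random bases and measurement outcomes of both independent shadows) on both sides; monotonicity of the expectation preserves the inequality. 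The right-hand side depends only on whether the per-qubit bases coincide, and not on the outcomes $\textbf{s},\textbf{s}'$ at all, so only the expectation over the independent basis choices survives. By Eq.~\eqref{eq:lemm1.2} this expectation equals $8.5^N$, giving $\mathbb{E}[\mathrm{Tr}(\hat{\rho}\hat{\rho}')^2] \le 8.5^N$ and completing the argument.

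The only point requiring a moment of care — which I would flag as the (minor) main obstacle — is to confirm that the expectation is taken over the complete randomness of both shadows and to invoke that the upper bound depends solely on the per-qubit basis-agreement events. The independence of $\hat\rho$ and $\hat\rho'$ is exactly what licenses the factorization across qubits that underlies Eq.~\eqref{eq:lemm1.2}, so no additional probabilistic input beyond Lemma~\ref{lemm1} is needed, and the two displayed steps chain together to give the claim.
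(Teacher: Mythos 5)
Your proof is correct and follows essentially the same route as the paper, which simply takes the expectation of the pointwise bound in Eq.~\eqref{eq:lemm1.1} and evaluates it via Eq.~\eqref{eq:lemm1.2}, together with the elementary bound $\mathrm{Var}[X] \leq \mathbb{E}[X^2]$. Your added remarks---that the bound holds deterministically in the outcomes so only the basis randomness matters, and that independence of the two shadows underlies the factorization across qubits---merely make explicit what the paper's one-line proof leaves implicit.
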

\begin{proof}
    The proofs directly follows from Lemma.~\ref{lemm1} by taking the expectation value of Eq.~\eqref{eq:lemm1.1}.
\end{proof}
\begin{lem}
Let $\rho_{AB}$ be a bipartite density matrix acting on $N = N_A + N_B$ qubits. Let  $\hat{\rho} = \hat{\rho}_A \otimes \hat{\rho}_B $ and $\hat{\rho}' = \hat{\rho}'_A \otimes \hat{\rho}'_B$ be two Pauli or Haar shadows defined on the same space that are sampled independently. 
Given two observables $O_{AB}$ and $O'_{A'B'}$ with compatible dimension, we have 
\begin{align}
    \mathrm{Var} \Big[ \mathrm{Tr}(O_{AB} \hat{\rho}_A \otimes \hat{\rho}'_B) \, \mathrm{Tr}(O'_{A'B'} \hat{\rho}'_A \otimes \hat{\rho}_B) \Big] &\leq \mathbb{E} \Big[ \mathrm{Tr}(O_{AB} \hat{\rho}_A \otimes \hat{\rho}'_B) \, \mathrm{Tr}(O'_{A'B'} \hat{\rho}'_A \otimes \hat{\rho}_B)^2 \Big] \nonumber \\
    &\leq  \mathrm{Tr}(O_{AB}^2)\,\mathrm{Tr}((O'_{A'B'})^2) \,2^{2N}    .
\end{align}
\label{lemm5}
\end{lem}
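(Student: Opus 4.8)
The plan is to reduce everything to the single-shadow second-moment bound of Lemma~\ref{lemm2} applied on a doubled system. Write the random variable of interest as
\[
X \,=\, \mathrm{Tr}\big(O_{AB}\,\hat{\rho}_A \otimes \hat{\rho}'_B\big)\,\mathrm{Tr}\big(O'_{A'B'}\,\hat{\rho}'_A \otimes \hat{\rho}_B\big).
\]
The first inequality is then immediate, since $\mathrm{Var}[X] = \mathbb{E}[X^2] - \mathbb{E}[X]^2 \le \mathbb{E}[X^2]$, so the whole task is to bound the second moment $\mathbb{E}[X^2]$.

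For the second inequality, the key idea is to regard the two independent shadows $\hat{\rho} = \hat{\rho}_A\otimes\hat{\rho}_B$ and $\hat{\rho}' = \hat{\rho}'_A\otimes\hat{\rho}'_B$ as a single local shadow $\hat{\sigma} := \hat{\rho}\otimes\hat{\rho}'$ on $2N$ qubits, pushing the cross structure into the observable. First I would note that the two traces in $X$ run over complementary sets of qubits: the first over the $A$-factor of $\hat{\rho}$ and the $B$-factor of $\hat{\rho}'$, the second over the $A$-factor of $\hat{\rho}'$ and the $B$-factor of $\hat{\rho}$. Their product therefore collapses into one trace over all $2N$ qubits,
\[
X \,=\, \mathrm{Tr}\big[\tilde{O}\,(\hat{\rho}\otimes\hat{\rho}')\big],
\]
where $\tilde{O}$ is $O_{AB}\otimes O'_{A'B'}$ with its tensor slots relabelled so that $O_{AB}$ acts on the $A$-factor of $\hat{\rho}$ together with the $B$-factor of $\hat{\rho}'$, while $O'_{A'B'}$ acts on the $A$-factor of $\hat{\rho}'$ together with the $B$-factor of $\hat{\rho}$. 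I would spell out this relabelling explicitly, since keeping track of which subsystem feeds which slot is the only genuinely fiddly part of the argument.

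Next I would observe that $\hat{\sigma} = \hat{\rho}\otimes\hat{\rho}'$ is again a bona fide Pauli (resp.\ Haar) shadow on $2N$ qubits: it is a tensor product of $2N$ independent single-qubit shadow factors built from local random measurements, and it is unbiased for $\rho_{AB}\otimes\rho_{AB}$. Hence Lemma~\ref{lemm2} applies verbatim on the doubled system with the observable $\tilde{O}$, yielding
\[
\mathbb{E}[X^2] \,=\, \mathbb{E}\big[\mathrm{Tr}(\tilde{O}\,\hat{\sigma})^2\big] \,\le\, \mathrm{Tr}\big(\tilde{O}^2\big)\,2^{2N}.
\]
Finally, because the relabelling is merely a permutation of tensor factors and leaves the trace of powers invariant, $\mathrm{Tr}(\tilde{O}^2) = \mathrm{Tr}(O_{AB}^2)\,\mathrm{Tr}((O'_{A'B'})^2)$, which gives the claimed bound. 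The main obstacle here is purely organizational, namely verifying carefully that the product of the two traces really equals a single trace of the relabelled observable against the combined shadow, and that $\hat{\sigma}$ still qualifies as a shadow to which Lemma~\ref{lemm2} applies; once these two facts are secured, the quantitative estimate is inherited directly from Lemma~\ref{lemm2} without further computation.
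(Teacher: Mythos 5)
Your proposal is correct and follows essentially the same route as the paper: the paper also merges the two traces into a single trace of $\hat{\rho}\otimes\hat{\rho}'$ on $2N$ qubits, realizing your ``slot relabelling'' explicitly as conjugation of $O_{AB}\otimes O'_{A'B'}$ by the swap $\mathbb{S}_{BB'}$, then applies Lemma~\ref{lemm2} and uses invariance of $\mathrm{Tr}(\tilde{O}^2)$ under this unitary conjugation. Your observation that $\hat{\rho}\otimes\hat{\rho}'$ is distributed as a valid local shadow of $\rho_{AB}\otimes\rho_{AB}$ is exactly the step the paper invokes implicitly, so no gap remains.
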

\begin{proof}
We can easily rewrite the product of traces as a larger trace over a tensor product:
\begin{equation}
    \mathrm{Tr} \Big[ (O_{AB}\otimes O'_{A'B'}) (\hat{\rho}_A \otimes \hat{\rho}'_B \otimes \hat{\rho}'_{A'} \otimes \hat{\rho}_{B'}) \Big] = \mathrm{Tr} \Big[ \mathbb{S}_{BB'}(O_{AB}\otimes O'_{A'B'}) \mathbb{S}_{BB'} (\hat{\rho}_A \otimes \hat{\rho}_B \otimes \hat{\rho}'_{A'} \otimes \hat{\rho}'_{B'}) \Big]
\end{equation}
with $\mathbb{S}_{BB'} \equiv \mathbb{I}_{AA'} \otimes \mathbb{S}_{BB'}$ that implicitly includes identity operator on the unmarked subsystem $A$ and $A'$.
Writing the shadow $\hat{\rho} \otimes \hat{\rho}' = \hat{\rho}_A \otimes \hat{\rho}_B\otimes \hat{\rho}'_{A'} \otimes \hat{\rho}'_{B'}$ in a $2^{2N}$ dimensional Hilbert space and recalling Lemma~\ref{lemm2}, we obtain
\begin{equation}
     \mathrm{Var} \Big[ \mathrm{Tr}(O_{AB} \hat{\rho}_A \otimes \hat{\rho}'_B) \, \mathrm{Tr}(O'_{A'B'} \hat{\rho}'_A \otimes \hat{\rho}_B) \Big] \leq \mathrm{Tr}\Big[(\mathbb{S}_{BB'}(O_{AB}\otimes O'_{A'B'}) \mathbb{S}_{BB'})^2\Big] 2^{2N} \leq \mathrm{Tr}(O^2_{AB})\,\mathrm{Tr}((O'_{A' B'})^2) \,2^{2N}.
\end{equation}
\end{proof}

\subsection{General treatment for batch shadow estimators} \label{App:Generaltreatement}

In this section, we introduce the batch shadow estimator -- one of the main technical contributions of this work. We also provide general statements that allow us to bound its variance when estimating trace polynomials $\mathrm{tr} \left( O^{(n)} \rho^{\otimes n}\right)$ of arbitrary order $n$.

We shall for all the subsequent sections start by performing randomized measurements to construct classical or Pauli shadows of an $N$-qubit state $\rho$. As mentioned in the main text, on each run of the protocol we sample $N$ single-qubit random unitaries from the CUE or a 2-design and apply them locally on each qubit. This is followed by a single computational basis measurement on each qubit $(N_M = 1)$. 
This procedure is repeated $M$ times (on fresh copies of the state $\rho$) and allows us to construct $M$ \emph{classical shadows} $\hat{\rho}^{(r)}$ of $\rho$, for $r = 1, \, \dots, \, M$~\cite{huang2020predicting} (here $M \equiv N_u$ as written in the main text).
We know that the expectation value of the classical shadows is $\mathbb{E}[\hat{\rho}] = \rho$~\cite{huang2020predicting}.  
We would like to estimate an $n$-order functional $X_n = \mathrm{Tr}(O^{(n)} \rho^{\otimes n})$ defined as a function of an $n$-copy operator $O^{(n)}$ using classical shadows. 

From the classical shadows we can define its U-statistics estimator $\hat{X}_n$ as
\begin{equation}
    \hat{X}_n =\frac{(M-n)!}{M!} \sum_{r_1 \ne \dots \ne r_n} \mathrm{Tr} \Big[ O^{(n)} \bigotimes_{i = 1}^n \hat{\rho}^{(r_i)}\Big] , \label{est_Ustat}
\end{equation}
where the sum ranges over all possible disjoint shadow indices $(r_1,\ldots,r_n) \in \left\{0,\ldots,M\right\}^{\times n}$ with $r_1 \neq \cdots \neq r_n$. 
The U-statistics estimator is an unbiased estimator, i.e., $\mathbb{E}[\hat{X}_n] = X_n$~\cite{Hoeffding1992,hung2021entanglement,RathFisher2021}, but evaluating it requires a summation over all possible disjoint sets of $n$ indices. While this is doable for $n=1,2$, this summation quickly becomes unfeasible as $n$ increases.
For sake of illustration, in order to estimate the U-statistics estimator of the purity ($\mathrm{Tr}(\rho^2)$) with classical shadows, the  post-processing runtime scales quadratically $\mathcal{O}(M^2)$ with the number of measurements $M$ scaling exponentially wrt the system size $N$ \mbox{$(M \propto 2^N)$}~\cite{hung2021entanglement,  Elben2020b, RathFisher2021}. 
On the other hand, a function involving $n = 4$ copies of $\rho$ immediately exposes the bottleneck of the U-statitics estimator. The number of summands to be calculated in Eq.~\eqref{est_Ustat} quickly becomes overburdening even for moderate system sizes as the runtime scales as $\mathcal{O}(M^4)$ with an overhead exponential scaling of the number of measurements $M$ and requires other alternatives.

To solve this real scaling problem problem, we propose another unbiased estimator of the same functional $X_n$ by distributing our $M$ shadows into $n' \ge n $ subsets, and first averaging the shadows in each subset. Each such defined subset is independent with respect to any other and can independently approximate $\rho$. More specifically, let us define the $b^\text{th}$ \emph{batch shadow} (denoted by a tilde rather than a hat) as
\begin{equation}
    \tilde \rho^{(b)} = \frac{n'}{M}\sum_{t_b \in T_b} \hat{\rho}^{(t_b)} \in \mathbb{C}^{2^N \times 2^N} \quad \text{where} \quad T_b = \{1+ (b-1)M/n', \dots, bM/n'\}
\end{equation}
for batches ranging from $b=1$ to $b=n'$ (for simplicity we assume $n'$ divides  $M$ such that each subset contains $M/n'$ original classical shadows).
We note, as claimed above, that $\mathbb{E}[ \tilde \rho^{(b)}] = \rho$ for every $b$. We then define the alternate unbiased estimator $\tilde X^{(n')}_n$ of $X_n$ in a similar fashion to Eq.~\eqref{est_Ustat}. But, we now symmetrize over $n'$ batch shadows:
\begin{equation}
    \tilde X^{(n')}_n = \frac{(n'-n)!}{n'!} \sum_{b_1 \ne \dots \ne b_n} \mathrm{Tr} \Big[ O^{(n)} \bigotimes_{i = 1}^n \tilde{\rho}^{(b_i)}\Big] = \frac{(n'-n)!}{n'!} \frac{n'^n}{M^n} \sum_{b_1 \ne \dots \ne b_n} \sum_{t_{b_1} \in T_{b_1}, \ldots, t_{b_n} \in T_{b_n}} \mathrm{Tr} \Big[ O^{(n)} \bigotimes_{i = 1}^n \hat{\rho}^{(t_{b_i})}\Big] . \label{est_DS}
\end{equation}
Again, by construction, $\mathbb{E}[\tilde{X}^{(n')}_n] = X_n$, i.e.\ the \emph{batch shadow estimator} is unbiased.
The principal advantage of introducing this data splitting estimator lies in the fact that, in the limit of $n' \ll M$, one can more efficiently post-process arbitrary $n$-order functionals $\tilde{X}^{(n')}_n$ compared to the basic U-statistics estimators $\hat{X}_n$. This is because all the batch shadows $\tilde{\rho}^{(b)}$ are independent and can be computed in parallel. 

By increasing $n'$ the performance of $\tilde{X}^{(n')}_n$ improves in terms of convergence as more distinct ordered pairings of $n$ different shadows $\hat{\rho}^{(r_1)}, \ldots, \hat{\rho}^{(r_n)}$ are incorporated in the batch estimator that were not considered before. In the final limit of $n' = M$, we actually recover the full U-statistics estimator $\tilde{X}^{(M)}_n = \hat{X}_n$ which has already been studied in detail~\cite{RathFisher2021}.
But the larger $n'$, the more resource intensive the classical postprocessing. Hence, we analyze the performance of this estimator in regimes where the batch size $n'$ is as small as possible, i.e.\ $n'=n$ (this is the smallest batch size that still produces an unbiased estimator for $\mathrm{Tr}(O^{(n)}\rho^{\otimes n})$).
In this case,  the sum over $b_1 \ne \dots \ne b_n$ simply boils down to computing all possible position shuffles in the tensor product of the $n$ independent batch shadows. This can be more formally written as a sum over all the permutation operator $\pi$ that acts on $n$ copies of the shadows and leads to
\begin{equation}
    \tilde{X}^{(n)}_n = \frac{1}{n!} \frac{n^n}{M^n} \sum_\pi \sum_{t_1 \in T_1, \ldots, t_n \in T_n} \mathrm{Tr} \Big[ O^{(n)} \pi \bigotimes_{i = 1}^n \hat{\rho}^{(t_i)} \pi^\dagger \Big] \label{est_DSn}
\end{equation}
with $\pi$ denoting the operator that permutes the $n$ shadows correspondingly, $\pi 
= \sum_{j_1,\ldots,j_n} \ket{j_{\pi(1)}}\!\bra{j_1}\otimes\cdots\otimes\ket{j_{\pi(n)}}\!\bra{j_n}$ (where the $\ket{j_i}$'s are orthonormal basis states).
We can gauge its performance by calculating the required number of measurements $M$ to estimate $X_n$ with an error $|\tilde{X}^{(n)}_n - X_n| \leq \epsilon$ and a certain confidence level.
Chebyshev inequality yields
\begin{equation}
    \mathrm{Pr}[|\tilde{X}^{(n)}_n - X_n| \geq \epsilon ] \leq \frac{\mathrm{Var}[\tilde{X}^{(n)}_n]}{\epsilon^2}, \label{Cby}
\end{equation}
and isolates the variance $\mathrm{Var}[\tilde{X}^{(n)}_n]$ of the batch shadow estimator as the central object to study convergence. Using the decomposition of Eq.~\eqref{est_DSn}, this variance can be rewritten as
\begin{equation}
    \mathrm{Var}[\tilde{X}^{(n)}_n] = \Big(\frac{1}{n!} \frac{n^n}{M^n}\Big)^2 \sum_{\pi,\pi'} \sum_{t_1, t_1' \in T_1, \ldots, t_n, t_n' \in T_n} \mathrm{Cov}\Bigg[\mathrm{Tr} \Big[ O^{(n)} \pi \bigotimes_{i = 1}^n \hat{\rho}^{(t_i)} \pi^\dagger \Big], \mathrm{Tr} \Big[ O^{(n)} \pi' \bigotimes_{i = 1}^n \hat{\rho}^{(t_i')} \pi'^\dagger \Big] \Bigg] .
\end{equation}
Note that all shadows that appear only once in the covariances above (i.e., those with indices $t_i\neq t_i'$) simply average to $\rho$. The shadows that appear twice (those with indices $t_i = t_i'$), on the other hand, contribute less trivially. Furthermore, because of the averaging over all permutations $\pi,\pi'$, the positions of the shadows appearing twice (i.e., the indices $i$ such that $t_i = t_i'$) does not matter. Hence, we can first sum over the number $k$ of shadows appearing twice: the $\binom{n}{k}$ corresponding terms then contribute with the same values of the covariances. We thus obtain
\begin{align}
    \mathrm{Var}[\tilde{X}^{(n)}_n] &= \Big(\frac{1}{n!} \frac{n^n}{M^n}\Big)^2 \sum_{\pi,\pi'} \sum_{k=0}^n \binom{n}{k} \sum_{\substack{t_1 \in T_1 \\[.5mm] \ldots \\[.5mm] t_k \in T_k}} \ \sum_{\substack{\tau_{k+1} \neq \tau_{k+1}' \in T_{k+1} \\[.5mm] \ldots \\[.5mm] \tau_n \neq \tau_n' \in T_n}} \mathrm{Cov}\Bigg[\mathrm{Tr} \Big[ \pi^\dagger O^{(n)} \pi [ \otimes_{i = 1}^k \hat{\rho}^{(t_i)} \otimes_{j = k+1}^n \hat{\rho}^{(\tau_j)} ] \Big], \notag \\[-10mm]
    &\hspace{90mm} \mathrm{Tr} \Big[ \pi'^\dagger O^{(n)} \pi' [ \otimes_{i = 1}^k \hat{\rho}^{(t_i)} \otimes_{j = k+1}^n \hat{\rho}^{(\tau_j')} ] \Big] \Bigg] \notag \\
    &= \Big(\frac{1}{n!} \frac{n^n}{M^n}\Big)^2 \sum_{\pi,\pi'} \sum_{k=0}^n \binom{n}{k} \Big(\frac{M}{n}\Big)^k \Big(\frac{M}{n}\big(\frac{M}{n}-1\big)\Big)^{n-k} \mathrm{Cov}\Bigg[\mathrm{Tr} \Big[ \pi^\dagger O^{(n)} \pi [ \otimes_{r = 1}^k \hat{\rho}^{(r)} \otimes\rho^{\otimes(n-k)} ] \Big], \notag \\[-3mm]
    &\hspace{90mm} \mathrm{Tr} \Big[ \pi'^\dagger O^{(n)} \pi' [ \otimes_{r = 1}^k \hat{\rho}^{(r)} \otimes\rho^{\otimes(n-k)} ] \Big] \Bigg] \notag \\
    &= \sum_{k=0}^n \binom{n}{k} \Big(\frac{n}{M}\Big)^k \Big(1-\frac{n}{M}\Big)^{n-k} \mathrm{Var}\Bigg[\frac{1}{n!}\sum_\pi\mathrm{Tr} \Big[ \pi^\dagger O^{(n)} \pi [ \otimes_{r = 1}^k \hat{\rho}^{(r)} \otimes\rho^{\otimes(n-k)} ] \Big] \Bigg] , \label{eq:VarXnn}
\end{align}
where from the first to the second lines (in addition to averaging the shadows $\hat{\rho}^{(\tau_j)}, \hat{\rho}^{(\tau_j')}$ to $\rho$, see above) we noticed that all different shadows $\hat{\rho}^{(t_i)}$ give the same statistics (hence, the same covariances), so that we could without loss of generality replace the $k$ shadows $\hat{\rho}^{(t_i)}$ by any other $k$ shadows $\hat{\rho}^{(r)}$, e.g. those for $r = 1, \ldots, k$. All $\frac{M}{n}$ terms from each of the $k$ sums over $t_i \in T_i$, and all $\frac{M}{n}\big(\frac{M}{n}-1\big)$ terms from each of the $n-k$ sums over $\tau_j \neq \tau_j' \in T_j$ then give the same values. For the last line we just rearranged all prefactors and included the sums over $\pi,\pi'$ inside the covariances, noticing that the two arguments of the covariances are then the same.

Let us note already that the variance term inside the sum of Eq.~\eqref{eq:VarXnn} cancels for $k=0$: the sum can therefore be taken to start from $k=1$. 
Defining for convenience
\begin{align}
    V_k = \mathrm{Var}\Bigg[\frac{1}{n!}\sum_\pi\mathrm{Tr} \Big[ \pi^\dagger O^{(n)} \pi [ \otimes_{r = 1}^k \hat{\rho}^{(r)} \otimes\rho^{\otimes(n-k)} ] \Big] \Bigg], \label{eq:def_Vk}
\end{align}
we obtain, 
\begin{align}
    \mathrm{Var}[\tilde{X}^{(n)}_n] &= \sum_{k=1}^n \binom{n}{k} \Big(\frac{n}{M}\Big)^k \Big(1-\frac{n}{M}\Big)^{n-k} V_k  \label{eq:VarXnn_v2} \\
    &= \sum_{\ell=1}^n \binom{n}{\ell} \Big(\frac{n}{M}\Big)^\ell \Big[ \sum_{k=1}^{\ell} \binom{\ell}{k} (-1)^{\ell-k} V_k \Big] = \frac{n^2}{M} V_1 + \frac{n^3(n-1)}{2M^2} (V_2-2V_1) +\mathcal{O}\Big(\frac{1}{M^2}\Big). \label{eq:VarXnn_order2}
\end{align}
This analysis of the special case $n'=n$ (the number of batches equals the order of the trace functional) readily extends to general batch sizes $n'$. Similar calculations produce the following generalization of Eq.~\eqref{eq:VarXnn_order2}:
\begin{align}
    &&
    \mathrm{Var}[\tilde{X}^{(n')}_n] &= \sum_{j=1}^n \binom{n}{j} \frac{\binom{n'-n}{n-j}}{\binom{n'}{n}} \sum_{k=1}^j \binom{j}{k} \big(\frac{n'}{M}\big)^k \big(1-\frac{n'}{M}\big)^{j-k} V_k 
    = \sum_{\ell=1}^n \frac{\binom{n}{\ell}^2}{\binom{n'}{\ell}} \big(\frac{n'}{M}\big)^\ell \Big[\sum_{k=1}^\ell \binom{\ell}{k} (-1)^{\ell-k} V_k \Big] 
    \\
    &&
    &= \frac{n^2}{M} V_1 + \frac{n^2(n-1)^2\frac{n'}{n'-1}}{2M^2} (V_2 - 2V_1) + \mathcal{O}\Big(\frac{1}{M^2}\Big).
\end{align}
We can provide bounds to all the above variance expressions by using the fact that the variance of an average of random variables is upper-bounded by the average of the variances. This can be seen as follows. 
For $K$ random variables $X_i$ the Cauchy–Schwarz inequality yields:
\begin{align}
    &&
    \Big( &\frac{1}{K}\sum_{i=1}^K X_i - \mathbb{E}[\frac{1}{K}\sum_{i=1}^K X_i] \Big)^2 = \langle \vec{\mathbf{1}}/K , (\vec{X} - \mathbb{E}[\vec{X}]) \rangle)^2 
    \leq \|\vec{\mathbf{1}}/K\|^2 \, \|\vec{X} - \mathbb{E}[\vec{X}]\|^2 = \frac{1}{K} \sum_{i=1}^K (X_i - \mathbb{E}[X_i])^2
\end{align}
with $\vec{X} = (X_1,\ldots,X_K)$ and $\vec{\mathbf{1}} = (1,\ldots,1)$. Taking the expectation values on both sides gives \mbox{$\mathrm{Var}[\frac{1}{K}\sum_{i=1}^K X_i] \leq \frac{1}{K} \sum_{i=1}^K \mathrm{Var}[X_i]$}. This provides us the bound:
\begin{align}
    V_k =\mathrm{Var}\Bigg[\frac{1}{n!}\sum_\pi\mathrm{Tr} \Big[ \pi^\dagger O^{(n)} \pi [ \otimes_{r = 1}^k \hat{\rho}^{(r)} \otimes\rho^{\otimes(n-k)} ] \Big] \Bigg] \leq \overline{V_k} =  \frac{1}{n!}\sum_\pi \mathrm{Var}\Bigg[\mathrm{Tr} \Big[ O^{(n)} \pi [ \otimes_{r = 1}^k \hat{\rho}^{(r)} \otimes\rho^{\otimes(n-k)} ] \pi^\dagger \Big] \Bigg] \label{eq:bnd_Vk}
\end{align}
and helps us formalize the variance bound for arbitrary batch shadow estimator.
\begin{proposition}
Let $\mathrm{Tr}(O^{(n)} \rho^{\otimes n})$ be a $n^{th}$ order trace function and let $\tilde X^{n'}_n$ with $n' \geq n$ be the associated batch shadow estimator as defined in Eq.~\eqref{est_DS}. Then, the associated variance obeys
\begin{equation}
    \mathrm{Var}[\tilde{X}^{(n')}_n] \leq \sum_{j=1}^n \binom{n}{j} \frac{\binom{n'-n}{n-j}}{\binom{n'}{n}} \sum_{k=1}^j \big(\frac{n'}{M}\big)^k \big(1-\frac{n'}{M}\big)^{j-k} \,\overline{V_k}.
\end{equation}
For $n'=n$, this bound further simplifies to
\begin{equation}
    \mathrm{Var}[\tilde{X}^{(n)}_n] \leq \sum_{k=1}^n \binom{n}{k} \Big(\frac{n}{M}\Big)^k \Big(1-\frac{n}{M}\Big)^{n-k} \, \overline{V_k}.
\end{equation}
\label{prop1}
\end{proposition}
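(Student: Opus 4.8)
The plan is to treat this as essentially a bookkeeping statement sitting on top of the exact variance decomposition already derived in the preceding analysis. By the time one reaches the proposition, the genuinely substantive computation---namely the identity expressing $\mathrm{Var}[\tilde{X}^{(n')}_n]$ as an explicit linear combination of the quantities $V_k$ defined in Eq.~\eqref{eq:def_Vk} (Eq.~\eqref{eq:VarXnn_v2} for the case $n'=n$, and its stated generalization for arbitrary $n'\ge n$)---has already been carried out. So I would start from that exact identity and bound it term by term.

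First I would record the only structural fact needed about the exact decomposition: every coefficient multiplying a given $V_k$ is nonnegative. The binomial factors $\binom{n}{j}$, $\binom{n'-n}{n-j}$, $\binom{n'}{n}$, $\binom{j}{k}$ and the powers $(n'/M)^k$ are manifestly $\ge 0$, and the remaining factors $(1-n'/M)^{j-k}$ (respectively $(1-n/M)^{n-k}$ in the $n'=n$ case) are nonnegative precisely when $M \ge n'$. Since forming $n'$ disjoint nonempty batches already requires $M \ge n'$, this nonnegativity is automatic in the regime of interest, and I would state it as a standing assumption.

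Then I would invoke the bound $V_k \le \overline{V_k}$ from Eq.~\eqref{eq:bnd_Vk}, which is the elementary fact that the variance of an average of identically distributed random variables is at most the average of their variances; this was established just above the proposition via a Cauchy--Schwarz estimate, so I would cite it directly. Because each $V_k$ enters the exact decomposition with a nonnegative weight, substituting $V_k \mapsto \overline{V_k}$ preserves the inequality and yields the upper bound for general $n'$; specializing to $n'=n$ and using the collapsed form of the prefactors reproduces the simplified second bound. This closes the argument.

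The main obstacle is therefore not in this proposition at all but in the exact variance identity it leans on, which I would assume as given: the crux there is to expand $\mathrm{Var}[\tilde{X}^{(n)}_n]$ as a double sum over permutations $\pi,\pi'$ and batch-index tuples, use $\mathbb{E}[\hat{\rho}]=\rho$ to average away every shadow appearing only once, group the surviving covariance terms by the number $k$ of shadows shared between the two arguments, and exploit the averaging over $\pi,\pi'$ to argue that only $k$ (and not the positions of the repeated shadows) enters. The one point worth flagging in the final bounding step is that the bound becomes useful only once each $\overline{V_k}$ is itself controlled; this is where the shadow-norm estimates of Lemma~\ref{lemm2} and Lemma~\ref{lemm5} come into play when the generic $O^{(n)}$ is later specialized to the swap operators relevant for $R^{(2)}(\rho_{AB})$ and $\tilde{S}^{(2)}(\rho_{AB})$.
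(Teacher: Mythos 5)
Your proposal is correct and takes essentially the same route as the paper: the paper likewise starts from the exact decompositions of $\mathrm{Var}[\tilde{X}^{(n)}_n]$ and $\mathrm{Var}[\tilde{X}^{(n')}_n]$ into the $V_k$ (Eq.~\eqref{eq:VarXnn_v2} and its general-$n'$ analogue) and simply substitutes the Cauchy--Schwarz bound $V_k \le \overline{V_k}$ of Eq.~\eqref{eq:bnd_Vk}, with the nonnegativity of the coefficients implicit in the standing assumption that $n'$ divides $M$ (so $M \ge n'$). Your explicit nonnegativity remark is a sound small addition, and your specialization to $n'=n$ via $\binom{n'-n}{n-j}$ forcing $j=n$ is exactly how the two displays are reconciled (note that the inner $\binom{j}{k}$ present in the exact identity is evidently dropped by typo in the proposition's first display).
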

One can further bound $\overline{V_k}$ using the formalism introduced in in~\cite{RathFisher2021}. Then, using the Chebyshev bound, one can obtain concrete sample complexity bounds to evaluate arbitrary functions $\tilde{X}_n^{n'}$ using batch shadows. 
More concretely, for comparison with the U-statistics estimator, Eq.~(D7) of~\cite{RathFisher2021}, we can re-write the U-statistics estimator as
\begin{align}
    \mathrm{Var}[\hat{X}_n] &= \sum_{k=1}^n \binom{n}{k} \frac{\binom{M-n}{n-k}}{\binom{M}{n}} V_k = \sum_{\ell=1}^n \frac{\binom{n}{\ell}^2}{\binom{M}{\ell}} \Big[\sum_{k=1}^\ell \binom{\ell}{k} (-1)^{\ell-k} V_k \Big] .
\end{align}
To first and second order in $\frac{1}{M}$, we now obtain by taking $M \gg 1$:
\begin{align}
    \mathrm{Var}[\hat{X}_n] &\simeq \frac{n^2}{M} V_1 + \frac{n^2(n-1)^2}{2M^2} (V_2 - 2V_1) +\mathcal{O}\Big(\frac{1}{M^2}\Big).
\end{align}
We note that the behavior for large $M$ depends on how $n'$ relates to $M$. Whether $n'$ is taken to be independent of $M$ (as in the case $n'=n$) or whether it is taken to just be proportional to $M$ (as in the other extreme case $n'=M$ which reproduces standard U-statistics). One finds that $\mathrm{Var}[\tilde{X}^{(n')}_n]$ and $\mathrm{Var}[\hat{X}_n]$ have the same behavior as $\frac{n^2}{M}V_1$ at first order in $\frac{1}{M}$ for any value of $n'$. 
At second order, $\mathrm{Var}[\tilde{X}^{(n')}_n]$ is only slightly (by a factor $n'/(n'-1)$) larger than $\mathrm{Var}[\hat{X}_n]$.
Hence, we really do not lose much in the precision when we use our new batch shadow technique instead of the standard U-statistics estimator of the classical shadows while we evidently achieve great improvements in runtimes of the classical treatment of the measurement data.
\medskip

With these general statements at hand, we are now in a position to employ them in order to deduce concrete and simple variance bounds for the simplest (and most relevant) batch shadow estimator $(n' = n)$ for the functions of interest in this paper as introduced in the main text: for the purity $X_2 = \mathrm{Tr}(\rho_{AB}^2) = \mathrm{Tr}(\mathbb{S}_{1,2}^{(AB)} \rho_{AB} \otimes \rho_{AB})$ and for the functional where $O^{(4)} = \mathcal{S}$: \mbox{$X_4 = \mathrm{Tr}(\mathcal{S} \rho_{AB}^{\otimes 4}) = \mathrm{Tr}\Big(\mathbb{S}_{14,}^{(A)} \otimes \mathbb{S}_{2,3}^{(A)} \otimes \mathbb{S}_{1,2}^{(B)} \otimes \mathbb{S}_{3,4}^{(B)} \rho^{\otimes 4}\Big)$}. 
We will then provide the sample complexity bounds to evaluate our quantities using properties given in Appendix~\ref{app:paulishadows}.

\subsection{Sample complexity to evaluate the purity} \label{app:puritybounds}
This section aims at providing sample complexity bounds to evaluate the purity using the batch shadow estimator formed using two batches of Pauli shadows.
The purity of a $N$-qubit quantum state $\rho_{AB}$ can be expressed as:
\begin{equation}
    X_2 = \mathrm{Tr}(\mathbb{S}_{1,2}^{(AB)} \rho_{AB} \otimes \rho_{AB}) = \mathrm{Tr}(\rho_{AB}^2),
\end{equation}
where $\mathbb{S}_{1,2}^{(AB)}$ is the Swap operator.
Given $M$ Pauli shadows, the corresponding batch shadow estimator $\tilde{X}_2^{(2)}$ (with $n' = 2$) of the purity can be written as

\begin{equation}
     \tilde X^{(2)}_2 = \frac{1}{2!} \sum_{b_1 \ne b_2} \mathrm{Tr} \Big[ \mathbb{S}_{1,2}^{(AB)} \, \bigotimes_{i = 1}^2 \tilde{\rho}^{(b_i)}\Big],
\end{equation}
where each batch shadow $\tilde \rho^{(b)}$, for $b = 1, \, 2$ writes:
\begin{equation}
    \tilde \rho^{(b)}=\frac2M\sum_{r=(b-1)M/2+1}^{bM/2}\hat \rho^{(r)}.
\end{equation}
Our goal is to bound $\mathrm{Var}[\tilde X^{(2)}_2]$ for Pauli shadows (this restriction is important, because we will need all auxiliary statements from App.~\ref{app:paulishadows}) . Using Proposition.~\ref{prop1}, this variance explicitly can be bounded as 
\begin{equation}
    \mathrm{Var}[\tilde X^{(2)}_2] \leq \frac{4}{M}\overline{V_1} + \frac{4}{M^2} (\overline{V_2} - 2\overline{V_1})  \leq \frac{4}{M}\overline{V_1} + \frac{4}{M^2} \overline{V_2}. \label{eq:VarX22}
\end{equation}
The next step consists of obtaining the bounds on the terms $\overline{V_k}$. From the expression of Eq.~\eqref{eq:bnd_Vk}, we notice that for $n= 2$, only two permutations need to be considered: $\pi = \mathbb{I}$ and $\pi = \mathbb{S}$. In each case $\pi^\dagger O^{(2)} \pi = \pi^\dagger \mathbb{S} \pi= \mathbb{S}$. Now recalling Lemma~\ref{lemm2} and Lemma~\ref{lemm3}, we can compute the bounds on $\overline{V_1}$ and $\overline{V_2}$ respectively:
\begin{align}
    \overline{V_1} & = \mathrm{Var}\Big[\mathrm{Tr} \big[ \mathbb{S}_{1,2}^{(AB)} ( \hat{\rho} \otimes\rho_{AB} ) \big] \Big] = \mathrm{Var}\Big[\mathrm{Tr} \big[ \hat{\rho} \rho_{AB} \big] \Big] \leq \mathrm{Tr}[\rho_{AB}^2]2^N \leq 2^N, \\
    \overline{V_2} & = \mathrm{Var}\Big[\mathrm{Tr} \big[ \mathbb{S}_{1,2}^{(AB)} ( \hat{\rho}^{(1)} \otimes \hat{\rho}^{(2)} ) \big] \Big] = \mathrm{Var}\Big[\mathrm{Tr} \big[ \hat{\rho}^{(1)} \hat{\rho}^{(2)} \big] \Big] \leq 8.5^N \leq 3^{2N}.
\end{align}
Then from Eq.~\eqref{eq:VarX22}
, we obtain the following bound on $\mathrm{Var}[\tilde X^{(2)}_2]$:
\begin{align}
    \mathrm{Var}[\hat{X}^{(2)}_2] &\leq \frac{4}{M} \overline{V_1} + \frac{4}{M^2} \overline{V_2} \leq \frac{4}{M} 2^N + \frac{4}{M^2} 3^{2N}. \label{eq:bndp2last}
\end{align}
Recalling the Chebyshev’s inequality mentioned in Eq.~\eqref{Cby}, we conclude
\begin{equation}
    \mathrm{Pr}[|\tilde{X}_2^{(2)} - X_2| \geq \epsilon] \leq \frac{\mathrm{Var}[\tilde{X}^{(2)}_2]}{\epsilon^2} \leq \frac{4}{\epsilon^2}\Bigg[ \frac{2^N}{M} + \frac{3^{2N}}{M^2}  \Bigg]. \label{var-p2_bnd_M}
\end{equation}
This allows us to formulate a concise sample complexity bound.
\begin{proposition}
Suppose that we wish to estimate the purity $X_2=\Tr (\rho_{AB}^2) = \Tr (\mathbb{S}_{1,2}^{(AB)}\, \rho_{AB}^{\otimes 2})$ of an $N$-qubit state $\rho$ using the batch shadow estimator $\tilde{X}^{(2)}_2$ constructed from Pauli shadows. Then for $\epsilon, \, \delta \in (0,1)$, a total of 
\begin{equation}
    M \ge 2\frac{3^N}{\epsilon\sqrt{\delta}} \bigg(\sqrt{1+a_N^2}+a_N \bigg) \quad \text{with} \quad a_N = \left( \frac{2}{3}\right)^N/\epsilon\sqrt{\delta}
\end{equation}
measurements suffices to ensure \mbox{$\mathrm{Pr}[|\tilde{X}_2^{(2)} - X_2| \geq \epsilon] \leq \delta$}. 
\end{proposition}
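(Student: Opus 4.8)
The plan is to start directly from the probabilistic tail bound already established in Eq.~\eqref{var-p2_bnd_M}, namely $\mathrm{Pr}[|\tilde{X}_2^{(2)} - X_2| \geq \epsilon] \leq \frac{4}{\epsilon^2}\big(\frac{2^N}{M} + \frac{3^{2N}}{M^2}\big)$, and simply determine how large $M$ must be so that this right-hand side does not exceed $\delta$. Since that right-hand side is already a rigorous upper bound on the failure probability, forcing it below $\delta$ is a sufficient condition, and the whole statement collapses to solving an inequality in the single variable $M$; no further probabilistic input is required.

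First I would impose the sufficient condition $\frac{4}{\epsilon^2}\big(\frac{2^N}{M} + \frac{3^{2N}}{M^2}\big) \leq \delta$ and clear denominators by multiplying through by $M^2>0$. This yields the quadratic inequality $\frac{\epsilon^2\delta}{4}\,M^2 - 2^N M - 3^{2N} \geq 0$, whose leading coefficient $\frac{\epsilon^2\delta}{4}$ is strictly positive. The associated parabola therefore opens upward, and since the constant term $-3^{2N}$ is negative it has exactly one positive root $M_*$; the inequality holds precisely for $M \geq M_*$, which is exactly the ``suffices'' direction we wish to certify.

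Next I would apply the quadratic formula to extract the positive root, using $3^{2N}=9^N$ and $4^N=2^{2N}$, to obtain
\begin{equation}
M_* = \frac{2^N + \sqrt{4^N + \epsilon^2\delta\,3^{2N}}}{\epsilon^2\delta/2} = \frac{2}{\epsilon^2\delta}\Big(2^N + \sqrt{4^N + \epsilon^2\delta\,9^N}\Big).
\end{equation}
The only remaining task is cosmetic: rewriting $M_*$ in the compact form advertised in the statement. Factoring under the square root via $\sqrt{4^N + \epsilon^2\delta\,9^N} = 3^N\epsilon\sqrt{\delta}\,\sqrt{1+a_N^2}$ with $a_N = (2/3)^N/(\epsilon\sqrt{\delta})$, and observing that the leading term regroups as $\frac{2\cdot 2^N}{\epsilon^2\delta} = \frac{2\cdot 3^N}{\epsilon\sqrt{\delta}}\,a_N$, collapses the expression to $M_* = \frac{2\cdot 3^N}{\epsilon\sqrt{\delta}}\big(\sqrt{1+a_N^2}+a_N\big)$, matching the claimed threshold exactly.

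The argument has no genuine obstacle — it is elementary algebra once Eq.~\eqref{var-p2_bnd_M} is in hand. The only points deserving care are tracking the direction of the inequality (we need $M$ \emph{above} the positive root, not below it) and performing the regrouping of the $\epsilon$, $\delta$, $2^N$ and $3^N$ factors cleanly so that the $a_N$ form emerges. I would also briefly note that the discarded root is negative (immediate from $4^N + \epsilon^2\delta\,9^N > 4^N$), which confirms that $M \geq M_*$ is in fact both necessary and sufficient for the quadratic inequality, and hence sufficient for the desired confidence bound.
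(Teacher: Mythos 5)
Your proof is correct and is essentially the paper's own (largely implicit) argument: the paper establishes the Chebyshev tail bound in Eq.~\eqref{var-p2_bnd_M}, and the proposition is exactly the positive root of the resulting quadratic in $M$, which you solve and regroup into the advertised $a_N$ form without error. Your added remark that the second root is negative, so $M \geq M_*$ is both necessary and sufficient for the quadratic inequality, is a small but valid bonus over what the paper spells out.
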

The scaling in terms of system size $N$ is dominated by $3^N$ which is a strict improvement over general quantum state tomography (which would require at least $4^N/\epsilon^2$ measurements).
This kind of scaling for the purity is also observed for SIC POVM measurements on independent copies where the sample complexity bound scales as $M \propto 3^N/\epsilon^2\delta$~\cite{stricker2022SIC}. 
But, when $M$ becomes sufficiently large, the scaling in Eq.~\eqref{eq:bndp2last} is dominated  by the first term $(k = 1)$ which is $ \propto 2^N/M$. This then produces a measurement complexity that scales as $M \propto 2^N/\epsilon^2 \delta$. Similar scaling behavior in this limit $M \to \infty$ have also  been observed in~\cite{Elben2020b,RathFisher2021} and reproduces an error decay rate proportional to $1/\sqrt{M}$ -- the ultimate limit for any Monte Carlo averaging procedure.

\subsection{Sample complexity of $X_4$} \label{app:boundX4}
In this section, we shall derive analytical expressions to compute the sample complexity bound on the $X_4$ functional. We shall use the tensor network graphical language to facilitate the understanding of the subsequent calculations for the reader.
Tensor network diagrams are popular in rendering heavy expressions of calculations in terms of simple graphical representations. For interested readers, we refer to the following references for a thorough introduction~\cite{biamonte2017TensorNet,Bridgeman_2017,Kueng2019}.
Fig.~\ref{fig1-app4} summarizes all the essential graphical tools that are required for our arguments.  
\begin{figure}[htbp!]
\centering
\begin{minipage}{\linewidth}
    \includegraphics[width=\linewidth]{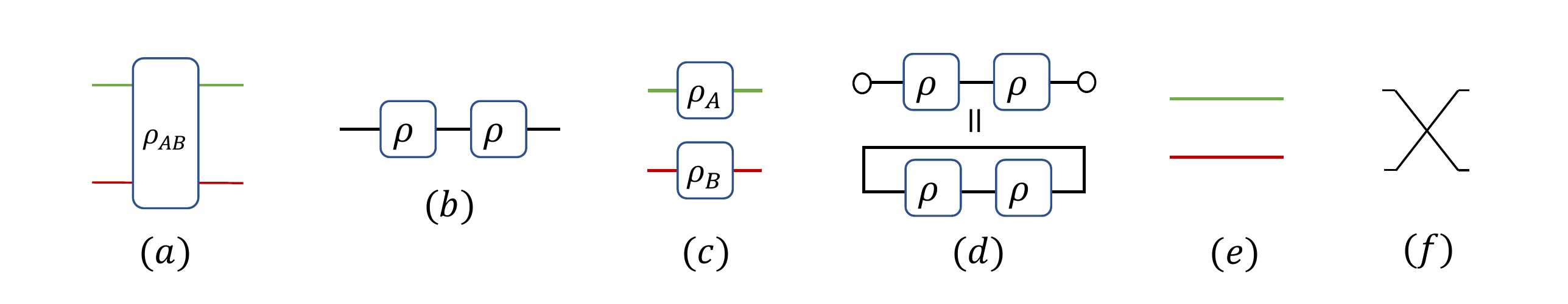} 
\end{minipage}
\caption{Important tensor network diagrams: (a) A bipartite quantum state $\rho_{AB}$ with the green (or red) legs defining the indices of subsystem A (or B) respectively. (b) By index contraction, we have multiplication of two matrices giving $\rho^2$. (c) $\rho_A \otimes \rho_B$ (d) $\mathrm{Tr}(\rho^2)$ where we have replaced the standard trace loop by circles at the end points that virtually connect to each other only horizontally at the same level. 
(e) The identity function of each sub-system $\mathbb{I}_A \otimes \mathbb{I}_B = \mathbb{I}_{AB}$ (f) The swap operator: $\mathbb{S}_{k,l}(\ket{i_k} \otimes \ket{i_l} ) = \ket{i_l} \otimes \ket{i_k}$. } \label{fig1-app4}
\end{figure}

The function $X_4$ defined in terms of the four-copy operator $O^{(4)} = \mathcal{S}$ (as introduced in the main text in Eq.~\eqref{eq:renyi2OE} and also in Ref.~\cite{liu2022detecting}) for a bipartite $(N = N_A + N_B)$-qubit state can be rewritten as
\begin{align}
    &&
    X_4 &= \mathrm{Tr}(\mathcal{S} \, \rho_{AB}^{\otimes 4}) = \mathrm{Tr}\Big(\mathbb{S}_{14}^{(A)} \otimes \mathbb{S}_{23}^{(A)} \otimes \mathbb{S}_{12}^{(B)} \otimes \mathbb{S}_{34}^{(B)} \rho_{AB}^{\otimes 4}\Big) \label{eq:X4_1}
    \\
    &&
    &= \mathrm{Tr}_{A'ABB'}\Big[\big(\rho_{AB}\,\mathbb{S}_{AA'}\,\rho_{AB}\,\mathbb{S}_{BB'}\big)\big(\rho_{AB}\,\mathbb{S}_{AA'}\,\rho_{AB}\,\mathbb{S}_{BB'}\big) \Big],\label{eq:X4_2}
\end{align}
where $\mathbb{S}^{(\Gamma)}_{cd}$ with $c,d \in [1,\ldots,4]$ and $\Gamma \in \{A,B\}$ is the swap operator acting on system $\Gamma$ on the copies $c$ and $d$ of the density matrices $\rho_{AB}$. We also assume in the above expression, an implicit reordering of the tensor products and identity operators on the unmarked subsystems 
$\rho_{AB} \equiv \rho_{AB} \otimes \mathbb{I}_{A'B'}$, $\mathbb{S}_{AA'} \equiv \mathbb{S}_{AA'} \otimes \mathbb{I}_{BB'}$ and $\mathbb{S}_{BB'} \equiv \mathbb{I}_{AA'} \otimes \mathbb{S}_{BB'}$.
Fig.~\ref{fig2-app4}, shows an equivalent expression of $X_4$ as a tensor network diagram.
We consider here the simplest batch shadow estimator $\tilde{X}^{(4)}_4$ of this function that can be evaluated from $M$ Pauli shadows as: 
\begin{equation}
    \tilde X^{(4)}_4 =  \frac{1}{4!} \sum_{b_1 \ne \dots \ne b_4} \mathrm{Tr} \Big[ \mathcal{S} \, \bigotimes_{i = 1}^4 \tilde{\rho}^{(b_i)}\Big].
\end{equation}
where each batch shadow $\tilde \rho^{(b)}$, for $b = 1, \dots , \, 4$ is an average over $M/4$ Pauli shadows given as:
\begin{equation}
    \tilde \rho^{(b)}=\frac4M\sum_{r=(b-1)M/4+1}^{bM/4}\hat \rho^{(r)}
\end{equation}
Our task is to bound the  variance $\mathrm{Var}[\tilde{X}_4^{(4)}]$. With the help of Proposition~\ref{prop1}, we can simply bound the corresponding variance as
\begin{align}
    \mathrm{Var}[\tilde{X}^{(4)}_4] &\leq  \sum_{k=1}^4 \binom{4}{k} \Big(\frac{4}{M}\Big)^k \Big(1-\frac{4}{M}\Big)^{4-k} \overline{V_k}
    \leq \sum_{k=1}^4 \binom{4}{k} \Big(\frac{4}{M}\Big)^k \overline{V_k}, \label{eq:bndX4}
\end{align}
where each of the $\overline{V_k}$ can expressed from Eq.~\eqref{eq:bnd_Vk} as 
\begin{equation}
     \overline{V_k} = \frac{1}{4!}\sum_\pi \mathrm{Var}\Bigg[\mathrm{Tr} \Big[ \mathcal{S} \pi [ \otimes_{r = 1}^k \hat{\rho}^{(r)} \otimes\rho_{AB}^{\otimes(4-k)} ] \pi^\dagger \Big] \Bigg]. \label{eq:vk_remind}
\end{equation}
Our goal now, is to calculate explicitly the bounds on the term $\overline{V_k}$ for each value of $k = 1, \ldots, 4$. 
\begin{figure}[htbp!]
\centering
\begin{minipage}{\linewidth}
    \includegraphics[width=\linewidth]{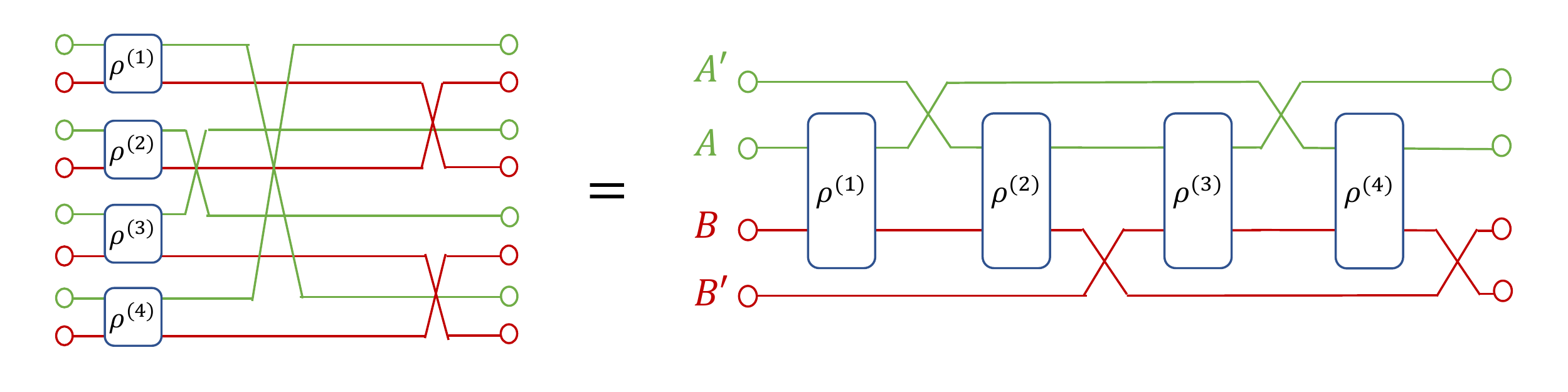}
\end{minipage}
\caption{Graphical expression of $X_4$: Expression of $X_4 = \mathrm{Tr}\Big((\mathbb{S}_{14}^{(A)} \otimes \mathbb{S}_{23}^{(A)} \otimes \mathbb{S}_{12}^{(B)} \otimes \mathbb{S}_{34}^{(B)}) (\rho^{(1)} \otimes \rho^{(2)} \otimes \rho^{(3)} \otimes \rho^{(4)} \Big) = \mathrm{Tr}\Big[\mathbb{S}_{BB'} \rho^{(4)} \mathbb{S}_{AA'} \rho^{(3)} \mathbb{S}_{BB'} \rho^{(2)} \mathbb{S}_{AA'} \rho^{(1)} \Big]$ in terms of the diagrammatic notations introduced earlier.}\label{fig2-app4}
\end{figure}
\begin{itemize}
    \item 
    \begin{figure}[htbp!]
    \centering
    \begin{minipage}{\linewidth}
        \includegraphics[width=\linewidth]{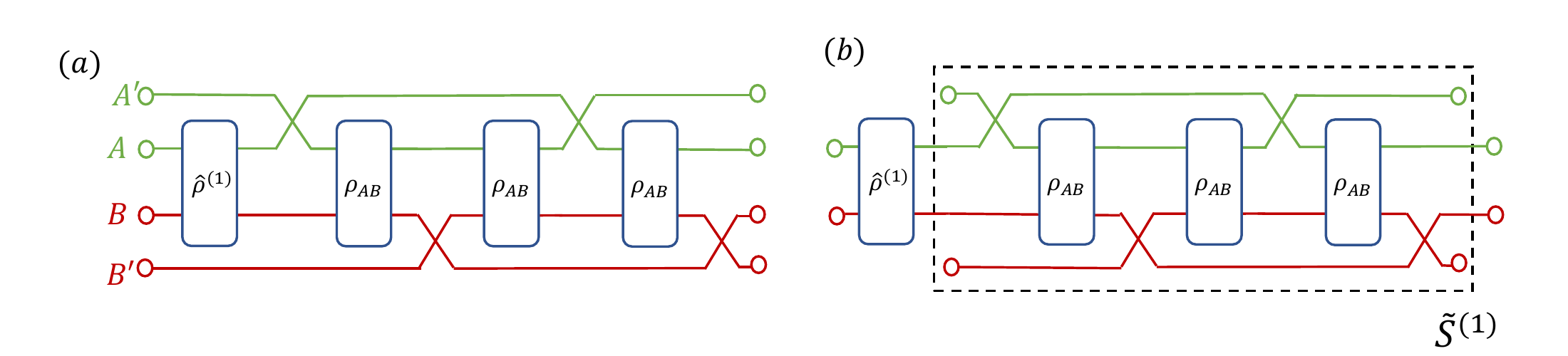}
    \end{minipage}
    \caption{Graphical representation for the case $k = 1$: (a) Diagrammatic expression of $\mathrm{Tr}\Big (\mathcal{S} \,\big [ \hat{\rho}^{(1)} \otimes \rho_{AB}^{\otimes (3)} \big]\Big)$ from which it is easily seen that the expression is invariant with respect to any position of the shadow $\hat{\rho}^{(1)}$.
    (b) The operator $\Tilde{\mathcal{S}}^{(1)}$ in Eq.~\eqref{eq:S1} (marked in the dashed rectangle) that acts on $\hat{\rho}^{(1)}$.}\label{fig3-app4}
    \end{figure}
    For $k = 1$, the trace terms in $\overline{V_1}$ contain a single Pauli shadow $\hat{\rho}^{(1)}$ and three density matrices $\rho_{AB}$.
    Regardless of $\pi$, we always obtain the same expression for the traces in Eq.~\eqref{eq:bnd_Vk} that explicitly read as
    \begin{equation}
        \mathrm{Tr} \Big[ \mathcal{S}\, [ \hat{\rho}^{(1)} \otimes\rho^{\otimes 3} ] \Big] = \mathrm{Tr} \big[ \tilde{\mathcal{S}}^{(1)} \hat{\rho}^{(1)} \big] \quad \text{with} \quad \tilde{\mathcal{S}}^{(1)} = \mathrm{Tr}_{A'B'}[\mathbb{S}_{BB'}\,\rho_{AB}\,\mathbb{S}_{AA'}\,\rho_{AB}\,\mathbb{S}_{BB'}\,\rho_{AB}\,\mathbb{S}_{AA'}], \label{eq:S1}
    \end{equation}
    see also the diagrammatic expression given in Fig.~\ref{fig3-app4} for a visual illustration.
    We notice that $\Tilde{\mathcal{S}}^{(1)}$ is a Hermitian operator.
    Using Lemma~\ref{lemm2}, we can bound $\overline{V_1}$ as
    \begin{align}
        \overline{V_1} \leq \mathrm{Var}\Bigg[\mathrm{Tr} \Big( \mathcal{S} \, [ \hat{\rho}^{(1)} \otimes\rho_{AB}^{\otimes 3} ] \Big) \Bigg] = \mathrm{Var} \Big[\mathrm{Tr} \big( \Tilde{\mathcal{S}}^{(1)} \hat{\rho}^{(1)} \big) \Big] \leq \mathrm{Tr}[(\Tilde{\mathcal{S}}^{(1)})^2] 2^N.
    \end{align}
    From Fig.~\ref{fig3-app4}(b), we can further expand the trace term $\mathrm{Tr}[(\Tilde{\mathcal{S}}^{(1)})^2]$ by performing the appropriate diagrammatic tensor contractions. We can then explicitly write it as
    \begin{equation}
        \mathrm{Tr}[(\Tilde{S}^{(1)})^2] = \mathrm{Tr}( \tilde O^{(6)} \rho_{AB}^{\otimes 6} ) \quad \text{with} \quad \tilde O^{(6)} = \mathbb{S}_{12}^{(A)} \otimes \mathbb{S}_{36}^{(A)}\otimes \mathbb{S}_{45}^{(A)} \otimes \mathbb{S}_{14}^{(B)} \otimes \mathbb{S}_{23}^{(B)} \otimes \mathbb{S}_{56}^{(B)}.
    \end{equation}
    We now use Hölder's inequality for matrices (\mbox{$|\mathrm{Tr}(AB)| \leq \|A \|_1 \|B \|_\infty$}, where $\| \cdot \|_1$ and $\| \cdot \|_\infty$ denote trace and operator norm, and also define for completeness, for a matrix $C$: $|C| = \sqrt{C^\dagger C}$ and the Schatten $p-$norm of $C$: $\|C\|_p = [\mathrm{Tr}(|C|^p)]^\frac1p$). Now we can relate this upper bound to a product of matrix norms that is easier to parse:
    \begin{equation}
        \mathrm{Tr}\big((\Tilde{\mathcal{S}}^{(1)})^2\big) \leq  \big|\mathrm{Tr} \big( \tilde O^ {(6)} \rho_{AB}^{\otimes 6}\big)\big| \leq \|\tilde O^{(6)}\|_\infty \|\rho_{AB}^{\otimes 6}\|_1 = 1\times 1 = 1, \label{prop_end}
    \end{equation}
because the 6-fold tensor product $\rho_{AB}^{\otimes 6}$ of a quantum state is again a quantum state that is normalized in trace norm ($\|\rho_{AB}^{\otimes 6} \|_1 = \|\rho_{AB} \|_1^6=\Tr (\rho_{AB})^6 =1^6=1$) and $\tilde{O}^{(6)}$ is a tensor product of swap operators and therefore unitary. Unitary operators $U$, in particular, obey $\|U \|_\infty =1$.
    Thus, we obtain the following streamlined bound on $\overline{V_1}$:
    \begin{equation}
        \overline{V_1} \leq 2^N \leq 3^N. \label{eq:bndx4k1}
    \end{equation}
    The final inequality ($2^N \leq 3^N$) is very loose, but will considerably simplify the final stage, where we put all our bounds together.
    \item
    \begin{figure}[htbp!]
    \centering
    \begin{minipage}{\linewidth}
        \includegraphics[width=\linewidth]{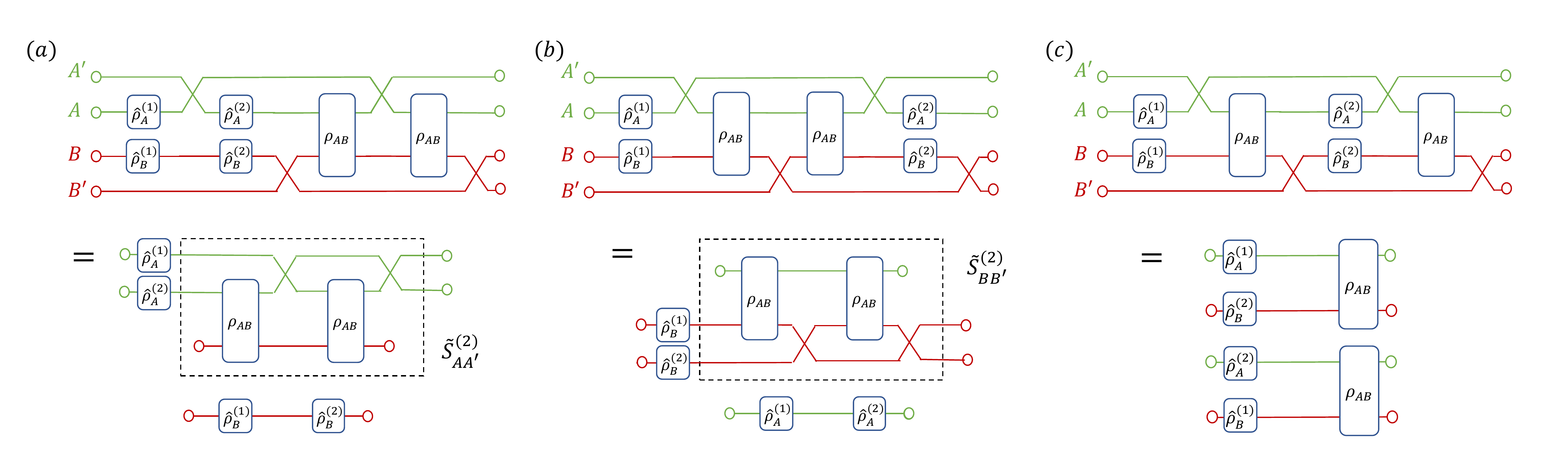}
    \end{minipage}
    \caption{Diagrammatic expression of relevant terms for $k = 2$:  (a) \mbox{$\mathrm{Tr}[\mathcal{S}\,( \hat{\rho}^{(1)}\otimes\hat{\rho}^{(2)}\otimes\rho_{AB}^{\otimes 2})] = \mathrm{Tr} \big[ \Tilde{\mathcal{S}}^{(2)}_{AA'}(\hat{\rho}^{(1)}_A\otimes\hat{\rho}^{(2)}_{A'}) \big] \, \mathrm{Tr}[\hat{\rho}^{(1)}_B\hat{\rho}^{(2)}_B] $} as in Eq.~\eqref{eq:S2AA}, (b) $\mathrm{Tr}[\mathcal{S}( \hat{\rho}^{(1)}\otimes\rho_{AB}^{\otimes 2}\otimes\hat{\rho}^{(2)})] = \mathrm{Tr} \big[ \Tilde{\mathcal{S}}^{(2)}_{BB'}(\hat{\rho}^{(1)}_B\otimes\hat{\rho}^{(2)}_{B'}) \big] \, \mathrm{Tr}[\hat{\rho}^{(1)}_A\hat{\rho}^{(2)}_A]$ as in Eq.~\eqref{eq:S2BB} and (c) \mbox{$\mathrm{Tr}[\mathcal{S} \,( \hat{\rho}^{(1)}\otimes\rho_{AB}\otimes\hat{\rho}^{(2)}\otimes\rho_{AB})] = \mathrm{Tr} \big[ (\hat{\rho}^{(1)}_A\otimes\hat{\rho}^{(2)}_B) \rho_{AB} \big] \, \mathrm{Tr} \big[(\hat{\rho}^{(2)}_A\otimes\hat{\rho}^{(1)}_B) \rho_{AB} \big]$} as in Eq.~\eqref{eq:S2AB}.} \label{fig4-app4}
    \end{figure}
    
    For $k = 2$, the variance term contains combinations of two distinct Pauli shadows $\hat{\rho}^{(1)} = \hat{\rho}^{(1)}_A \otimes \hat{\rho}^{(1)}_B $ and $\hat{\rho}^{(2)} = \hat{\rho}^{(2)}_A \otimes \hat{\rho}^{(2)}_B$, as well as two density matrices $\rho_{AB}$. 
    From all possible permutations, we need to consider three families of permutations that give different contributions to Eq.~\eqref{eq:bnd_Vk},  with each family containing the same number of permutations:
    \begin{enumerate}
        \item If $\pi \big[ \hat{\rho}^{(1)} \otimes\hat{\rho}^{(2)} \otimes\rho_{AB} \otimes\rho_{AB} \big] \pi^\dagger = \hat{\rho}^{(1)} \otimes\hat{\rho}^{(2)} \otimes\rho_{AB} \otimes\rho_{AB}$ or $\rho_{AB} \otimes\rho_{AB} \otimes\hat{\rho}^{(1)} \otimes\hat{\rho}^{(2)}$, or with the indices 1 and 2 exchanged on the right-hand-sides, then (cf Fig.~\ref{fig4-app4}(a))
        \begin{align}
        \mathrm{Tr} \Big[ \mathcal{S} \, \pi \big[ \hat{\rho}^{(1)} \otimes\hat{\rho}^{(2)} \otimes\rho_{AB} \otimes\rho_{AB} \big] \pi^\dagger \Big] = \mathrm{Tr} \big[ \Tilde{\mathcal{S}}^{(2)}_{AA'}(\hat{\rho}^{(1)}_A\otimes\hat{\rho}^{(2)}_{A'}) \big] \, \mathrm{Tr}[\hat{\rho}^{(1)}_B\hat{\rho}^{(2)}_B] 
        \label{eq:S2AA}
        \end{align}
        with $\Tilde{\mathcal{S}}^{(2)}_{AA'} = \mathrm{Tr}_B[\mathbb{S}_{AA'}\,\rho_{AB}\,\mathbb{S}_{AA'}\,\rho_{AB}].$
        The variance contribution $\overline{V_2^{(1)}}$ given by this set of permutations can be bound as
        \begin{equation}
            \overline{V_2^{(1)}} \leq \mathbb{E}\bigg[\mathrm{Tr} \big[ \Tilde{\mathcal{S}}^{(2)}_{AA'}(\hat{\rho}^{(1)}_A\otimes\hat{\rho}^{(2)}_{A'}) \big]^2 \, \mathrm{Tr}[\hat{\rho}^{(1)}_B\hat{\rho}^{(2)}_B]^2\bigg].
        \end{equation}
        To see this, we first use Lemma.~\ref{lemm1} to bound the original term as
        \begin{equation}
            \mathbb{E}\bigg[\mathrm{Tr} \big[ \Tilde{\mathcal{S}}^{(2)}_{AA'}(\hat{\rho}^{(1)}_A\otimes\hat{\rho}^{(2)}_{A'}) \big]^2 \, \mathrm{Tr}[\hat{\rho}^{(1)}_B\hat{\rho}^{(2)}_B]^2 \bigg]
            \leq \mathbb{E}\Bigg[\mathrm{Tr} \big[ \Tilde{\mathcal{S}}^{(2)}_{AA'}(\hat{\rho}^{(1)}_A\otimes\hat{\rho}^{(2)}_{A'}) \big]^2  \times \prod_{i = 1}^{N_B} \bigg( 5^2 \textbf{1} \{ \mathcal{B}_i = \mathcal{B}'_i\} + \Big(\frac12\Big)^2 \textbf{1} \{ \mathcal{B}_i \ne \mathcal{B}'_i\}\bigg) \Bigg] .
        \end{equation}
        As the Pauli basis chosen for subsystem $B$ is independent wrt the shadows in subsystem $A$, we can factorize the expectation value further:
        \begin{align}
            & \mathbb{E}\Bigg[\mathrm{Tr} \big[ \Tilde{\mathcal{S}}^{(2)}_{AA'}(\hat{\rho}^{(1)}_A\otimes\hat{\rho}^{(2)}_{A'}) \big]^2  \times \prod_{i = 1}^{N_B} \bigg( 5^2 \textbf{1} \{ \mathcal{B}_i = \mathcal{B}'_i\} + \Big(\frac12\Big)^2 \textbf{1} \{ \mathcal{B}_i \ne \mathcal{B}'_i\}\bigg) \Bigg] \notag \\
            & = \mathbb{E}\Bigg[\mathrm{Tr} \big[ \Tilde{\mathcal{S}}^{(2)}_{AA'}(\hat{\rho}^{(1)}_A\otimes\hat{\rho}^{(2)}_{A'}) \big]^2 \Bigg] \times \mathbb{E}\Bigg[ \prod_{i = 1}^{N_B} \bigg( 5^2 \textbf{1} \{ \mathcal{B}_i = \mathcal{B}'_i\} + \Big(\frac12\Big)^2 \textbf{1} \{ \mathcal{B}_i \ne \mathcal{B}'_i\}\bigg) \Bigg].
        \end{align}
        Now, the first expectation term can be bound by re-interpreting $\hat{\rho}_A^{(1)} \otimes \hat{\rho}_A^{(2)}$ as a classical shadows in a $2^{2N_A}$ dimensional Hilbert space and apply Lemma~\ref{lemm2}. The second expectation value is bounded directly by Lemma.~\ref{lemm1}. Combining both upper bounds then produces
        \begin{equation}
            \overline{V_2^{(1)}} \leq \mathrm{Tr}\big[(\Tilde{\mathcal{S}}^{(2)}_{AA'})^2\big]2^{2N_A} \, 8.5^{N_B}.
        \end{equation}
        Noting that the operator $\Tilde{\mathcal{S}}^{(2)}_{AA'}$ is Hermitian and rewriting it as $\mathrm{Tr}\big((\Tilde{\mathcal{S}}^{(2)}_{AA'})^2\big) = \mathrm{Tr}(\mathcal{S} \rho_{AB}^{\otimes 4})$, where $\mathcal{S}$ is a unitary operator, we can again apply Hölder's inequality for matrices to obtain
        \begin{equation}
            \mathrm{Tr}\big((\Tilde{\mathcal{S}}^{(2)}_{AA'})^2\big) = \mathrm{Tr}(\mathcal{S} \rho_{AB}^{\otimes 4}) \leq \|\mathcal{S}\|_\infty \|\rho_{AB}^{\otimes 4}\|_1 = 1\times 1 = 1. 
        \end{equation}
        Thus the final bound on $\overline{V_2^{(1)}}$ writes:
        \begin{equation}
            \overline{V_2^{(1)}} \leq 2^{2N_A} \, 8.5^{N_B}.
        \end{equation}
    \item If $\pi \big[ \hat{\rho}^{(1)} \otimes\hat{\rho}^{(2)} \otimes\rho_{AB} \otimes\rho_{AB} \big] \pi^\dagger = \hat{\rho}^{(1)} \otimes\rho_{AB} \otimes\rho_{AB} \otimes\hat{\rho}^{(2)}$ or $\rho_{AB} \otimes\hat{\rho}^{(1)} \otimes\hat{\rho}^{(2)} \otimes\rho_{AB}$, or with the indices 1 and 2 exchanged on the right-hand-sides, we obtain (cf Fig.~\ref{fig4-app4}(b))
        \begin{align}
        \mathrm{Tr} \Big[ \mathcal{S} \, \pi \big[ \hat{\rho}^{(1)} \otimes\hat{\rho}^{(2)} \otimes\rho_{AB} \otimes\rho_{AB} \big] \pi^\dagger \Big] = \mathrm{Tr} \big[ \Tilde{\mathcal{S}}^{(2)}_{BB'}(\hat{\rho}^{(1)}_B\otimes\hat{\rho}^{(2)}_{B'}) \big] \, \mathrm{Tr}[\hat{\rho}^{(1)}_A\hat{\rho}^{(2)}_A] 
        \label{eq:S2BB}
        \end{align}
        with $\quad \Tilde{\mathcal{S}}^{(2)}_{BB'} = \mathrm{Tr}_A[\mathbb{S}_{BB'}\,\rho_{AB}\,\mathbb{S}_{BB'}\,\rho_{AB}]$.
        This expression is similar to the previous one, with the roles of $A$ and $B$ exchanged. By following the same thread of arguments as in the previous case, we can express the final bound as:
        \begin{align}
            \overline{V_2^{(2)}} \coloneqq \mathrm{Var}\Big[ \mathrm{Tr} \Big[ \mathcal{S} \pi \big[ \hat{\rho}^{(1)} \otimes\hat{\rho}^{(2)} \otimes\rho_{AB} \otimes\rho_{AB} \big] \pi^\dagger \Big] \Big] \leq 2^{2N_B}\, 8.5^{N_A}.
        \end{align}
        \item If $\pi \big[ \hat{\rho}^{(1)} \otimes\hat{\rho}^{(2)} \otimes\rho_{AB} \otimes\rho_{AB} \big] \pi^\dagger = \hat{\rho}^{(1)} \otimes\rho_{AB} \otimes\hat{\rho}^{(2)} \otimes\rho$ or $\rho_{AB} \otimes\hat{\rho}^{(1)} \otimes\rho_{AB} \otimes\hat{\rho}^{(2)}$, or with the indices 1 and 2 exchanged on the right-hand-sides, then (cf Fig.~\ref{fig4-app4}(c))
        \begin{align}
        \mathrm{Tr} \Big[ \mathcal{S} \, \pi \big[ \hat{\rho}^{(1)} \otimes\hat{\rho}^{(2)} \otimes\rho_{AB} \otimes\rho_{AB} \big] \pi^\dagger \Big] & = \mathrm{Tr} \big[ (\hat{\rho}^{(1)}_A\otimes\hat{\rho}^{(2)}_B) \rho_{AB} \big] \, \mathrm{Tr} \big[(\hat{\rho}^{(2)}_A\otimes\hat{\rho}^{(1)}_B) \rho_{AB} \big] . \label{eq:S2AB}
        \end{align}
        Directly using Lemma~\ref{lemm5}, it then follows in this third case that
        \begin{align}
            \overline{V_2^{(3)}} \coloneqq \mathrm{Var}\Big[ \mathrm{Tr} \Big[ O^{(4)} \pi \big[ \hat{\rho}^{(1)} \otimes\hat{\rho}^{(2)} \otimes\rho_{AB} \otimes\rho_{AB} \big] \pi^\dagger \Big] \Big] \leq \mathrm{Tr}(\rho_{AB}^2)^2 \, 2^{2N} \leq 2^{2N},
        \end{align}
        because $\mathrm{Tr} \left( \rho_{AB}^2 \right)$ denotes the purity of $\rho_{AB}$ which can never exceed $1$.
    \end{enumerate}
    Combining the three cases above and incorporating these into Eq.~\eqref{eq:bnd_Vk}, we finally get
    \begin{align}
    \overline{V_2} \leq \frac{1}{3} \Big( \overline{V_2^{(1)}} + \overline{V_2^{(2)}} + \overline{V_2^{(3)}} \Big) \leq \frac{1}{3} \Big( 2^{2N_A} \, 8.5^{N_B} + 2^{2N_B} \,8.5^{N_A}  + 2^{2N} \Big) \leq 8.5^N \leq 3^{2N}.
    \end{align}
    Again, the last inequality ($8.5^N \leq 3^{2N}$) is rather loose, but will simplify putting all bounds together in the end. 
    \item
    For $k = 3$, $\overline{V_3}$ contains  3 distinct shadows $\hat{\rho}^{(1)} = \hat{\rho}^{(1)}_A \otimes \hat{\rho}^{(1)}_B $, $\hat{\rho}^{(2)} = \hat{\rho}^{(2)}_A \otimes \hat{\rho}^{(2)}_B$, $\hat{\rho}^{(3)} = \hat{\rho}^{(3)}_A \otimes \hat{\rho}^{(3)}_B$ and a single density matrix $\rho_{AB}$. We notice that regardless of $\pi$, we always obtain the same expression up to permuting the indices of the shadows for the traces in Eq.~\eqref{eq:bnd_Vk}. Thus considering the term for $\pi=\mathbb{I}$ in Eq.~\eqref{eq:bnd_Vk} is enough. This choice produces
    \begin{align}
            \mathrm{Tr} \Big[ \mathcal{S} \, \big[ \hat{\rho}^{(1)} \otimes\hat{\rho}^{(2)} \otimes\hat{\rho}^{(3)} \otimes\rho_{AB} \big] \Big] = \mathrm{Tr}[\hat{\rho}^{(2)}_A\hat{\rho}^{(3)}_A] \, \mathrm{Tr}[\hat{\rho}^{(1)}_B\hat{\rho}^{(2)}_B] \, \mathrm{Tr}[(\hat{\rho}^{(1)}_A\otimes\hat{\rho}^{(3)}_B)\rho_{AB}], \label{eq:S3}
    \end{align}
    cf Fig.~\ref{fig5-app4}(a). The corresponding variance term can be bounded as:
    \begin{align}
        \overline{V_3} \leq \mathbb{E} \bigg [  \mathrm{Tr}[\hat{\rho}^{(2)}_A\hat{\rho}^{(3)}_A]^2 \, \mathrm{Tr}[\hat{\rho}^{(1)}_B\hat{\rho}^{(2)}_B]^2 \, \mathrm{Tr}[(\hat{\rho}^{(1)}_A\otimes\hat{\rho}^{(3)}_B)\rho_{AB}]^2  \bigg].
    \end{align}
   To see this, we first use Lemma~\ref{lemm1} twice to write 
    \begin{equation}
        \overline{V_3} \leq \mathbb{E} \Bigg [   \prod_{i = 1}^{N_A} \bigg( 5^2 \textbf{1} \{ \mathcal{B}_i = \mathcal{B}'_i\} + \Big(\frac12\Big)^2 \textbf{1} \{ \mathcal{B}_i \ne \mathcal{B}'_i\}\bigg)  \prod_{i = 1}^{N_B} \bigg( 5^2 \textbf{1} \{ \mathcal{B}_i = \mathcal{B}'_i\} + \Big(\frac12\Big)^2 \textbf{1} \{ \mathcal{B}_i \ne \mathcal{B}'_i\}\bigg) \mathrm{Tr}[(\hat{\rho}^{(1)}_A\otimes\hat{\rho}^{(3)}_B)\rho_{AB}]^2  \Bigg]
    \end{equation}
    As in the previous case of $k = 2$, the measurement bases of subsystems $A$ and $B$ are chosen independent from everything else (including each other). We can use this statistical independence to factorize the remaining expecation values and bound $\overline{V_3}$ by the following expression:
    \begin{equation}
        \mathbb{E} \Bigg [   \prod_{i = 1}^{N_A} \bigg( 5^2 \textbf{1} \{ \mathcal{B}_i = \mathcal{B}'_i\} + \Big(\frac12\Big)^2 \textbf{1} \{ \mathcal{B}_i \ne \mathcal{B}'_i\}\bigg) \Bigg] \mathbb{E} \Bigg [   \prod_{j = 1}^{N_B} \bigg( 5^2 \textbf{1} \{ \mathcal{B}_j = \mathcal{B}'_j\} + \Big(\frac12\Big)^2 \textbf{1} \{ \mathcal{B}_j \ne \mathcal{B}'_j\}\bigg) \Bigg] \mathbb{E} \Bigg [\mathrm{Tr}[(\hat{\rho}^{(1)}_A\otimes\hat{\rho}^{(3)}_B)\rho_{AB}]^2  \Bigg]
    \end{equation}
    Now, with Lemma.~\ref{lemm1}, we bound the first two expectation terms by $8.5^{N_A}$ and $8.5^{N_B}$, respectively. The third term can be controlled using Lemma.~\ref{lemm2}. This finally results in
    \begin{equation}
        \overline{V_3} \leq 8.5^{N_A} \,8.5^{N_B}\, \mathrm{Tr}(\rho_{AB}^2) 2^N \leq 3^{3N},
    \end{equation}
    where we have once more used the fact that the purity obeys $\mathrm{Tr}(\rho_{AB}^2) \leq 1$. 
    \item for $k=4$, the variance term $\overline{V_4}$ is composed of four distinct shadows $\hat{\rho}^{(1)}$, $\hat{\rho}^{(2)}$, $\hat{\rho}^{(3)}$ and $\hat{\rho}^{(4)}$. Considering the term for $\pi=\mathbb{I}$ in Eq.~\eqref{eq:bnd_Vk},we obtain
    \begin{align}
    \mathrm{Tr} \Big[\mathcal{S}\, \big[ \hat{\rho}^{(1)} \otimes\hat{\rho}^{(2)} \otimes\hat{\rho}^{(3)} \otimes\hat{\rho}^{(4)} \big] \Big] = \mathrm{Tr}[\hat{\rho}^{(2)}_A\hat{\rho}^{(3)}_A] \, \mathrm{Tr}[\hat{\rho}^{(1)}_A\hat{\rho}^{(4)}_A] \, \mathrm{Tr}[\hat{\rho}^{(1)}_B\hat{\rho}^{(2)}_B] \, \mathrm{Tr}[\hat{\rho}^{(3)}_B\hat{\rho}^{(4)}_B] ,\label{eq:S4}
    \end{align}
    cf Fig.~\ref{fig5-app4}(b). 
    For other permutations $\pi$, we obtain the same kind of expressions, up to permuting the indices of the shadows (which doesn't affect the overall expectation value). We can bound this term by
    \begin{equation}
        \overline{V_4} \leq \mathbb{E} \Bigg [ \mathrm{Tr}[\hat{\rho}^{(2)}_A\hat{\rho}^{(3)}_A]^2 \, \mathrm{Tr}[\hat{\rho}^{(1)}_A\hat{\rho}^{(4)}_A]^2 \, \mathrm{Tr}[\hat{\rho}^{(1)}_B\hat{\rho}^{(2)}_B]^2 \, \mathrm{Tr}[\hat{\rho}^{(3)}_B\hat{\rho}^{(4)}_B]^2\Bigg] .
    \end{equation}
    Indeed, each trace term in the above expectation can be controlled using Lemma.~\ref{lemm1}. Noting that each measurement basis on subsystem $A$ is sampled independent from the ones in $B$ (and each other) for the four concerned shadows, we can factorize the above expectation value and obtain
    \begin{align}
        &&
        \overline{V_4} &\leq \mathbb{E} \Bigg [   \prod_{i = 1}^{N_A} \bigg( 5^2 \textbf{1} \{ \mathcal{B}_i = \mathcal{B}'_i\} + \Big(\frac12\Big)^2 \textbf{1} \{ \mathcal{B}_i \ne \mathcal{B}'_i\}\bigg) \Bigg]^2 \mathbb{E} \Bigg [   \prod_{j = 1}^{N_B} \bigg( 5^2 \textbf{1} \{ \mathcal{B}_j = \mathcal{B}'_j\} + \Big(\frac12\Big)^2 \textbf{1} \{ \mathcal{B}_j \ne \mathcal{B}'_j\}\bigg) \Bigg]^2 \\
        &&
        &\leq  8.5^{2N_A} \, 8.5^{2N_B} \leq 3^{4N}.
    \end{align}
    \begin{figure}[htbp!]
    \centering
    \begin{minipage}{0.8\linewidth}
        \includegraphics[width=\linewidth]{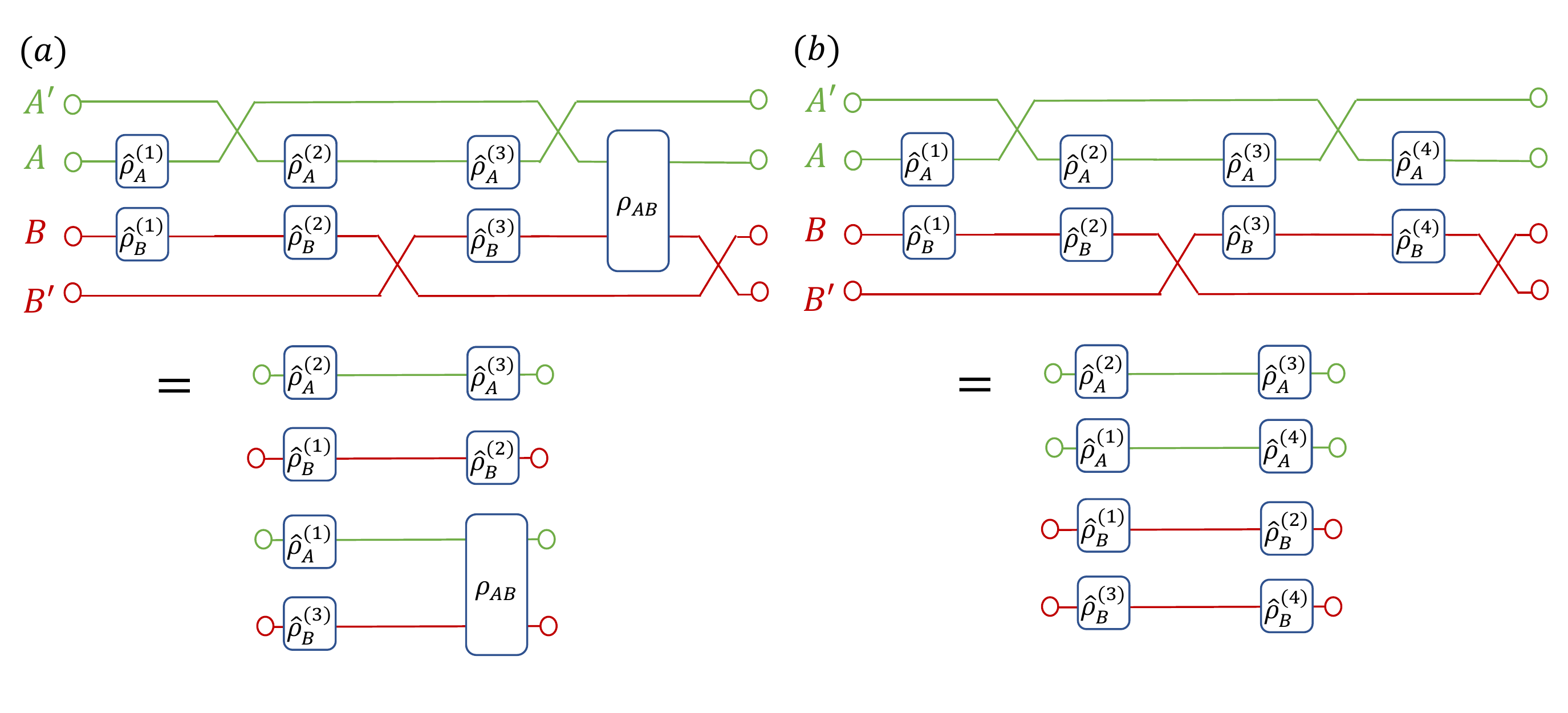}
    \end{minipage}
    \caption{ Tensor diagrams for the case $k = 3$: (a) Graphical representation for the term in Eq.~\eqref{eq:S3} (b) Representation for the term for $k = 4$ shadows as in Eq.~\eqref{eq:S4}.} \label{fig5-app4}
    \end{figure}
\end{itemize}
We now have all the pieces together to combine the results from the above case studies to get a compact expression for the variance of $\hat{X}^{(4)}_4$. More precisely, we will use the following loose bound for each $k$: $\overline{V_k} \leq 3^{kN}$.
Using Eq.~\eqref{eq:bndX4} we get:
\begin{align}
    &&
         \mathrm{Var}[\hat{X}^{(4)}_4] \leq  \sum_{k = 1}^4 \binom{4}{k} \frac{4^k}{M^{k}} \overline{V_k}
        \leq \sum_{k = 1}^4\binom{4}{k} 4^k \frac{3^{kN}}{M^k} = \bigg( 1+ 4\frac{3^N}{M} \bigg)^4 - 1. \label{eq:bndx4var}
\end{align}
The Chebyshev's inequality in Eq.~\eqref{Cby} helps us provide a sample complexity for this estimator.
\begin{proposition} \label{prop:X4-scaling}
Let $\rho_{AB}$ be a bipartite quantum state on $N=N_A+N_B$ qubits and suppose that we wish to estimate the non-linear function \mbox{$X_4 = \mathrm{Tr}(\mathcal{S} \rho_{AB}^{\otimes 4})$}, with $\mathcal{S} = \mathbb{S}_{14}^{(A)} \otimes \mathbb{S}_{23}^{(A)} \otimes \mathbb{S}_{12}^{(B)} \otimes \mathbb{S}_{34}^{(B)}$, using the batch shadow estimator $\tilde{X}^{(4)}_4$ constructed from Pauli shadows. Then for $\epsilon, \, \delta > 0$, a total of
\begin{equation}
     M \ge 4\frac{3^N}{(1+\epsilon^2\delta)^{\frac{1}{4}} - 1} \gtrsim 16\frac{3^N}{\epsilon^2\delta},
\end{equation}
measurements suffices to ensure \mbox{$\mathrm{Pr}[|\tilde{X}_4^{(4)} - X_4| \geq \epsilon] \leq \delta$}. 
\label{prop3}
\end{proposition}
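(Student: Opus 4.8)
The plan is to obtain the sample complexity by feeding the variance bound of Eq.~\eqref{eq:bndx4var} into Chebyshev's inequality and then inverting the resulting relation for $M$. All of the analytic effort has already gone into establishing Eq.~\eqref{eq:bndx4var}, so what remains is essentially bookkeeping.

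First I would invoke Chebyshev's inequality in the form of Eq.~\eqref{Cby}, specialised to the estimator $\tilde{X}_4^{(4)}$, which reduces the task to controlling the variance: it suffices to guarantee $\mathrm{Var}[\tilde{X}_4^{(4)}] \leq \epsilon^2 \delta$, since this immediately forces $\mathrm{Pr}[|\tilde{X}_4^{(4)} - X_4| \geq \epsilon] \leq \delta$. Substituting the explicit upper bound $\mathrm{Var}[\tilde{X}_4^{(4)}] \leq (1 + 4\cdot 3^N/M)^4 - 1$ from Eq.~\eqref{eq:bndx4var}, it is then enough to impose the scalar inequality
\begin{equation}
    \Big(1 + 4\,\frac{3^N}{M}\Big)^4 - 1 \leq \epsilon^2 \delta .
\end{equation}

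The next step is a purely algebraic inversion. I would add $1$ to both sides, take the (monotone) fourth root, and solve the resulting affine inequality in $3^N/M$, giving $4\cdot 3^N/M \leq (1+\epsilon^2\delta)^{1/4} - 1$ and hence
\begin{equation}
    M \geq \frac{4\cdot 3^N}{(1+\epsilon^2\delta)^{1/4} - 1},
\end{equation}
which is exactly the first bound in the statement. To pass to the cleaner asymptotic form $M \gtrsim 16\cdot 3^N/(\epsilon^2\delta)$, I would use the concavity of $x \mapsto (1+x)^{1/4}$ on $[0,\infty)$: lying below its tangent at the origin, this map obeys $(1+x)^{1/4} \leq 1 + x/4$, so that $(1+\epsilon^2\delta)^{1/4} - 1 \leq \epsilon^2\delta/4$. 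Inserting this into the denominator shows that the exact threshold is bounded below by $16\cdot 3^N/(\epsilon^2\delta)$, and that the two coincide to leading order as $\epsilon^2\delta \to 0$, where the first-order Taylor expansion of $(1+x)^{1/4}$ becomes accurate.

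There is no real obstacle at this stage, because the genuine content of the proposition is packaged entirely inside Eq.~\eqref{eq:bndx4var}, whose derivation relied on the case analysis over $k = 1, \dots, 4$ and on the auxiliary lemmas of Appendix~\ref{app:paulishadows}. The only point deserving minor care is the asymptotic simplification: I must apply the concavity estimate in the direction that makes $16\cdot 3^N/(\epsilon^2\delta)$ a faithful leading-order value of the exact requirement rather than an overclaim, and keep $\epsilon,\delta \in (0,1)$ so that $\epsilon^2\delta$ is small enough for the expansion to be meaningful.
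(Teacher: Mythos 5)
Your proposal is correct and follows essentially the same route as the paper: the paper's proof likewise consists of feeding the variance bound of Eq.~\eqref{eq:bndx4var} into Chebyshev's inequality \eqref{Cby}, demanding $\mathrm{Var}[\tilde{X}^{(4)}_4]\leq \epsilon^2\delta$, and inverting $\big(1+4\cdot 3^N/M\big)^4-1\leq \epsilon^2\delta$ for $M$. Your handling of the asymptotic form is also in the right direction, since $(1+x)^{1/4}\leq 1+x/4$ shows that $16\cdot 3^N/(\epsilon^2\delta)$ is a leading-order lower approximation of the exact threshold (matching the paper's ``$\gtrsim$'') rather than itself a sufficient sample size.
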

This measurement cost scales (at worst) as $3^N$ in system size $N$ and provides
a scaling of $M \propto 3^N/\sqrt{\epsilon}$ for any given value of $\epsilon$. 
Though the above measurement bound is loose, this scaling still offers  an exponential improvement over the best known scaling $\mathcal{O}(4^N/\sqrt{\epsilon})$ for the U-statistics estimate of $X_4$ in the case of Pauli shadow tomography in~\cite{liu2022detecting}. 
These improvements on the complexity bounds were achieved by exploiting the rich structure of Pauli basis measurements to produce powerful auxiliary statements, most notably Lemma.~\ref{lemm1} and Lemma.~\ref{lemm3}.
At the present stage, these auxiliary results are only valid for Pauli shadows and do not yet cover Haar shadows.
We leave an extension of these arguments, and by extension Proposition.~\ref{prop:X4-scaling}, as an interesting topic for future work.

It is interesting to point out that the measurement complexity bound $X_4$ is always comparable to the measurement complexity bound for $X_2$ (purity). Moving from a a second-order function to a fourth order function does not seem to incur a large penalty in measurement complexity.

We equally note that, in the limit of $M \to \infty$, the dominant contribution to the variance is given by the linear term $(k = 1)$ which scales $\propto 2^N/M$ as given by Eq.~\eqref{eq:bndx4k1}. Then, in this limit, it holds that the measurement bound scales as $2^N/\epsilon^2 \delta$.

\subsection{Numerical investigations}\label{app:numericalsims}

In this section, we would like to consecrate ourselves to support our analytical finding with numerical simulation of the protocol. We mainly would like to study error scalings and the performance of the batch shadow estimator $\tilde{X}^{n'}_4$ by using random Pauli and Haar random shadows in the regime where $M \gg n'$ and compare it to the standard U-statistics estimator $\hat{X}_n$. We consider a $4$-qubit GHZ state and  numerically simulate the protocol by applying $M$ Haar random (CUE) unitaries $u$ followed by fixed basis measurements to construct Haar random shadows (fixing $N_M = 1$). We equally construct numerically $M$ Pauli shadows by choosing $N$ random Pauli basis for each shadow. We calculate the average statistical error $\mathcal{E} = \overline{|\tilde{X}^{n'}_4 - X_4|}/X_4$ for different values of $n'$ and $M$ by simulating the randomized measurement protocol 200 times. 
This is plotted in Fig.~\ref{fig:scalingcliffandhaar} for Pauli and Haar shadows respectively. 
We make two important observations:
\begin{itemize}
    \item The error scaling behaviours of Pauli shadows that involves sampling from a fixed set of three measurement settings, is not very different compared to that of the Haar shadows, which uses infinitely many measurement settings.  
    \item The batch shadow estimator $\tilde{X}_4^{(n')}$ with $n' \sim 10$ has very close performance as that of the U-statistics estimator. This in general translates into huge runtime gain in terms of data treatment ($\mathcal{O}(10^4)$ compared to $\mathcal{O}(M^4)$) and allows us to process the quantities of interest for a larger set of measurement data.
    We clearly observed a limitation in post-processing the U-statistics estimator ($n'= M)$ for a modest system size of $N = 4$ qubits.
    This constraint starts to be extremely prominent when the system size $N$ increases. This is due to the fact that $M$ scales exponentially with $N$ as shown in the previous section.
\end{itemize}

\begin{figure}[h]
\begin{minipage}[b]{0.37\linewidth}
\centering
\includegraphics[width=\textwidth]{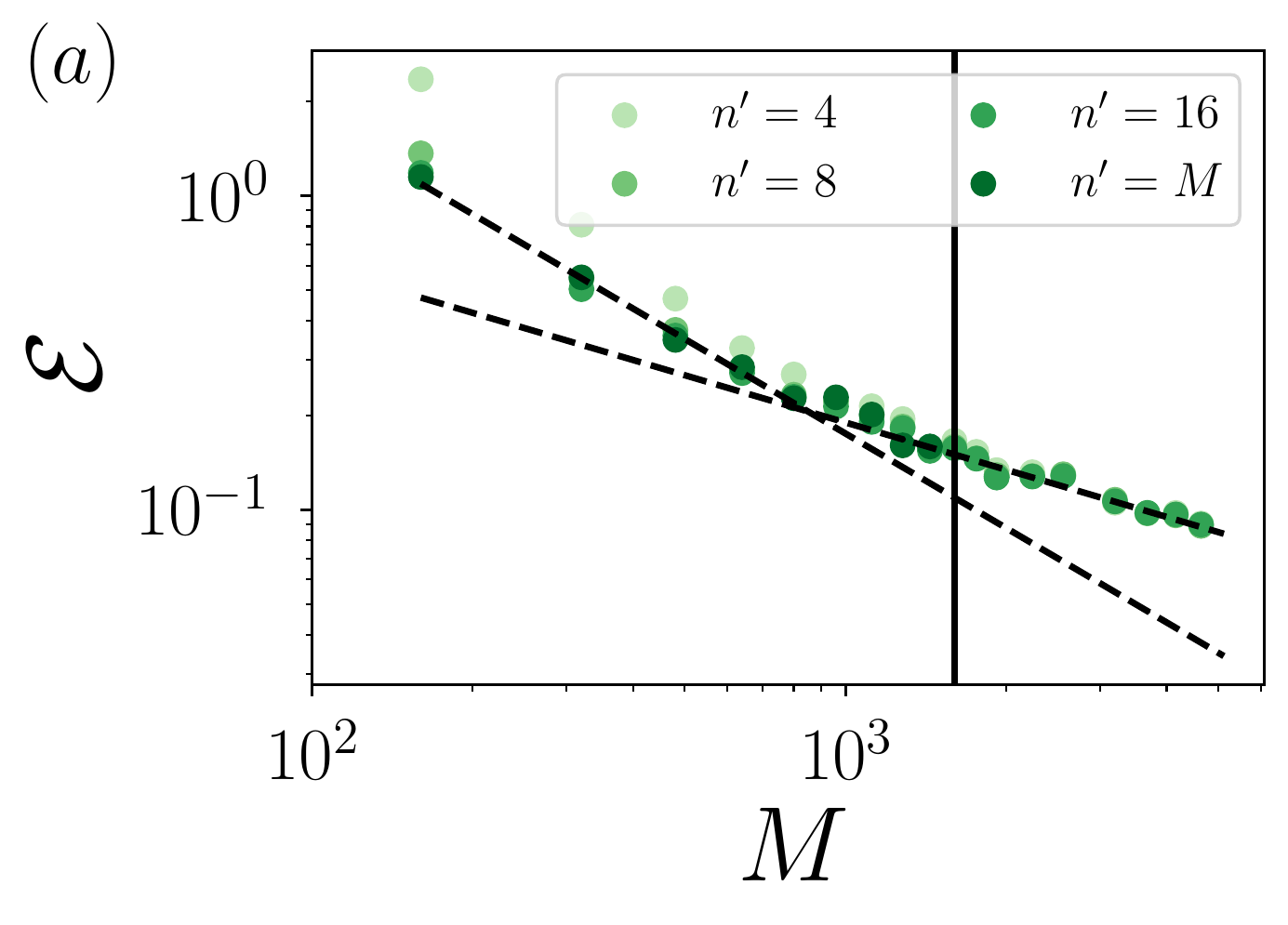}
\end{minipage}
\hskip +5ex
\begin{minipage}[b]{0.37\linewidth}
\centering
\includegraphics[width=\textwidth]{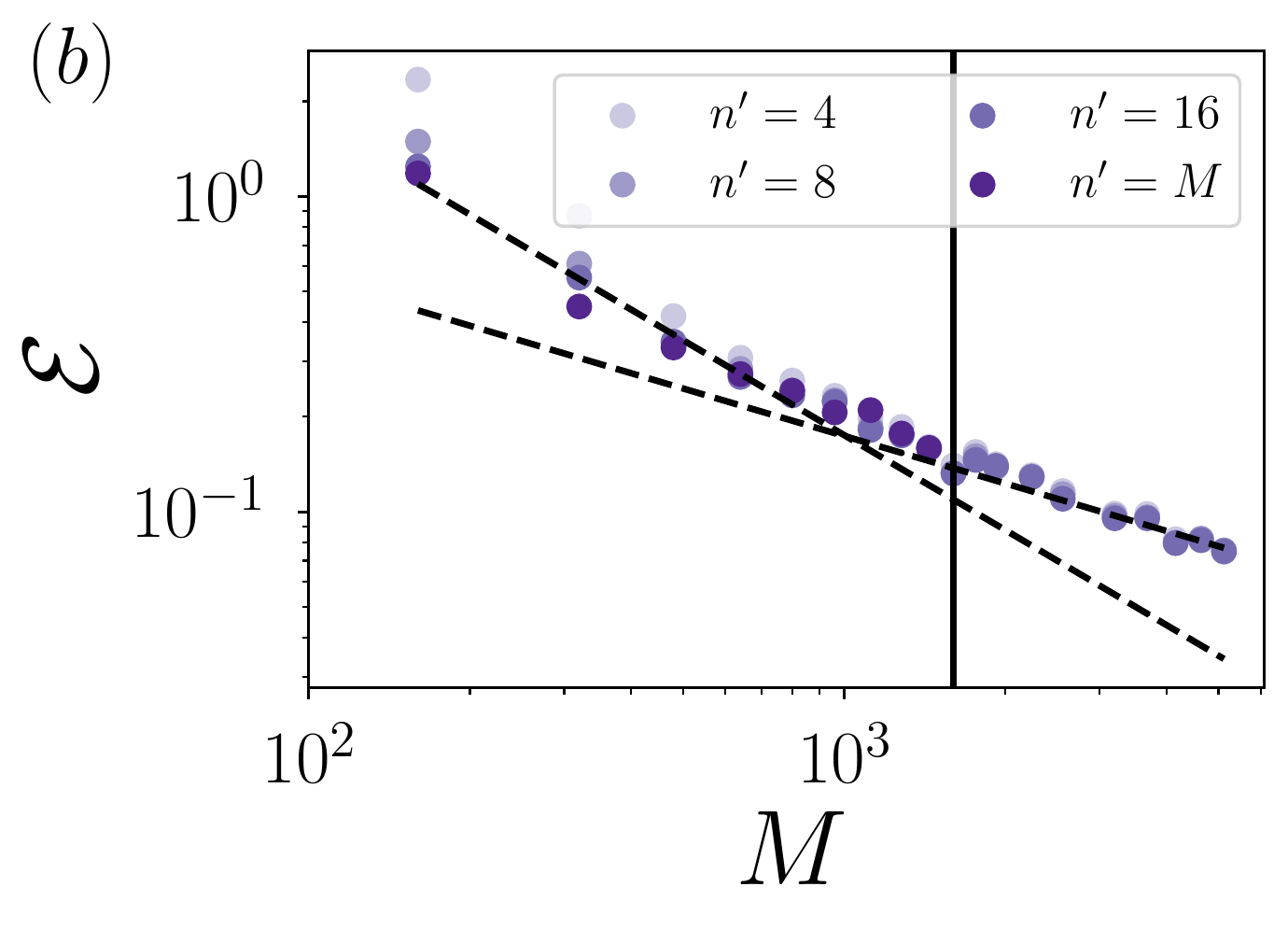}
\end{minipage}
\caption{ Error scaling as a function of $M$:  Panels (a) for Pauli shadows and (b) for haar shadows show the scaling of the average statistical error $\mathcal{E}$ as a function of the number of measurements $M$ for the functional $\tilde X^{(n')}_4$ calculated on a 4-qubit GHZ state for different values of $n'$. The black line marks the value of $M$ until which we could simulate $\tilde X^{(M)}_4$. The dashed black lines highlight the different error scalings $\propto 1/M$ and $1/\sqrt{M}$.  \label{fig:scalingcliffandhaar}}
\end{figure}

\section{Experimental platform and the theoretical modelling}\label{app:detailsexperiment}
The experimental platform in~\cite{Brydges2019probing} is realized with trapped $^{40}$Ca$^{+}$ atoms, each one encoding a single qubit. Coupling all ions off-resonantly with a laser beam subjects the ions to realize long-range Ising model in presence of a transverse field, whose effective Hamiltonian writes:
\begin{equation}
    \label{eq:Ising_ham}
    H=\hbar \sum_{i<j}J_{ij}\sigma^{x}_i\sigma^{x}_j + \hbar B\sum_i\sigma_i,
\end{equation}
with $i,j=1,\dots,N$ and $N$ is the total system size.
To model the experiment using numerical simulations, we approximate the interaction matrix $J_{ij}$ as a power-law $J_{ij}=J_{0}/|i-j|^{\alpha}$, where the values of $J_{0}$ and $\alpha$ depend on the specifics of each experimental realization and will be discussed later. The effective magnetic field $B$ is considered much larger than the interaction term ($B\simeq 22 J_0$) such that terms that would break the conservation of the total magnetization, i.e. $\sigma^{+}_i\sigma^{+}_j + h.c.$ are energetically suppressed.
The effects of decoherence on the system are taken into account considering the time evolution subject to local spin-flips and spin excitation loss. The full system  dynamics is described according to a Gorini-Kossakowski-Sudarshan-Lindblad (GKSL) master equation whose $2N$ local jump operators are written as $C_i=\sqrt{\gamma_x}\sigma^x_i$ (spin flip), $C_{i+L}=\sqrt{\gamma_-}\sigma^-_i$ (excitation loss), $i=1,\dots,L$, with rates $\gamma_x$, $\gamma_m$. 
Furthermore, the experimentally prepared state is not pure. As such, it can be written as the following mixed product state $\rho_0 = \bigotimes_i \left( 	p_i \ket{\uparrow}\bra{\uparrow} + (1-p_i) \ket{\downarrow}\bra{\downarrow} \right)$
with $p_i\approx 0.004$ for $i$ even and $p_i\approx 0.995$ for $i$ odd.\\
In the experiment, local depolarizing noise is acting during the application of the local random unitary. We model it as  
\begin{equation}
\rho(\bar{t}) \rightarrow   (1-{p_{DP}N}) \rho(\bar{t})+ {p_{DP}} \sum_{i} \mathrm{Tr}_i[\rho(\bar{t})]\otimes \frac{ \mathbb{I}_i}{2}
\end{equation}
with $p_{DP}\approx0.02$ and $\bar{t}$ denoting the time at which the measurement is performed.\\

In the case of the 20-ions experiment, the numerical simulations are done using tensor network algorithms. For the unitary part of the dynamics we approximate the interaction matrix $J_{ij}$ as a sum of 3 exponentially decaying terms which can efficiently be represented as MPOs.
To treat the decoherence we use quantum trajectories \cite{Daley2014}, applying the quantum jumps $C_i$, to the state approximated as an MPS with bond dimension $128$. The latter is evolved according to the Time-Dependent Variational Principle (TDVP) \cite{Haegeman2011}. We average our results on 1500 trajectories in total.

\section{Batch shadows to extract R\'enyi 2-OE and its symmetry resolution}\label{app:sroe_details}
We used the batch shadow estimator to access the R\'enyi 2-OE and its symmetry resolution from experimental data. The estimator of R\'enyi 2-OE $\tilde{S}^{(2)}$ constructed using $n'$ batches explicitly writes following Eq.~\eqref{eq:renyi2OE} of the main text, as
\begin{equation}
    \tilde{S}^{(2)} = -\log \frac{\tilde{X}_4^{(n')}}{\big(\tilde{X}_2^{(n')} \big)^2} =  -\log \frac{\frac{1}{4!} \binom{n'}{4}^{-1} \sum_{b_1\neq \dots \neq b_4}  \mathrm{Tr} \Big [ \mathcal{S} \bigotimes_{i = 1}^4 \tilde \rho^{(b_i)}    \Big]}{\Big(\frac{1}{2!} \binom{n'}{2}^{-1} \sum_{b_1 \neq b_2}  \mathrm{Tr} \Big [ \mathbb{S}_{1,2}^{(AB)} \bigotimes_{i = 1}^2 \tilde \rho^{(b_i)}    \Big]\Big)^2}
\end{equation}
To estimate the R\'enyi 2-OE from the experimental data as shown in the main text, we used the simple estimator with $n' = 4$. 
Alternately, the symmetry resolution for the R\'enyi 2-OE can be expressed as
\begin{equation} \label{eq:Sq2}
    S^{(2)}_q =  -\log \frac{\mathrm{Tr} \Big ( \big[ \Pi_{q} \mathrm{Tr}_{B} (\ket{\rho_{AB}} \bra{\rho_{AB}}) \Pi_q \big]^2 \Big )}{p(q)^2 \, \mathrm{Tr}(\rho_{AB}^2)^2} 
    = -\log \frac{\mathrm{Tr} \Big ( \big[ \Pi_{q} \mathrm{Tr}_{B} (\ket{\rho_{AB}} \bra{\rho_{AB}}) \Pi_q \big]^2 \Big )}{\mathrm{Tr}\big(\Pi_{q} \mathrm{Tr}_B(\ket{\rho_{AB}} \bra{\rho_{AB}})\big)^2},
\end{equation}
where $\Pi_q$ is the projector onto the eigenspace of the symmetry sector $q$ for system $A$ and \mbox{$p(q) = \mathrm{Tr}\big(\Pi_{q} \mathrm{Tr}_B(\ket{\rho_{AB}} \bra{\rho_{AB}})\big) / \mathrm{Tr}(\rho_{AB}^2)$} are the probabilities of being in the charge sector $q$  expressed in terms of a fraction of 2 second-order functions.
As $\mathbb{E}[\tilde \rho^{(b_i)}] = \rho_{AB}$ for all batch shadows, we can obtain a batch estimator of the symmetry-resolved R\'enyi 2-OE by replacing each vectorized density matrix by a distinct batch shadow. Firstly, we can express the estimator of the populations $\tilde{p}(q)$ as
\begin{equation}
    \tilde{p}(q) = \frac{\frac{1}{2!} \binom{n'}{2}^{-1} \sum_{b_1 \neq b_2} \mathrm{Tr}\big(\Pi_{q} \mathrm{Tr}_B(\ket{\tilde \rho^{(b_1)}} \bra{\tilde \rho^{(b_2)}}) \big)}{\tilde{X}_2^{(n')}}
\end{equation}
We can now explicitly write the estimator $\tilde{S}^{(2)}_q$ of SR R\'enyi 2-OE as
\begin{equation}
    \tilde{S}^{(2)}_q = -\log \frac{\frac{1}{4!} \binom{n'}{4}^{-1} \sum_{b_1\neq \dots \neq b_4}\mathrm{Tr} \Big ( \Pi_{q} \mathrm{Tr}_{B} \big(\ket{\tilde \rho^{(b_1)}} \bra{\tilde \rho^{(b_2)}} \big)\Pi_q \mathrm{Tr}_{B} \big(\ket{\tilde \rho^{(b_3)}} \bra{\tilde \rho^{(b_4)}} \big) \Big)}{\Big( \frac{1}{2!} \binom{n'}{2}^{-1} \sum_{b_1 \neq b_2} \mathrm{Tr}\big(\Pi_{q} \mathrm{Tr}_B(\ket{\tilde \rho^{(b_1)}}\bra{\tilde \rho^{(b_2)}})\big)\Big)^2} \label{S2q-est}
\end{equation}
The SR R\'enyi 2-OE were extracted from the experimental data by taking $n' = 16$.

\twocolumngrid
\bibliography{biblio.bib}

\end{document}